\documentclass{article} 
\usepackage{iclr2026_conference,times}


\usepackage{amsmath,amsfonts,bm}









\def\eqref#1{equation~\ref{#1}}









\def\1{\bm{1}}

\def\eps{{\epsilon}}










\DeclareMathAlphabet{\mathsfit}{\encodingdefault}{\sfdefault}{m}{sl}
\SetMathAlphabet{\mathsfit}{bold}{\encodingdefault}{\sfdefault}{bx}{n}











\newcommand{\E}{\mathbb{E}}



\DeclareMathOperator*{\argmax}{arg\,max}

\usepackage{bbm}
\usepackage{amsmath} 
\usepackage{bm}
\usepackage{enumitem}
\usepackage{dsfont}
\usepackage[ruled,vlined]{algorithm2e}
\usepackage{amssymb}

\newcommand{ \ucb }{\mathrm{UCB}}
\newcommand{\ba}{\bm{a}}
\newcommand{\Oh}{\mathcal{O}}
\newcommand{\Alg}{\mathrm{Alg}}

\newcommand{\Reg}{\mathrm{Reg}}
\newcommand{\W}{ \mathcal{ W } }

\newcommand{\polyn}{ \mathrm{poly} }
\newcommand{\Pb}{\mathcal{P}}

\newcommand{\La}{\mathcal{L}}

\newcommand{\bD}{ \boldsymbol{\mathcal{D}} }

\newcommand{ \X }{ \mathcal{X} }
\newcommand{ \co }{ \mathrm{co} }
\newcommand{ \fr }{ \mathfrak{R} }

\newcommand{\btheta}{\boldsymbol{\theta}}

\newcommand{\inner}[2]{\langle #1, #2 \rangle}

\newcommand{\reals}{\mathbb{R}}

\newcommand{\ind}[1]{\mathds{1}\left[#1\right]}
\newcommand{\A}{\mathcal{A}}
\newcommand{\C}{\mathcal{C}}
\newcommand{\br}{\mathrm{br}}

\newcommand{\D}{\mathcal{D}}

\newcommand{\Otilde}{\widetilde{\mathcal{O}}}
\newcommand{\bbr}{ {\bm \br} }

\newcommand{\Ex}{\mathbb{E}}

\usepackage{amsthm}

\newtheorem{claim}{Claim}[section]
\newtheorem{definition}{Definition}[section]
\newtheorem{theorem}{Theorem}[section]
\newtheorem{corollary}{Corollary}[section]

\newtheorem{lemma}{Lemma}[section]

\newtheorem{regularity assumption}{Regularity assumption}

\theoremstyle{definition}

\usepackage[utf8]{inputenc} 
\usepackage[T1]{fontenc}    
\usepackage{hyperref}       
\usepackage{url}            
\usepackage{booktabs}       
\usepackage{amsfonts}       
\usepackage{nicefrac}       
\usepackage{microtype}      
\usepackage{xcolor}         
\usepackage{multirow}
\usepackage{float}
\usepackage{graphicx}
\usepackage{enumitem}
\usepackage[frametitlerule=true]{mdframed}

\newcommand{\squishlist}{
  \begin{itemize}[noitemsep, topsep=0pt, parsep=3pt, partopsep=0pt, leftmargin=1.5em]
}
\newcommand{\squishend}{\end{itemize}}

\newcommand{\squishenum}{
  \begin{enumerate}[noitemsep, topsep=0pt, parsep=3pt, partopsep=0pt, leftmargin=2em]
}
\newcommand{\squishenumend}{\end{enumerate}}

\title{Learning to Play Multi-Follower Bayesian Stackelberg Games}


\author{Gerson Personnat, Tao Lin, Safwan Hossain, David C.~Parkes\thanks{correspondence to \texttt{lintao@cuhk.edu.cn} or \texttt{shossain@g.harvard.edu}. } \\
John A. Paulson School of Engineering and Applied Sciences \\
Harvard University
}

%

\iclrfinalcopy 
\begin{document}

\maketitle

\begin{abstract}
In a multi-follower Bayesian Stackelberg game, a leader plays a mixed strategy over $L$ actions to which $n\ge 1$ followers, each having one of $K$ possible private types, best respond.
The leader's optimal strategy depends on the distribution of the followers' private types.
We study an online learning version of this problem: a leader interacts for $T$ rounds with $n$ followers with types sampled from an unknown distribution every round. The leader's goal is to minimize regret, defined as the difference between the cumulative utility of the optimal strategy and that of the actually chosen strategies. We design learning algorithms for the leader under different feedback settings. Under type feedback, where the leader observes the followers' types after each round, we design algorithms that achieve $\mathcal O\big(\sqrt{\min\{L\log(nKA T), ~ nK \} \cdot T} \big)$ regret for independent type distributions and $\mathcal O\big(\sqrt{\min\{L\log(nKA T), ~ K^n \} \cdot T} \big)$ regret for general type distributions.
Interestingly, those bounds do not grow with $n$ at a polynomial rate. 
Under action feedback, where the leader only observes the followers' actions, we design algorithms with $\mathcal O( \min\{\sqrt{ n^L K^L A^{2L} L T \log T}, ~ K^n\sqrt{ T } \log T \} )$ regret.
We also provide a lower bound of $\Omega(\sqrt{\min\{L, ~ nK\}T})$, almost matching the type-feedback upper bounds.
\end{abstract}

\section{Introduction}
Stackelberg games are a fundamental model of strategic interaction in multi-agent systems. Unlike normal-form games where all agents simultaneously play their strategy, Stackelberg games model a \emph{leader} committing to their strategy; the remaining \emph{follower(s)} take their actions after observing the leader's commitment \citep{conitzer_computing_2006, von2010market}. Such asymmetric interactions are ubiquitous in a wide range of setting, from a firm entering a market dominated by an established competitor~\citep{von2010market}, to an online platform releasing features that influence consumers on that platform~\citep{zhao2023online, cao2024pricing}, to security games~\citep{balcan_commitment_2015, sinha2018stackelberg} to strategic machine learning~\citep{hardt2016strategic, hossain2024strategic}. They also form the foundation of seminal models in computational economics like Bayesian Persuasion~\citep{kamenica2011bayesian} or contract design~\citep{dutting2024algorithmic} that capture more structured settings with asymmetries relating to information or payouts respectively. 

In these settings and beyond, there is one key question: what is the optimal strategy for the leader to commit to? Answering this question requires knowing how the follower(s) will react to the leader's strategy, which typically boils down to knowing the followers' utilities.
The Bayesian approach attempts to relax this complete information assumption.
Pioneering works like \citet{conitzer_computing_2006} assume that followers' utilities are parametrized by hidden types from a known distribution.
Here, the leader aims to compute the Bayesian Stackelberg Equilibrium: the strategy maximizing the leader's expected utility with the followers' types drawn from the known distribution.

In many of the settings mentioned, even the Bayesian perspective may be too strong and unrealistic: the leader (e.g., online platform, dominant firm) 
may only know the structure of the followers' utilities but not
the distribution of their types \citep{cole_sample_2014}. 
While not much can be achieved in a single round of such a game, the leader can often interact with the followers over multiple rounds and learn about them
over time. The leader must, however, balance learning with playing the optimal strategy given current information -- the well-known exploration-exploitation trade-off in the 
online learning literature.



\textbf{Our Contributions: } 
This paper comprehensively studies the learning and computational 
problem for an online Bayesian Stackelberg game (BSG). Specifically, we consider the interaction over $T$ rounds between a leader and $n$ followers, each realizing one of $K$ possible private types at each round.
To our knowledge, this is the first work on online learning in BSGs with multiple followers. 
We study two 
feedback models: observing realized types of the followers, or observing their best-responding actions, after each round. Our core objective is exploring how these feedback models affect the learnability 
of the optimal strategy, which is challenging for several reasons.
First, with multiple followers, the unknown joint type space is exponentially large. Further, followers' taking best-responding actions means that the leader's utility function is discontinuous and non-convex. 
Lastly, even the offline single-follower version of this problem has known computational challenges \citep{conitzer_computing_2006}.
A key technical tool used to unravel this is a geometric characterization of the leader's strategy space in terms of best-response regions (presented in Section \ref{section:br_region}).
Section \ref{section:type_feedback} uses this and an observation about learning type distributions vis-a-vis learning utility to provide algorithms for both general type distributions and independent ones.
A matching lower bound is also provided. 
Section \ref{section:action_feedback} then studies algorithms for the action feedback case, where we leverage our geometric insights along with techniques from linear bandits.
Table~\ref{table:results} summarizes our results. Throughout, we comment on the 
computational complexity of our algorithms and uncover interesting trade-offs that situate our work with the broader literature on Stackelberg games.

\vspace{-1em}
\begin{table}[h]
\renewcommand{\arraystretch}{1.5}
\caption{Regret bounds for learning the optimal leader strategy in Bayesian Stackelberg games with $n$ followers under various settings. The $\Otilde(\cdot)$ notation omits logarithmic factors.}
\vspace{0.3em}
\hspace{-1em}
\label{table:results}
\begin{tabular}{l|cc|c}
\hline
& \multicolumn{2}{c|}{\textbf{Type Feedback}} & \multirow{2}{*}{\textbf{Action Feedback}} \\
\cline{2-3}
\textbf{} & \textbf{Independent types} & \textbf{General types} & \\
\hline
\textbf{Upp. Bound} & $\Otilde(\sqrt{\min\{L,\, nK \}T})$ & $\Otilde(\sqrt{\min\{L,\, K^n\}T})$ & $\Otilde( \min\{\sqrt{n^L K^L A^{2L} L},\, K^n\} \sqrt{T} )$ \\
\textbf{Low. Bound} & $\Omega(\sqrt{\min\{L,\, nK\}T})$ & $\Omega(\sqrt{\min\{L,\, nK\}T})$ & $\Omega(\sqrt{\min\{L, nK\} T })$ \\
\hline
\end{tabular}
\end{table}

\paragraph{Related Works:}
Our work contributes to the growing literature on the computational and learning aspects of Stackelberg games \citep{conitzer_computing_2006, conitzer_commitment_2011, castiglioni-online-bp, zhu2023samplecomplexityonlinecontract}.
In particular,
\citet{letchford-optimal-strategy, peng_learning_2019, bacchiocchi2024samplecomplexity} study online learning in
non-Bayesian Stackelberg games,
while \citet{bollini2026learning} consider Bayesian Stackelberg games, both with a single follower. 
They assume that the follower myopically best responds in each round, which we also assume. However, their goal is to learn the follower's unknown utility function, 
whereas we aim to learn the unknown type distribution of the followers with known utility functions. 


\citet{balcan_commitment_2015, balcan2025nearlyoptimalbanditlearningstackelberg} design online learning algorithms with $\textrm{poly}(K)\sqrt{T}$ regrets for Bayesian Stackelberg games with a single follower with unknown type distribution,
while we consider multiple followers.
\citet{bernasconi_optimal_2023} obtain $\Otilde(K^{3n/2} \sqrt{T})$ regret for multi-receiver Bayesian persuasion problem (which is similar to multi-follower Bayesian Stackelberg game) by a reduction to adversarial linear bandit problem.  Adopting \citet{bernasconi_optimal_2023}'s technique to our setting would lead to $\textrm{poly}(K^n)\sqrt{T}$ regret, which is exponential in the number of followers $n$ (see details in Section \ref{section:action_feedback}) and undesirable when followers are many.  
Using a different approach, we design an algorithm with $\Otilde(\sqrt{n^L K^L A^{2L}LT})$ regret, a better result when the number of leader's actions $L$ is small compared to the number of followers $n$.
The exponential dependency on $L$ is unavoidable from a computational perspective, as \citet{conitzer_computing_2006} show that BSGs are NP-Hard to solve with respect to $L$. 
Our algorithm combines the Upper Confidence Bound (UCB) principle and a partition of the leader's strategy space into best-response regions, which is a novel approach to our knowledge.

Online Bayesian Stackelberg game can be seen as a piecewise linear stochastic bandit problem. While linear stochastic bandit problems have been well studied \citep{auer_finite-time_2002, dani_stochastic_2008, yadkori-yasin-optimal-stochastic-linear-bandit}, piecewise linearity brings additional challenges. \citet{bacchiocchi2025regret} study a single-dimensional piecewise linear stochastic bandit problem with unknown pieces; in contrast, we have known pieces but a multi-dimensional space, so the techniques and results of that work and ours are not directly comparable. 

\section{Model}\label{sec:model}
\paragraph{Multi-Follower Bayesian Stackelberg Game:} 
We consider the interactions between a single \textit{leader} and $n\ge 1$ \textit{followers}. The leader has $L \geq 2$ actions, denoted by $\La = [L] = \{1, \ldots, L\}$,
and chooses a mixed strategy $x \in \Delta(\La)$ over them, where $\Delta(\La)$ is the space of probability distributions over the action set. We use $x(\ell)$ to denote the probability of the leader playing action $\ell \in \La$.\footnote{All of our results can be generalized to the setting where the leader's strategy space is an arbitrary compact convex set $\mathcal X \subseteq \reals^d$ and the leader and followers' utility functions $u(x, \bm a), v_i(x, a_i, \theta_i)$ are linear (or affine) functions of $x \in \mathcal X$. We focus on the probability simplex $\mathcal X = \Delta(\La)$ to simplify presentation.}
Each follower has a finite action set $\mathcal{A} = [A]$. We represent the joint action of the $n$ followers as $\bm a = (a_1, ..., a_n)$. Each follower $i$ also has a private type $\theta_i \in \Theta = [K]$, with the vector of all follower types denoted by $\btheta = (\theta_1, ..., \theta_n) \in \Theta^{n}$. We consider a Bayesian setting where this type vector is drawn from a distribution $\bD$ (i.e., $\btheta \sim \bD$), with $\D_i$ denoting the marginal distribution of $\theta_i$. The properties of this joint distribution play a key role in our results.  We will consider two scenarios:
\squishlist
 \item Independent type distributions: The followers' types are independent: $\bD = \D_1 \times \cdots \times \D_n$. 
\item General type distributions: The followers' types can be arbitrarily correlated. 
\squishend

If the leader selects action $\ell \in \La$ and the followers select joint action $\bm a \in \A^n$, the leader receives utility $u(\ell, \bm a) \in [0, 1]$ and each follower $i$ receives utility $v_i(\ell, a_i, \theta_i) \in [0, 1]$. Observe that each follower's utility depends only on their own action and type, alongside the leader's action; it does not depend on the actions of other followers.\footnote{This no externality assumption is common in modeling a large population of agents \citep{dughmi2017algorithmic, xu2020tractability, castiglioni-online-bp}. We discuss this assumption more in Section \ref{sec:discussion}.}
For a leader's mixed strategy $x \in \Delta(\La)$ and followers' actions $\bm a$, the leader's expected utility is denoted by $u(x, \bm a) = \E_{ \ell \sim x } [u(\ell, \bm a)] = \sum_{\ell \in \La} x(\ell) u(\ell, \bm a)$. Likewise, the $i$th follower's expected utility under $x$ is $v_i(x, a_i, \theta_i) = \E_{ \ell \sim x } [ v_i(\ell, a_i, \theta) ]$. We assume that the leader knows each follower's utility function but not their private types. 


An instance of a multi-follower Bayesian Stackelberg game is defined by the tuple $I = (n, L, A, K, u, v, \bD)$. In this game, the leader first commits to a mixed strategy $x$ without knowledge of the followers' types. The follower types are then jointly realized from $\bD$, and each follower selects a \emph{best-responding} action based on the leader's strategy. It is without loss of generality to consider followers choosing pure action since follower utilities are independent of one another.

\begin{definition}[Followers' Best Response]
    \label{def:best-response-function} 
    For a mixed strategy $x$ of the leader, the \textit{best response} of a follower $i$ with realized type $\theta_i$ is given by $\br_{i}(\theta_i, x) \in \arg \max_{a \in \A } v_i(x, a, \theta_i)$.\footnote{In case of ties, we assume that followers break ties in favor of the leader.}
    The vector of best responses is denoted by $\bbr(\btheta, x) = (\br_1(\theta_1, x), ..., \br_n(\theta_n, x))$. 
\end{definition}

Let $U_{\bD}(x) = \Ex_{\btheta \sim \bD} [ u(x, \bbr(\btheta, x))]$ denote the leader's expected utility when the leader commits to mixed strategy $x$, the followers have their types drawn from $\bD$ and best respond.

\begin{definition}[Leader's Optimal Strategy]
\label{def:optimal-strategy}
For a joint follower type distribution $\bD$, an \emph{optimal strategy} for the leader, also known as a \emph{Stackelberg Equilibrium}, is given by: 
\begin{align*}
    x^* ~ \in ~ \argmax_{x \in \Delta(\La)} U_{\bD}(x) ~ = ~ \argmax_{x \in \Delta(\La)}{\Ex_{\btheta \sim \bD} \big[ u(x, \bbr(\btheta, x)) \big]}.
\end{align*}   
\end{definition}

\paragraph{Online Learning Model:} 
When the leader knows the distribution $\bD$, they can compute the optimal strategy by solving the optimization problem specified in Definition~\ref{def:optimal-strategy}. Indeed, this is the premise of \citet{conitzer_computing_2006}. Our work examines an online learning model where the leader does not know the type distribution $\bD$ a priori; instead, the leader must learn the optimal strategy through feedback from repeated interactions with followers over $T$ rounds. 

We examine two feedback models. In the \textit{type feedback} setting, the leader observes the types $\btheta^{t}$ of the followers after each round $t$, whereas in the \textit{action feedback} setting, the leader only observes the actions $\bm a^{t}$ of the followers. Note that type feedback is strictly more informative than action feedback since the follower's actions can be inferred from their types by computing their best response (Definition~\ref{def:best-response-function}). We summarize the interactions at a given round $t$ as follows:

\squishenum
    \item The leader chooses a strategy $x^t \in \Delta(\La)$. 
    \item Follower types for this round are realized: $\btheta^t \sim \bD$. 
    \item Followers take their best-responding actions $\ba^t = \bbr(\btheta, x^t)$ and the leader gets utility $u(x^t, \ba^t)$.
    \item Under type feedback, the leader observes the type profile $\btheta^{t}$. Under action feedback, the leader observes only the followers' actions $\bm a^{t}$.
\squishenumend

The leader deploys a learning algorithm (based on past feedback) to select strategy $x^t$ for every round $t$. We study learning algorithms that minimize the cumulative \emph{regret} with respect to the optimal equilibrium strategy (Definition~\ref{def:optimal-strategy}). Formally defined below, minimizing this objective requires a careful balance between exploring the strategy space while not taking too many sub-optimal strategies. 

\begin{definition}
The \emph{regret} of a learning algorithm that selects strategy $x^t$ at round $t \in [T]$ is: 
\begin{equation*}
        \Reg(T) = \sum \nolimits_{t=1}^{T}\Ex_{\btheta^t \sim \bD}{\Big[ u(x^*, \bbr(\btheta^t, x^*)) - u(x^t, \bbr(\btheta^t, x^t)) \Big]} = \sum \nolimits_{t=1}^{T}\Big(U_{\bD}(x^*) - U_{\bD}(x^t) \Big). 
\end{equation*} 
\end{definition}
Note that $\Reg(T)$ is a random variable, because the selection of $x^t$ depends on the past type realizations $\btheta^{1}$, ..., $\btheta^{t - 1}$.
We aim to minimize the expected regret $\E[ \Reg(T) ]$.

Lastly, our model assumes followers behave \textit{myopically}, selecting their best actions based only on the leader's current strategy, without considering future rounds. This is consistent with the literature on learning in Stackelberg games (see related works)
and well-motivated in settings like online platforms or security games where followers maximize their immediate utility.

\section{Best Response Regions: A Geometric Perspective}\label{section:br_region}
Since the followers' best-responding actions are sensitive to the leader's strategy $x$, 
the leader's expected utility function $U_{\bD}(x)$ is discontinuous in $x$. This presents a key challenge to both learning and optimizing over the leader's strategy space. To overcome this challenge, we first show that the leader's strategy space $\Delta(\La)$ can be partitioned into a polynomial number of non-empty \emph{best-response regions} (followers have the same best-response actions within each region).
While the notion of best-response regions has been proposed by prior works~\citep{balcan_commitment_2015,peng_learning_2019, bacchiocchi2024samplecomplexity, yang2024computationalaspectsbayesianpersuasion}, those works consider single-follower cases.
With multiple followers, we will argue that the number of such regions does not increase exponentially in the number of followers $n$ (Lemma \ref{lem:br-region-number}) -- a key property to be used in later sections.  At a high level, the best-response region approach allows us to reason about the leader strategy space in a discrete sense. This is not only instructive for regret analysis (such as for Theorem \ref{theorem:full-feedback-general}) but also facilitates leveraging algorithms like UCB (Algorithm \ref{alg:action-feedback-ucb-algorithm}), which are defined for discrete settings.

\subsection{A Single Follower}
To build intuition, we first consider the leader playing against a single follower $(n=1)$. The follower has a utility function $v(\ell, a, \theta)$ and a type $\theta \in \Theta=[K]$ drawn from distribution $\D$. Next, let $w: \Theta \to \A$ be a mapping from follower type to action -- i.e. $w(\theta)$ specifies an action for type $\theta$. For such a mapping $w$, let $R(w) \subseteq \Delta(\La)$ be the set of leader strategies under which the follower's best response action $\br(\theta, x)$ is equal to $w(\theta)$ for every type $\theta \in \Theta$. Formally:
\begin{align}
    R(w) & ~ = ~ \Big\{ x \in \Delta(\La) ~ \big| ~  \br(\theta, x) = w(\theta), \,\, \forall \theta \in \Theta \Big\} \nonumber \\
    & ~ = ~ \Big\{ x \in \Delta(\La) ~ \big| ~  v(x, w(\theta), \theta) \ge v(x, a', \theta), \,\, \forall \theta \in \Theta, \forall a' \in \A \Big\} \nonumber
\end{align}

where we recall that for any action $a$, $v(x, a, \theta) = \sum_{\ell \in \La}{x(\ell)v(\ell, a, \theta)}$. The set $R(w)$ is defined as the \emph{best-response region} for mapping $w$. This region can also equivalently be defined as the intersection of several halfspaces (see Figure \ref{fig:br_region} in Appendix \ref{app:br_region} for a visual). In particular, let 
\begin{align*}
    d_{ \theta, a, a'} = \big[ v(1, a, \theta)  - v(1, a', \theta) \,\, , \,\, \dots \,\, , \,\,  v(L, a, \theta)  - v(L, a', \theta) \big]^T \in \mathbb R^L
\end{align*}
denote the ``advantage'' of follower type $\theta$ taking action $a$ over $a'$ at each of the $L$ possible leader actions. Then the halfspace $H(d_{\theta, a, a'}) = \big\{ x \in \Delta(\La) \,|\, \langle x, d_{\theta, a, a'} \rangle \ge 0 \big\}$ contains all the leader strategies under which the follower with type $\theta$ prefers action $a$ over $a'$. Thus, the best-response region is $R(w) ~ = ~ \bigcap_{\theta \in \Theta, a \in \A} H(d_{\theta, w(\theta), a})$, the intersection of $|\Theta| \cdot |\A| = KA$ halfspaces.

\subsection{Multiple Followers}
We generalize the intuitions from the single-follower case to the multi-follower case. Let $W = (w_1, \ldots, w_n)$ denote a tuple of $n$ mappings, where each $w_i : \Theta \to \A$ is the best-response mapping for follower $i$. So, $W$ is a mapping from joint type space $\Theta^n$ to joint action space $\A^n$, where $W(\btheta) = (w_1(\theta_1), \ldots, w_n(\theta_n)) \in \A^n$ denotes the joint action of all followers under joint type $\btheta = (\theta_1, \ldots, \theta_n)$. 
Alternatively, one can think of $W$ as a matrix, $ W = (w_{ik}) \in \A^{n \times K}$, where each entry $w_{ik} \in \A$ records the best-response action for follower $i$ if he has type $\theta_i = k$. 
We generalize the notion of best-response region $R(w)$ from the singe-follower case to the multi-follower case: 

\begin{definition}[Best-Response Region]
For a matrix $W \in \A^{n \times K}$, the \emph{best-response region for $W$} is the set of leader strategies under which the followers' best responses are given by $W$:
\begin{align*}
    R(W) & ~ = ~ \Big\{ x \in \Delta(\La) ~ \big| ~  \bbr(\btheta, x) = W(\btheta), \quad \forall \btheta \in \Theta^n \Big\}. 
\end{align*}
\end{definition}

As in the single-follower case, $R(W)$ can be expressed as the intersection of multiple halfspaces: $R(W) = \bigcap_{i \in [n], \theta_i \in \Theta, a_i \in \A} H(d_{\theta_i, w_i(\theta_i), a_i})$.

We make an important observation: the leader's expected utility function $U_{\bD}(x)$ is linear in $x$ within each non-empty best-response region. By definition, for all $\btheta \in \Theta^n$ and $x \in R(W)$, we have $\bbr(\btheta, x) = W(\btheta)$. So,
\begin{align*}
    U_{\bD}(x)  =  \sum \nolimits_{\btheta \in \Theta^n} \bm \D(\btheta) u(x, \bm{\br}(\btheta, x))  = \sum \nolimits_{\btheta \in \Theta^n} \bD(\btheta) \sum_{\ell \in \La} x(\ell)  u(\ell, W(\btheta)) = \sum \nolimits_{\btheta \in \Theta^n} \bD(\btheta) \langle x, z_{W, \btheta} \rangle. 
\end{align*}
where $z_{W, \btheta}$ is the $L$-dimensional vector $z_{W, \btheta } = (u(1, W(\btheta)), ..., u(L, W(\btheta)))$. 
So, we conclude that the leader's expected utility is linear within each region $R(W)$: 

\begin{lemma}\label{lem:leader-function-linear}
For each $W$, the leader's expected utility function $U_{\bD}(x)$ is linear in $x \in R(W)$. 
\end{lemma}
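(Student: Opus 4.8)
The plan is to observe that on the region $R(W)$ the followers' joint best response is frozen to the type-indexed actions prescribed by $W$, so the only remaining dependence of the integrand on $x$ is through the leader's own mixing probabilities, which enter linearly.

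Concretely, fix $W$ and take any $x \in R(W)$. By the definition of the best-response region, $\bbr(\btheta, x) = W(\btheta)$ for every $\btheta \in \Theta^n$, and, crucially, the right-hand side $W(\btheta)$ does not depend on $x$. Hence for each fixed type profile $\btheta$,
\[
u(x, \bbr(\btheta, x)) = u(x, W(\btheta)) = \sum_{\ell \in \La} x(\ell)\, u(\ell, W(\btheta)) = \langle x, z_{W,\btheta} \rangle,
\]
which is linear in $x$, with constant coefficient vector $z_{W,\btheta}$. It then remains only to average over $\btheta \sim \bD$: since
\[
U_{\bD}(x) = \sum_{\btheta \in \Theta^n} \bD(\btheta)\, u(x, \bbr(\btheta, x)) = \sum_{\btheta \in \Theta^n} \bD(\btheta)\, \langle x, z_{W,\btheta}\rangle = \Big\langle x, \ \sum_{\btheta \in \Theta^n} \bD(\btheta)\, z_{W,\btheta} \Big\rangle,
\]
and a nonnegative combination of linear functions is again linear, $U_{\bD}$ restricted to $R(W)$ is linear in $x$, with coefficient vector $\sum_{\btheta} \bD(\btheta)\, z_{W,\btheta}$. (Here "linear on $R(W)$" is well posed because $R(W)$, as an intersection of halfspaces with $\Delta(\La)$, is convex; the displayed formula in fact makes sense on all of the affine hull of $\Delta(\La)$.)

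The only point requiring care is the bookkeeping that separates the two appearances of $x$ in $u(x, \bbr(\btheta, x))$: the inner one is pinned down by membership in $R(W)$ and is therefore constant there, while the outer one is the genuine variable entering linearly. There is no substantive obstacle — all the content sits in having set up the regions $R(W)$ correctly, which was done above; the lemma is essentially a restatement of that construction together with the observation that $\Ex_{\btheta\sim\bD}$ is a nonnegative-weighted sum.
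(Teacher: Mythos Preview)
Your proof is correct and follows essentially the same approach as the paper: both observe that on $R(W)$ the best-response function is frozen to $W(\btheta)$, rewrite $u(x,\bbr(\btheta,x))=\langle x,z_{W,\btheta}\rangle$, and then take the $\bD$-weighted sum to exhibit $U_{\bD}(x)$ as an inner product linear in $x$. The only difference is cosmetic---you spell out the coefficient vector $\sum_{\btheta}\bD(\btheta)z_{W,\btheta}$ and remark on convexity of $R(W)$, which the paper leaves implicit.
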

Although $U_{\bD}(x)$ is linear \emph{within} each best-response region, it could be non-linear and even discontinuous \emph{across} different best-response regions.

\subsection{Enumerating Best-Response Regions and Computing the Offline Optimal}\label{subsec:br_enumeration}
Let $\W = \{ W \in \A^{n\times K} ~ | ~ R(W)\ne \emptyset \}$ denote the set of mappings $W$ for which the corresponding best-response region $R(W)$ is non-empty. Although the total number of $W \in \A^{n\times K}$ is $A^{n\times K}$, the number of \emph{non-empty} best-response regions is significantly smaller, especially when $L$ (number of actions of the leader) is treated as a constant. The exact characterization is given below.
The proof (in Appendix \ref{app:br_region}) uses a result in computational geometry regarding the number of
nonempty regions obtained by dividing $\reals^L$ using $\Oh(nKA^2)$ hyperplanes.

\begin{lemma}\label{lem:br-region-number}
The number of non-empty best-response regions, $|\W|$, is $\Oh(n^L K^L A^{2L})$. 
\end{lemma}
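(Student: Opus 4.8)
The plan is to realize the family of non-empty best-response regions as (a subset of) the cells of a hyperplane arrangement that uses only $\Oh(nKA^2)$ hyperplanes, and then apply the classical bound on the number of cells of such an arrangement. First I would set up the hyperplanes. Generalizing the advantage vectors of Section~\ref{section:br_region} to the multi-follower setting, for each follower $i \in [n]$, type $\theta_i = k$, and pair of distinct actions $a, a' \in \A$, let $d^{(i)}_{k,a,a'} \in \R^L$ have $\ell$-th coordinate $v_i(\ell, a, k) - v_i(\ell, a', k)$, and let $h^{(i)}_{k,a,a'} = \{ x : \langle x, d^{(i)}_{k,a,a'} \rangle = 0 \}$ be the corresponding hyperplane (this is exactly the $d$-vector used in the displayed expression $R(W) = \bigcap_{i,\theta_i,a} H(d_{\theta_i, w_i(\theta_i), a})$, now tagged with the follower index). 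The collection $\mathcal{H}$ of all such hyperplanes has size at most $n \cdot K \cdot \binom{A}{2} = \Oh(nKA^2)$; the point I want to stress is that, thanks to the no-externalities assumption, follower $i$'s preference boundaries involve only $v_i$ and $i$'s own type, so $|\mathcal{H}|$ grows only linearly in $n$. From the halfspace description, every region $R(W)$ is an intersection of closed halfspaces whose bounding hyperplanes all lie in $\mathcal{H}$.

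Next I would argue that the partition of $\Delta(\La)$ into best-response regions refines into the cells of this arrangement. On the relative interior of any full-dimensional cell of $\mathcal{H}$ (intersected with $\Delta(\La)$) no hyperplane of $\mathcal{H}$ is tight, so every follower strictly prefers a single action for each of its types; hence $\bbr(\btheta, x)$ — and therefore the unique $W$ with $x \in R(W)$ — is constant throughout the cell. So distinct $W \in \W$ with full-dimensional $R(W)$ inject into the full-dimensional cells of $\mathcal{H}$, and the remaining $W$ have $R(W)$ lying in a union of hyperplanes of $\mathcal{H}$; using the fixed tie-breaking convention, each such $W$ is charged to a distinct lower-dimensional face of the arrangement. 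Either way, $|\W|$ is bounded by the total number of faces of $\mathcal{H}$.

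Finally I would invoke the computational-geometry fact that an arrangement of $m$ hyperplanes in $\R^L$ has at most $\sum_{j=0}^{L} \binom{m}{j} = \Oh(m^L)$ full-dimensional cells, and $\Oh(m^L)$ faces of all dimensions combined. Taking $m = |\mathcal{H}| = \Oh(nKA^2)$ gives $|\W| = \Oh\big( (nKA^2)^L \big) = \Oh(n^L K^L A^{2L})$, as claimed. (Restricting to the affine hull of the simplex, of dimension $L - 1$, would even yield $\Oh((nKA^2)^{L-1})$, but the stated bound follows already from the coarser count.) The genuinely substantive part is the reduction in the first two paragraphs: recognizing that, although there are $A^{nK}$ candidate matrices $W$ and $K^n$ type profiles, the best-response partition is cut out by only $\Oh(nKA^2)$ hyperplanes — one batch per follower, by no externalities — and that every non-empty $R(W)$ can be injectively charged to a face of that arrangement in spite of ties; once this is in place, the final counting step is routine.
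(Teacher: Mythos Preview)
Your proposal is correct and follows essentially the same approach as the paper: count the $\Oh(nKA^2)$ advantage hyperplanes (one batch per follower, using no externalities), then invoke the standard $\Oh(m^L)$ bound on the number of regions of an $m$-hyperplane arrangement in $\R^L$. Your treatment is somewhat more careful than the paper's in distinguishing full-dimensional cells from lower-dimensional faces arising from ties, but the core argument is identical.
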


For any algorithm to leverage these best response regions, it is imperative that these regions can be enumerated efficiently. The following lemma shows this is always possible. 
Intuitively, we construct a graph where the nodes represent non-empty best-response regions and an edge exists between $W, W' \in \W$ if and only if $W$ and $W'$ differ in exactly one entry. Traversing an edge, therefore, corresponds to moving to an adjacent best-response region by crossing a single hyperplane boundary. We show that this graph is always connected and can thus be efficiently traversed using breadth-first search. The exact algorithm and proof of Lemma \ref{lem:br-region-enumeration} are in Appendix \ref{app:br_region}.

\begin{lemma}\label{lem:br-region-enumeration}
The set of non-empty best-response regions $\{R(W): W \in \W\}$ can be enumerated in $\mathrm{poly}(n^L,K^L,A^L, L)$ time.
\end{lemma}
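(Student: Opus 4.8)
The plan is to build the adjacency graph hinted at in the excerpt and run a breadth-first search on it, starting from a single known non-empty region, so I first need a way to (i) obtain a starting vertex, (ii) test membership and non-emptiness of a candidate region, and (iii) argue the graph is connected so BFS reaches every vertex. For step (i), I would pick an arbitrary interior point $x_0 \in \Delta(\La)$ (e.g. uniform), compute each follower $i$'s best response $\br_i(\theta_i, x_0)$ for every type $\theta_i \in \Theta$ — breaking ties by the leader-favorable rule — and collect these into a matrix $W_0 \in \A^{n\times K}$; by construction $x_0 \in R(W_0)$, so $W_0 \in \W$. For step (ii), recall from Section~\ref{section:br_region} that $R(W) = \bigcap_{i\in[n],\,\theta_i\in\Theta,\,a_i\in\A} H(d_{\theta_i, w_i(\theta_i), a_i})$, an intersection of $nKA$ halfspaces inside the simplex $\Delta(\La)$; checking whether this polytope is non-empty is a linear feasibility problem in $L$ variables with $O(nKA + L)$ constraints (including the simplex equality and non-negativity), solvable in $\polyn(n, K, A, L)$ time by linear programming. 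Since $L$ appears polynomially here but the number of regions is $\Oh(n^L K^L A^{2L})$ by Lemma~\ref{lem:br-region-number}, the $\polyn(n^L, K^L, A^L, L)$ bound in the statement is driven by the size of the vertex set, not by any single feasibility check.

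Next I would define the graph $G$ on vertex set $\W$ with an edge between $W$ and $W'$ iff they differ in exactly one entry $w_{ik}$. BFS from $W_0$ proceeds as usual: from a discovered $W \in \W$, generate all $nK(A-1)$ one-entry modifications, test each for non-emptiness via the LP above, enqueue the feasible, not-yet-seen ones. The running time is the number of discovered vertices times the per-vertex work (generating and LP-testing neighbors), i.e. $|\W| \cdot \polyn(n,K,A,L) = \polyn(n^L, K^L, A^L, L)$, and along the way we record, for each $W$, the defining halfspace description of $R(W)$, which is exactly the enumeration asked for.

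The main obstacle — and the step I would spend the most care on — is proving that $G$ is connected, since without this BFS would miss regions. The idea is to reduce connectivity of $G$ to connectivity of a hyperplane arrangement. The relevant hyperplanes are the $O(nKA^2)$ boundaries $\{x : \langle x, d_{\theta_i, a, a'}\rangle = 0\}$ appearing in Lemma~\ref{lem:br-region-number}'s proof; each full-dimensional cell of this arrangement (intersected with $\Delta(\La)$) lies in a unique $R(W)$, and — modulo the tie-breaking convention, which I would handle by noting that the leader-favorable rule assigns the lower-dimensional tie sets consistently to an adjacent full-dimensional region — the map from cells to matrices $W$ is onto $\W$. Two full-dimensional cells sharing a facet lie on opposite sides of exactly one hyperplane $\langle x, d_{\theta_i, a, a'}\rangle = 0$; crossing that facet changes follower $i$'s best response on type-$k$ in a way that flips a single entry $w_{ik}$ (from $a'$ to $a$, say), hence corresponds to an edge of $G$. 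Since $\Delta(\La)$ is connected and an arrangement of hyperplanes inside a convex set has the property that any two full-dimensional cells are joined by a facet-to-facet path, it follows that any two elements of $\W$ are joined by a path in $G$. The subtlety to nail down is that crossing one facet changes only one entry of $W$ — a priori two followers' best responses could flip simultaneously if their hyperplanes coincide — so I would either argue this happens only on a lower-dimensional set that can be avoided by a generic perturbation of the traversal path, or, more cleanly, allow edges between $W, W'$ differing in $O(1)$ entries and absorb the constant into the $\polyn$ bound. Either way the connectivity argument, carefully reconciled with the tie-breaking rule, is the crux; the rest is bookkeeping.
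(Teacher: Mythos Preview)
Your proposal is correct and matches the paper's approach essentially step for step: the paper also builds the graph on $\W$ with edges between matrices differing in one entry, obtains $W_{\text{init}}$ by computing best responses at an arbitrary point, runs BFS testing each neighbor with an LP feasibility check, and bounds the total work by $|\W|\cdot\polyn(n,K,A,L)$. If anything, your treatment of the connectivity step is more careful than the paper's, which asserts connectivity via the shared-facet argument without explicitly addressing the coinciding-hyperplane subtlety you flag.
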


We now show that the optimal strategy within each region can be efficiently computed. Recall from Definition \ref{def:optimal-strategy} that, when given the followers' type distribution $\bD$, computing the leader's optimal strategy requires solving $\max_{x \in \Delta(\La)}{\Ex_{\btheta \sim \bD} [ u(x, \bbr(\btheta, x))]}$. Since the leader's utility is linear within a region $R(W)$, the optimal solution within $R(W)$ can be computed by the following linear program:
\begin{align} \label{eq:optimal-empirical-x-regions}
    & \max_{x \in R(W)} \sum \nolimits_{\btheta \in \Theta^n}{\bD(\btheta)u(x, \bbr(\btheta))} =  \max_{x \in R(W)} \sum \nolimits_{\btheta \in \Theta^n}{\bD(\btheta)u(x, W(\btheta))}
\end{align}
where $x\in R(W)$ is given by the following set of linear constraints: 
\begin{align}
    \begin{cases}
        \sum_{\ell \in \La} x(\ell) \big[ v_i(\ell, w_i(\theta_i), \theta_i) - v_i(\ell, a'_i, \theta_i) \big] \ge 0, ~~ \forall i\in[n], ~ \forall \theta_i \in \Theta, ~ \forall a'_i \in \A, \\
        x(\ell) \ge 0, ~ \forall \ell \in \La, ~ \text{ and }~ \sum_{\ell \in \La} x(\ell) = 1. 
    \end{cases}
    \label{eq:opt_strat_cons}
\end{align}
While there are $\Oh(nKA)$ constraints, each involving a sum over $L$ elements, the objective involves summing over $K^n$ possible type profiles. While this is exponential in $n$, any input to the complete information instance must provide the joint type distribution $\bD \in [0,1]^{K^n}$ as input. Thus, the time to compute the optimal solution within each region is polynomial in the input size.


The above results imply that, given distribution $\bD$, the optimal leader strategy in BSGs can be computed efficiently when the number $L$ of leader's actions is small.
This is because the optimal strategy within each best-response region $R(W)$ can be computed efficiently by the linear program (\ref{eq:optimal-empirical-x-regions}), the overall optimal strategy is the maximum over all non-empty best-response regions, and there are at most $\Oh(n^L K^L A^{2L})$ such regions by Lemma \ref{lem:br-region-number}.
We thus showed above that BSGs are polynomial-time solvable for a constant $L$. In comparison, Theorem 7 of \citet{conitzer_computing_2006} proves that the optimal strategy is NP-hard to compute 
with an asymptotically increasing $L$. 


\section{Type Feedback}\label{section:type_feedback}
\subsection{Learning Algorithms and Upper Bounds}
\paragraph{General Type Distributions:} We now address the core problem of learning the optimal leader strategy from online feedback. This section considers the type-feedback setting, where the leader observes each follower's realized type $\btheta^t = (\theta^t_1, \dots, \theta^t_n)$ at the end of round $t$.
We start with general distributions -- that is, the followers' types can be arbitrarily correlated. 
Observing types after each round allows us to directly estimate the unknown distribution $\bD$ and compute an optimal strategy accordingly. This is formalized in Algorithm \ref{alg:full-feedback_general}: 

\begin{algorithm}[]
\SetKwInOut{Input}{Input}
\DontPrintSemicolon
\caption{Type-Feedback Algorithm -- General Type Distributions}
\label{alg:full-feedback_general}
At round $t = 1$, pick an arbitrary strategy $x^1 \in \Delta(\La)$. \;
\For {round $t \ge 2$}{
    Choose $x^t \in \argmax_{x \in \Delta(\La)} \sum_{s=1}^{t-1} u(x, \bbr(\btheta^s, x))$ -- the empirically optimal strategy. \;
    Observe the followers' types $\btheta^t \sim \bD$. \;
}
\end{algorithm}

At first glance, one might think that this algorithm might suffer a large regret because the distribution $\bD$, which has support size $|\Theta^n| = K^n$, is difficult to estimate. Indeed, the estimation error for such a distribution using $t$ samples 
is at least $\Omega\big(\sqrt{\tfrac{K^n}{t}}\big)$ even if $\bD$ is a product distribution (namely, the types are independent) \citep{lin_how_2022}. This suggests that the empirically optimal strategy $x^t$ might be worse than the true optimal strategy $x^*$ by at least $\Omega\big(\sqrt{\tfrac{K^n}{t}}\big)$, which would cause an $\Omega(\sqrt{K^n T})$ regret in $T$ rounds in total. As we will show in Theorem \ref{theorem:full-feedback-general}, one analysis of Algorithm \ref{alg:full-feedback_general} achieves exactly this as a regret upper bound. The proof (in Appendix \ref{proof:full-feedback-general}) upper bounds the single-round regret by the total variation (TV) distance between the empirical distribution $\hat{\bD}^t$ and the true distribution $\bD$. 

While this suggests that $\Oh(\sqrt{K^nT})$ regret might be tight, this is interestingly not true when $n$ is large! That is, the intuitive lower bound that arises from the estimation error for distribution $\bD$ is not correct. Although the empirical type distribution can differ significantly from the true type distribution, the empirical \emph{utility} of any strategy $x \in \Delta(\La)$ is actually concentrated around the true expected utility of $x$ with high probability. We formalize this below: 

\begin{lemma} 
\label{lem:type-feedback-concentration}
Given $t$ samples $\btheta^1, \ldots, \btheta^t$ from distribution $\bD$, let $\hat U^t(x)  =  \frac{1}{t} \sum_{s=1}^{t} u(x, \bbr(\btheta^s, x))$ be the empirical expected utility of a strategy $x \in \Delta(\La)$ computed on the $t$ samples. Recall that $U_{\bD}(x) = \E_{\btheta \sim \bD}[ u(x, \bbr(\btheta, x)) ]$ denotes the true expected utility of $x$. With probability at least $1 - \delta$, we have: for all $x \in \Delta(\La)$, 
$ 
    \big| U_{\bD}(x) - \hat U^t(x) \big| \le \Oh\Big( \sqrt{\tfrac{L \log t}{t}} + \sqrt{\tfrac{L \log(nKA) + \log(1/\delta)}{t}} \Big). 
$ 
\end{lemma}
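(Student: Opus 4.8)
The plan is to prove a uniform concentration bound over all strategies $x \in \Delta(\La)$ by exploiting the best-response region structure established in Section~\ref{section:br_region}. The key observation is that although $\Delta(\La)$ is a continuum, the map $x \mapsto \bbr(\btheta, x)$ (and hence $x \mapsto u(x,\bbr(\btheta,x))$ as a function of the random sample $\btheta$) only ``changes shape'' across the finitely many best-response regions $R(W)$, $W \in \W$. Within a fixed region $R(W)$, the empirical utility $\hat U^t(x) = \frac1t\sum_{s} \langle x, z_{W,\btheta^s}\rangle$ and the true utility $U_{\bD}(x) = \sum_{\btheta} \bD(\btheta)\langle x, z_{W,\btheta}\rangle$ are both \emph{linear} in $x$ (Lemma~\ref{lem:leader-function-linear}), so controlling their difference on all of $R(W)$ reduces to controlling it at the vertices of $R(W)$, or more cleanly, to bounding the $\ell_\infty$-type deviation of the random vector $\hat z_W := \frac1t\sum_s z_{W,\btheta^s}$ from its mean $\E[z_{W,\btheta}]$ and using $|\langle x, \hat z_W - \E z_W\rangle| \le \|x\|_1 \|\hat z_W - \E z_W\|_\infty = \|\hat z_W - \E z_W\|_\infty$ since $x$ is a probability vector. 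Wait — this is not quite right because the samples $\btheta^s$ all contribute to $\hat U^t(x)$ regardless of which region $x$ lies in, and $u(x,\bbr(\btheta^s,x))$ for $x \in R(W)$ equals $\langle x, z_{W,\btheta^s}\rangle$ by definition of $R(W)$; so indeed on $R(W)$ we have $\hat U^t(x) = \langle x, \hat z_W\rangle$ exactly, with $\hat z_W = \frac1t \sum_{s=1}^t z_{W,\btheta^s}$, and each coordinate $(z_{W,\btheta^s})_\ell = u(\ell, W(\btheta^s)) \in [0,1]$ is an i.i.d.\ bounded random variable. So the first main step is: for a \emph{fixed} $W$, Hoeffding plus a union bound over the $L$ coordinates gives $\|\hat z_W - \E z_W\|_\infty \le \Oh(\sqrt{\log(L/\delta')/t})$ with probability $1-\delta'$, hence $\sup_{x \in R(W)} |\hat U^t(x) - U_{\bD}(x)| \le \Oh(\sqrt{\log(L/\delta')/t})$.

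The second main step is to union-bound over all nonempty regions $W \in \W$. By Lemma~\ref{lem:br-region-number}, $|\W| = \Oh(n^L K^L A^{2L})$, so taking $\delta' = \delta/|\W|$ costs $\log(|\W|/\delta) = \Oh(L\log(nKA) + \log(1/\delta))$ in the square root. Since every $x \in \Delta(\La)$ lies in some $R(W)$ with $W \in \W$, this yields, with probability $1-\delta$, the uniform bound
\[
  \sup_{x \in \Delta(\La)} \big| U_{\bD}(x) - \hat U^t(x) \big| \;\le\; \Oh\!\left( \sqrt{\tfrac{\log L + L\log(nKA) + \log(1/\delta)}{t}} \right),
\]
which absorbs into the claimed expression (the $\sqrt{L\log t / t}$ term in the statement is even more slack than needed — it presumably appears because the authors run the bound simultaneously over all $t \in [T]$ via another union bound over rounds; I would either fold an extra $\log t$ from a union over $t = 1,\dots,T$ into the first term, or simply note $\log L \le L\log(nKA)$ and present the cleaner two-term form, then match it to their stated form).

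The step I expect to be the main obstacle — or at least the one requiring care — is making the reduction from ``$\sup$ over the continuum $R(W)$'' to ``finitely many Hoeffding events'' airtight. The subtlety is that $\hat U^t$ is defined pointwise via $u(x, \bbr(\btheta^s, x))$, and one must verify that for \emph{every} $x \in R(W)$ this equals $\langle x, \hat z_W\rangle$ with the \emph{same} vector $\hat z_W$ independent of $x$; this is exactly the content of the definition of $R(W)$ (the best responses are constant $= W(\btheta)$ on the region), so it is clean, but one should state it as a lemma-step. A second minor point: regions may overlap only on measure-zero boundaries and tie-breaking is in the leader's favor, but since we only need an \emph{upper} bound on $|\hat U^t - U_{\bD}|$ and the bound holds on the closure of each $R(W)$, the boundary behavior causes no trouble — any $x$ is covered by at least one $W \in \W$. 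Finally I would remark that no covering-number or Rademacher argument is needed precisely because the piecewise-linear structure gives a finite effective hypothesis class of size $|\W|$; this is the conceptual punchline and is worth one sentence in the proof.
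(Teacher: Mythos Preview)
Your proposal is correct and in fact yields a slightly tighter bound than the paper's, but the mechanism differs. The paper's proof invokes pseudo-dimension machinery: within each region $R(W)$, the functions $\btheta \mapsto \langle x, z_{W,\btheta}\rangle$ belong to the class of linear functions on $\mathbb{R}^L$, which has pseudo-dimension $L$; a standard uniform-convergence theorem for bounded-pseudo-dimension classes (Mohri et al., Theorem~10.6) then gives the per-region bound $\sqrt{2L\log(3t)/t} + \sqrt{\log(1/\delta')/(2t)}$, after which the union over $|\W|$ regions finishes as you do. So the $\sqrt{L\log t/t}$ term in the lemma statement is \emph{intrinsic to the pseudo-dimension bound}, not (as you speculate) an artifact of a union over rounds $t\in[T]$.

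Your route is more elementary and sharper for this particular setting: because $x \in \Delta(\La)$ has $\|x\|_1 = 1$, you reduce the supremum over $x \in R(W)$ via H\"older to the $\ell_\infty$ deviation of a single $L$-dimensional sample mean, which costs only $\log L$ (absorbed into $L\log(nKA)$) rather than $L\log t$. Both arguments share the same outer union over $|\W| = \Oh((nKA)^L)$ regions from Lemma~\ref{lem:br-region-number}. The paper's approach is more portable---the pseudo-dimension argument would survive unchanged if $x$ ranged over an $\ell_2$-ball rather than the simplex, whereas your H\"older step leans on $\|x\|_1=1$---while yours is shorter, avoids the learning-theory import, and removes the $t$-dependence from the leading constant.
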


\begin{proof}[Proof sketch]
By Lemma \ref{lem:br-region-number}, the leader's strategy space $\Delta(\La)$ can be divided into $|\W| = \Oh(n^L K^L A^{2L})$ best-response regions, and the leader's utility function $U_{\bD}(x)$ is linear inside each region (Lemma \ref{lem:leader-function-linear}).
Because the \emph{pseudo-dimension} of linear functions in an $L$-dimensional space are at most $L$, we have with probability at least $1-\delta'$, the empirical utility $\hat U^t(x)$ on $t$ samples approximates the true expected utility $U_{\bD}(x)$ with accuracy $\Oh\Big( \sqrt{\tfrac{L \log t}{t}} + \sqrt{\tfrac{\log(1/\delta')}{t}} \Big)$ for every strategy $x$ inside a best-response region.  Taking a union bound over all $\Oh(n^L K^L A^{2L})$ best-response regions, i.e., letting $\delta' = \delta / \Oh(n^L K^L A^{2L})$, proves the lemma. 
See details in Appendix \ref{proof:type-feedback-concentration}. 
\end{proof}

Note that the above concentration result holds for all strategies $x \in \Delta(\La)$ simultaneously, instead of for a single fixed strategy (which easily follows from Hoeffding's inequality).
This result means that the simple Algorithm \ref{alg:full-feedback_general} can achieve a regret that is of the order $\sqrt{T}$, logarithmic in $n$, with an additional $\sqrt{L}$ factor.
This is better for large $n$ and small $L$. 
This new regret bound, along with the earlier one $\Oh(\sqrt{K^nT})$, is formalized in Theorem \ref{theorem:full-feedback-general} below, with the proof given in Appendix \ref{proof:full-feedback-general}. 

\begin{theorem}\label{theorem:full-feedback-general}
The type-feedback Algorithm \ref{alg:full-feedback_general} for general type distributions achieves expected regret $\Oh\big(\min\big\{\sqrt{LT \cdot \log(nKAT)}, \sqrt{K^nT} \big\} \big)$ and can be implemented in $\mathrm{poly}((nKA)^LLT)$ time. 
\end{theorem}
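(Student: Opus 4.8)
The plan is to establish the two parts of the minimum separately and then combine them. For the $\Oh\big(\sqrt{LT\log(nKAT)}\big)$ branch, I would run the standard ``explore-then-exploit is for free here'' argument driven by the uniform concentration of Lemma \ref{lem:type-feedback-concentration}. Specifically, set $\delta = 1/T$ (or $1/T^2$ to be safe when we sum over rounds) and let $\eps_t = \Oh\big(\sqrt{L\log t / t} + \sqrt{(L\log(nKA) + \log T)/t}\big)$ be the deviation bound at round $t$. On the good event (which holds with probability $1 - \Oh(1/T)$ simultaneously for all rounds by a union bound over $t \in [T]$), the empirically optimal choice $x^t = \argmax_x \hat U^{t-1}(x)$ satisfies $U_{\bD}(x^t) \ge \hat U^{t-1}(x^t) - \eps_{t-1} \ge \hat U^{t-1}(x^*) - \eps_{t-1} \ge U_{\bD}(x^*) - 2\eps_{t-1}$, so the per-round regret at round $t$ is at most $2\eps_{t-1}$. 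Summing, $\sum_{t=2}^{T} 2\eps_{t-1} = \Oh\big(\sqrt{L\log(nKAT)}\sum_{s=1}^{T-1} 1/\sqrt{s}\big) = \Oh\big(\sqrt{L T \log(nKAT)}\big)$; the round $t=1$ contributes at most $1$, and the low-probability event contributes at most $\Oh(1/T)\cdot T = \Oh(1)$ to the expected regret. This gives the first term.

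For the $\Oh(\sqrt{K^n T})$ branch, I would instead bound the single-round regret by the total variation distance between the empirical joint type distribution $\hat{\bD}^{t-1}$ and the true $\bD$. The key observation is that for any fixed strategy $x$, $U_{\bD}(x) - U_{\hat{\bD}^{t-1}}(x) = \sum_{\btheta}(\bD(\btheta) - \hat{\bD}^{t-1}(\btheta))\, u(x, \bbr(\btheta, x))$, and since $u \in [0,1]$, this is at most $\dTV(\bD, \hat{\bD}^{t-1})$ in absolute value, uniformly in $x$. Noting that maximizing $\sum_{s=1}^{t-1} u(x,\bbr(\btheta^s,x))$ is exactly maximizing $U_{\hat{\bD}^{t-1}}(x)$, the same three-inequality sandwich as above gives per-round regret at round $t$ bounded by $2\,\dTV(\bD, \hat{\bD}^{t-1})$. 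Taking expectations and using the standard empirical-distribution bound $\E[\dTV(\bD, \hat{\bD}^{s})] \le \Oh(\sqrt{K^n / s})$ (the support of $\bD$ has size $K^n$), summing over $s = 1, \dots, T-1$ yields $\Oh(\sqrt{K^n}\sum_s 1/\sqrt{s}) = \Oh(\sqrt{K^n T})$. Combining the two branches gives the claimed $\Oh(\min\{\cdot,\cdot\})$ regret.

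For the runtime claim, I would invoke the machinery of Section \ref{section:br_region}: at each round, Algorithm \ref{alg:full-feedback_general} solves $\argmax_{x}\sum_{s=1}^{t-1}u(x,\bbr(\btheta^s,x))$, which by Lemma \ref{lem:leader-function-linear} is linear within each best-response region. So we enumerate the $\Oh(n^L K^L A^{2L})$ non-empty regions $R(W)$ via Lemma \ref{lem:br-region-enumeration} in $\polyn(n^L, K^L, A^L, L)$ time, and within each region solve a linear program: the objective $\sum_{s=1}^{t-1} u(x, W(\btheta^s)) = \langle x, \sum_{s=1}^{t-1} z_{W,\btheta^s}\rangle$ is linear in the $L$ variables $x(\ell)$ and takes $\Oh(tL)$ time to assemble, and the region constraints \eqref{eq:opt_strat_cons} are $\Oh(nKA)$ linear inequalities each of width $L$. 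Each LP is solvable in $\polyn(L, nKA)$ time, so one round costs $\polyn((nKA)^L, L, t)$, and over $T$ rounds the total is $\polyn((nKA)^L, L, T)$, as claimed.

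I do not expect any single step to be a genuine obstacle here, since the heavy lifting has already been done: the uniform concentration of Lemma \ref{lem:type-feedback-concentration} is the crux of the first branch, the empirical-distribution TV bound is classical for the second, and the computational geometry of Section \ref{section:br_region} handles the runtime. The one point requiring mild care is the bookkeeping that converts the high-probability (rather than almost-sure) concentration guarantee into an expected-regret bound --- one must choose $\delta$ small enough (e.g. $\delta = 1/T^2$) that the union bound over all $T$ rounds still leaves a failure probability $\Oh(1/T)$, so that the trivial $\le T$ regret on the bad event contributes only $\Oh(1)$ in expectation --- and to notice that the extra $\log(1/\delta)$ this introduces is only $\Oh(\log T)$, which is already absorbed into the $\log(nKAT)$ factor.
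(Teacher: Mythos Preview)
Your proposal is correct and follows essentially the same approach as the paper's own proof: the $\Oh(\sqrt{LT\log(nKAT)})$ branch via the uniform concentration of Lemma~\ref{lem:type-feedback-concentration} with $\delta = 1/T$ and the three-term sandwich, the $\Oh(\sqrt{K^nT})$ branch via the total-variation bound on the empirical joint distribution, and the runtime via region enumeration (Lemmas~\ref{lem:br-region-number} and~\ref{lem:br-region-enumeration}) plus one LP per region. The only cosmetic difference is that the paper applies the per-round $\delta$ directly in a law-of-total-expectation computation rather than union-bounding over rounds, and it bounds $|U_{\bD}(x)-U_{\hat\bD}(x)|$ by $2\delta(\bD,\hat\bD)$ rather than your (slightly sharper but equivalent up to constants) $\dTV$; neither affects the argument.
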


Theorem \ref{theorem:full-feedback-general} also comments on the runtime of Algorithm \ref{alg:full-feedback_general}, which hinges on the computability of $x^t \in \argmax_{x \in \Delta(\La)} \sum_{s=1}^{t-1} u(x, \bbr(\btheta^s, x))$.
Using the techniques developed in Section \ref{section:br_region}, this maximization can be solved by taking the maximum over the optimal strategies from each non-empty best-response region $W\in\W$, computed using the empirical type distribution.
Using Lemmas \ref{lem:br-region-number} and \ref{lem:br-region-enumeration} and the fact that the optimal strategy within a non-empty $R(W)$ can be solved by the following linear program, we obtain a runtime that is polynomial when $L$ is constant:\footnote{Also note that Algorithm \ref{alg:full-feedback_general} does not need as input the entire utility function of the leader $u(\cdot, \cdot)$, which has an exponential size $L\cdot A^n$. The algorithm only needs the utility function for the sampled types.}
\begin{equation}\label{equation:empirical_opt_lp}
    \max_{W \in \W} \Big\{ \max_{x \in R(W)}{\sum \nolimits_{s=1}^{t}{u(x, W(\btheta^s))}} ~~ \text{subject to the constraints in (\ref{eq:opt_strat_cons})} \Big\}. 
\end{equation}


\paragraph{Independent Type Distributions:} Algorithm \ref{alg:full-feedback_general} and the corresponding regret bound in Theorem \ref{theorem:full-feedback-general} hold without any assumptions on the joint type distribution $\bD$. In many settings, however, the followers' types may be independent of one another. Intuitively, one expects learning to be easier in such settings since it suffices to learn the marginals as opposed to the richer joint distribution. This is indeed correct: in Algorithm \ref{alg:type-feedback}, we build the empirical distribution $\hat{\D}_i^t$ for each marginal from samples $\theta_i^1, \ldots, \theta_i^t$ for follower $i$ and then take the product $\hat{\bD}^t = \prod_{i=1}^{n} \hat{\D}_i^t$ to estimate $\bD = \prod_{i=1}^n \D_i$.

\begin{algorithm}\label{alg:full-feedback}
\SetKwInOut{Input}{Input}
\DontPrintSemicolon
\caption{Type-Feedback Algorithm - Independent Type Distributions \label{alg:type-feedback}}
At $t=1$, pick an arbitrary strategy $x^1 \in \Delta(\La)$. \;
\For {round $t > 1$}{
    Choose $x^t \in \argmax_{ x \in \Delta(\La) } \Ex_{ \btheta \sim \hat{\bD}^{t-1}} [ u( x, \bbr( \btheta, x ) ) ]$ \;
    Observe realized follower type $(\theta^t_1, \dots, \theta^t_n)$ \;
    \For{$i \in [n]$, $k \in \Theta$}{
             $\hat{\D}_i^{t}(k) = \frac{1}{t} \sum_{s=1}^{t} \ind{\theta^{s}_{ i} = k}$\;
        }
    $\hat{\bD}^{t}(\btheta) = \prod_{i = 1}^{n} \hat \D_{i}^{t}(\theta_i), ~ \forall \btheta \in \Theta^n$
}
\end{algorithm}

This algorithm achieves a much improved regret, $\Oh(\sqrt{nKT})$, formalized in Theorem \ref{theorem:type-feedback_independent} and empirically verified in Appendix~\ref{app:simulations}. The proof (in Appendix \ref{proof:type-feedback_independent}) is similar to the $\Oh(\sqrt{K^nT})$ regret analysis of Theorem \ref{theorem:full-feedback-general}, which upper bounds the single-round regret by the TV distance between $\hat \bD^t$ and $\bD$.
But for independent distributions, we can relate the TV distance with the sum of Hellinger distances between the marginals $\hat \D^t_i$ and $\D_i$, which is bounded by $\Oh(\sqrt{\frac{nK}{t}})$ instead of $\Oh(\sqrt{\frac{K^n}{t}})$, so the total regret is bounded by $\Oh(\sqrt{nKT})$.  
The computational complexity, though, increases to $\mathrm{poly}((nKA)^LLTK^n)$ as the empirical product distribution $\hat \bD^t = \prod_{i}^n \hat \D_i^t$ has support size $K^n$. 


\begin{theorem}\label{theorem:type-feedback_independent}
The type-feedback Algorithm \ref{alg:full-feedback} for independent type distributions achieves expected regret $\Oh\big(\sqrt{nKT}\big)$ and can be implemented in $\mathrm{poly}((nKA)^LLTK^n)$ time.
\end{theorem}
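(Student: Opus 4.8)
The plan is to reuse the template behind the $\Oh(\sqrt{K^n T})$ analysis of Theorem~\ref{theorem:full-feedback-general} — bounding the per-round regret by a distance between the empirical type distribution $\hat{\bD}^{t-1}$ and the true $\bD$ — but to exploit the product structure $\bD=\prod_i\D_i$ so that this distance is only $\Oh(\sqrt{nK/t})$ instead of $\Oh(\sqrt{K^n/t})$. First I would reduce the per-round regret to an estimation error. Since $x^t\in\argmax_{x\in\Delta(\La)}U_{\hat{\bD}^{t-1}}(x)$ while $x^*\in\argmax_{x\in\Delta(\La)}U_{\bD}(x)$, the split $U_{\bD}(x^*)-U_{\bD}(x^t)=\big[U_{\bD}(x^*)-U_{\hat{\bD}^{t-1}}(x^*)\big]+\big[U_{\hat{\bD}^{t-1}}(x^*)-U_{\hat{\bD}^{t-1}}(x^t)\big]+\big[U_{\hat{\bD}^{t-1}}(x^t)-U_{\bD}(x^t)\big]$ has a non-positive middle term, so the per-round regret is at most $2\sup_{x\in\Delta(\La)}|U_{\bD}(x)-U_{\hat{\bD}^{t-1}}(x)|$. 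Since $U_{\bD'}(x)=\sum_{\btheta\in\Theta^n}\bD'(\btheta)\,u(x,\bbr(\btheta,x))$ with each $u(x,\bbr(\btheta,x))\in[0,1]$, this supremum is at most $\sum_{\btheta}|\bD(\btheta)-\hat{\bD}^{t-1}(\btheta)|=2\,\dTV(\hat{\bD}^{t-1},\bD)$; hence the per-round regret is at most $4\,\dTV(\hat{\bD}^{t-1},\bD)$.

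Next I would pass to the Hellinger distance and tensorize. Using $\dTV\le\sqrt{2}\,\dH$ together with the multiplicativity of the Hellinger affinity over products — equivalently $1-\dH^2(\prod_i P_i,\prod_i Q_i)=\prod_i\big(1-\dH^2(P_i,Q_i)\big)$, which gives $\dH^2(\prod_i P_i,\prod_i Q_i)\le\sum_i\dH^2(P_i,Q_i)$ — and recalling $\hat{\bD}^{t-1}=\prod_i\hat{\D}_i^{t-1}$ and $\bD=\prod_i\D_i$, I obtain $\dTV(\hat{\bD}^{t-1},\bD)\le\big(2\sum_{i=1}^n\dH^2(\hat{\D}_i^{t-1},\D_i)\big)^{1/2}$. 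The quantitative heart of the argument is then the bound $\E\big[\dH^2(\hat{\D}_i^{t},\D_i)\big]=\Oh(K/t)$ for the empirical distribution of a $K$-symbol law from $t$ i.i.d.\ samples: coordinatewise one has $\big(\sqrt{\hat{\D}_i^t(k)}-\sqrt{\D_i(k)}\big)^2\le\big(\hat{\D}_i^t(k)-\D_i(k)\big)^2/\D_i(k)$ (terms with $\D_i(k)=0$ vanish), and taking expectations with $\E\big[(\hat{\D}_i^t(k)-\D_i(k))^2\big]=\D_i(k)(1-\D_i(k))/t$ and summing over $k$ yields $\E[\dH^2(\hat{\D}_i^t,\D_i)]\le(K-1)/(2t)$.

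Assembling: take expectations of the per-round bound, apply Jensen's inequality (concavity of $\sqrt{\cdot}$) to move the expectation inside the square root, and substitute $\E[\dH^2(\hat{\D}_i^{t-1},\D_i)]\le(K-1)/(2(t-1))$ to get per-round expected regret $\Oh(\sqrt{nK/t})$; summing gives $\E[\Reg(T)]\le\Oh(1)+\sum_{t=2}^{T}\Oh(\sqrt{nK/t})=\Oh(\sqrt{nKT})$ because $\sum_{t\le T}t^{-1/2}=\Oh(\sqrt{T})$. For the running time, the only per-round computation is $\max_{x\in\Delta(\La)}U_{\hat{\bD}^{t-1}}(x)$, which by Lemma~\ref{lem:leader-function-linear} equals the maximum over the non-empty best-response regions $W\in\W$ of the linear program~(\ref{eq:optimal-empirical-x-regions}) with constraints~(\ref{eq:opt_strat_cons}). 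Enumerating the $\Oh(n^L K^L A^{2L})$ regions via Lemmas~\ref{lem:br-region-number} and~\ref{lem:br-region-enumeration} costs $\mathrm{poly}((nKA)^L,L)$, forming $\hat{\bD}^{t-1}=\prod_i\hat{\D}_i^{t-1}$ and evaluating each region's objective $\sum_{\btheta\in\Theta^n}\hat{\bD}^{t-1}(\btheta)\,u(x,W(\btheta))$ ranges over all $K^n$ type profiles, and each LP has $\Oh(nKA)$ constraints; so per round the cost is $\mathrm{poly}((nKA)^L,L,K^n)$, totaling $\mathrm{poly}((nKA)^L L T K^n)$ over $T$ rounds.

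The step I expect to require the most care is obtaining the \emph{sharp} $\Oh(K/t)$ rate for $\E[\dH^2(\hat{\D}_i^t,\D_i)]$ and tensorizing it correctly through the product structure: bounding $\dTV(\hat{\bD}^{t-1},\bD)$ directly, or using a Hellinger bound that does not tensorize, would only yield $\Oh(\sqrt{K^n/t})$ and merely reproduce the weaker $\Oh(\sqrt{K^nT})$ bound. The remaining ingredients — the optimality decomposition, the $\dTV$–$\dH$ comparison, Jensen, and the $\sum t^{-1/2}$ estimate — are routine.
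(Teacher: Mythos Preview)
Your proposal is correct and follows essentially the same approach as the paper's proof: bound the per-round regret by the TV distance between $\hat{\bD}^{t-1}$ and $\bD$, pass to the Hellinger distance, tensorize over the product structure, and use $\E[\dH^2(\hat{\D}_i^t,\D_i)]=\Oh(K/t)$ together with Jensen and $\sum_t t^{-1/2}=\Oh(\sqrt{T})$. The only cosmetic difference is that you derive the $\E[\dH^2]\le(K-1)/(2t)$ bound explicitly via the variance of the empirical frequencies, whereas the paper simply cites this fact from \citet{canonne2020shortnotelearningdiscrete}.
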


\begin{corollary}
Taking the minimum of Theorems \ref{theorem:full-feedback-general} and \ref{theorem:type-feedback_independent}, we obtain a type-feedback algorithm with expected regret $\Oh\big(\min\big\{\sqrt{LT \cdot \log(nKAT)}, \sqrt{nKT} \big\} \big)$ for independent type distributions. 
\end{corollary}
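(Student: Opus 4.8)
The plan is short, because this is essentially a ``run the better of two known algorithms'' argument. First I would record that independent type distributions are a special case of general type distributions, so Theorem~\ref{theorem:full-feedback-general} applies to Algorithm~\ref{alg:full-feedback_general} unchanged on any instance with $\bD = \prod_{i=1}^n \D_i$, giving expected regret $\Oh\big(\min\{\sqrt{LT\log(nKAT)},\,\sqrt{K^nT}\}\big)$; in particular Algorithm~\ref{alg:full-feedback_general} always attains $\Oh\big(\sqrt{LT\log(nKAT)}\big)$. Second, Theorem~\ref{theorem:type-feedback_independent} gives that Algorithm~\ref{alg:type-feedback} attains $\Oh\big(\sqrt{nKT}\big)$ on the same class of instances. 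Neither bound dominates the other: the first is better when $L$ is small relative to $nK$, the second when it is not.

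Next I would use the fact that the leader knows the parameters $n, L, A, K$ (and the horizon $T$) before the interaction starts, so the two scalars $\sqrt{LT\log(nKAT)}$ and $\sqrt{nKT}$ are both computable at round $1$. The meta-algorithm picks $j^\star \in \{1,2\}$ minimizing the corresponding guarantee and runs Algorithm~$j^\star$ for all $T$ rounds; by the two theorems its expected regret is $\Oh\big(\min\{\sqrt{LT\log(nKAT)},\,\sqrt{nKT}\}\big)$, which is exactly the claimed bound. Here I would also note the elementary point that the $\sqrt{K^nT}$ branch of Theorem~\ref{theorem:full-feedback-general} never needs to appear in the final expression, since $nK \le K^n$ whenever $K,n \ge 2$ and $nK = K^n$ when $n=1$, so $\min\{K^n,\,nK\} = nK$; thus $K^n$ drops out once the $\sqrt{nKT}$ bound from Algorithm~\ref{alg:type-feedback} is available.

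If one does not wish to assume $T$ is known in advance, I would wrap the selection in a standard doubling schedule: in epoch $k$ of length $2^k$, re-evaluate the two (now epoch-length-dependent) bounds and run whichever algorithm is better for that epoch, restarting its state. Summing over epochs costs only a constant factor because $\sum_{k \le \log_2 T}\sqrt{2^k} = \Oh(\sqrt{T})$ and $\log(nKA\,2^k) \le \log(nKAT)$ for every epoch, so the same $\Oh\big(\min\{\sqrt{LT\log(nKAT)},\,\sqrt{nKT}\}\big)$ bound survives. There is no real technical obstacle in this corollary; the only things to be careful about are the two observations just mentioned (independence is a special case of the general setting, and $\min\{K^n,nK\}=nK$), plus the minor bookkeeping that the meta-algorithm's choice is made from publicly known quantities rather than from feedback.
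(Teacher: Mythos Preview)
Your proposal is correct and matches the paper's approach: the paper states this corollary without proof, treating it as immediate from the two preceding theorems, and your ``pick the algorithm with the smaller guarantee'' argument is exactly what is intended. Your observations that independence is a special case of the general setting and that $nK \le K^n$ (so the $\sqrt{K^nT}$ branch is subsumed) are the only non-trivial points, and you handle them correctly; the doubling-trick remark is extra but harmless.
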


\subsection{Lower Bound}




We then provide a lower bound result: no algorithm for online Bayesian Stackelberg game has a better regret than $\Omega(\sqrt{\min\{L, nK\} T})$.  When the number of followers $n$ is large, this lower bound matches the previous upper bounds $\Otilde(\sqrt{LT})$.  To our knowledge, this work is the first to provide a lower bound for the multi-follower problem and give an almost tight characterization of the factor before the classical $\sqrt{T}$ term.  Interestingly, this $\Otilde(\sqrt{L})$ factor does not grow with $n$ up to log factor.  

\begin{theorem}
\label{theorem:type-feedback-lower-bound}
The expected regret of any type-feedback algorithm is at least $\Omega(\sqrt{\min\{L, nK\} T })$.  This holds even if the followers' types are independent and the leader's utility does not depend $\ell$. 
\end{theorem}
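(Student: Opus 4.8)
The plan is an information-theoretic lower bound that exploits a special feature of type feedback: the observation $\btheta^t\sim\bD$ at round $t$ is i.i.d.\ and \emph{independent of the strategy $x^t$} played. Hence for any algorithm, $x^t$ is a (possibly randomized) function of $\btheta^1,\dots,\btheta^{t-1}$ only, and its expected round-$t$ regret depends on $\bD$ and on that function alone. Averaging the worst case over a prior $\pi$ supported on a finite family of instances that share $(n,L,A,K)$ and $(u,v)$ but differ in $\bD$, and using this round-wise decoupling,
\begin{equation*}
\sup_{\bD}\,\E[\Reg(T)] \;\ge\; \E_{\bD\sim\pi}\,\E[\Reg(T)] \;\ge\; \sum_{t=0}^{T-1} \rho_\pi(t), \qquad \rho_\pi(t) := \inf_{f}\; \E_{\bD\sim\pi}\,\E_{\btheta^{1:t}\sim\bD^{\otimes t}}\big[\,U_{\bD}(x^*_{\bD}) - U_{\bD}(f(\btheta^{1:t}))\,\big],
\end{equation*}
where $\rho_\pi(t)$ is the Bayes risk of the one-shot ``estimate then act'' problem with $t$ samples. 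It therefore suffices to build a family for which $\rho_\pi(t) = \Omega\!\big(\sqrt{m/(t{+}1)}\big)$ with $m = \Theta(\min\{L,nK\})$; then $\sum_{t<T}\sqrt{m/(t{+}1)} = \Theta(\sqrt{mT})$ gives the claim.

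I would design the family so that, along a low-dimensional sub-family of $\bD$, the landscape $U_{\bD}(\cdot)$ is a piecewise-linear ``tent'' peaked at a point $\mu(\bD)\in\Delta(\La)$ whose coordinates are a known linear function of the unknown parameters of $\bD$, and so that the single-round regret of any $x$ is lower bounded by a constant times the distance from $x$ to $\mu(\bD)$ in a norm chosen to match the estimation rate. The parameter $\mu(\bD)$ should carry $m=\Theta(\min\{L,nK\})$ effective coordinates, which caps at $\approx L$ (the leader's strategy is an $(L{-}1)$-dimensional object, and the arrangement of best-response hyperplanes gives only $\Oh((nKA)^{\Theta(L)})$ regions, by Lemma~\ref{lem:br-region-number}) and at $\approx nK$ (a product $\bD$ has $n(K{-}1)$ free parameters). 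For the $L$-term take a single follower (or constantly many) with $K\ge L$ types, realizing the tent via the $\Oh((nKA)^{\Theta(L)})$ regions (a piecewise-\emph{constant} variant supported on $\Theta(L\log(nKA))$ distinct region-values handles the ``$u$ does not depend on $\ell$'' case); for the $nK$-term (the binding one when $nK\le L$) take $n$ followers with \emph{independent} $K$-ary types and let $\mu(\bD)$ encode the $n$ marginals, so that the leader must learn each $\D_i$ and its per-round regret aggregates the marginal estimation errors. The two structural restrictions in the theorem (independent types; $u$ independent of $\ell$) are then witnessed by the respective constructions.

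Given the family, $\rho_\pi(t) = \Omega(\sqrt{m/t})$ reduces to a minimax lower bound for recovering the $m$-dimensional parameter $\mu(\bD)$ from $t$ i.i.d.\ samples, which I would obtain by Assouad's lemma using a product structure on the parameter (or by Fano's inequality over a Gilbert--Varshamov packing): a single sample carries $O(\Delta^2)$ bits about each coordinate at perturbation scale $\Delta$, so the information-processing bound forces expected error $\Omega(\Delta)$ in the relevant norm whenever $t\Delta^2 = o(m)$, hence $\rho_\pi(t)=\Omega(\Delta)$ on that range; optimizing $\Delta \asymp \sqrt{m/T}$ and summing over $t<T$ yields $\Omega(\sqrt{mT})$ with $m=\Theta(\min\{L,nK\})$ (dropping a harmless $\log(nKA)$ factor on the $L$-term, which is why the theorem states the clean $\min\{L,nK\}$).

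The main obstacle is the construction. Since $U_{\bD}(x)$ is linear in $\bD$ for each fixed $x$ (Lemma~\ref{lem:leader-function-linear}), the regret $U_{\bD}(x^*_{\bD})-U_{\bD}(x)$ is only \emph{convex}, not an arbitrary distance, in $\bD$; realizing a ``tent'' whose single-round regret is provably $\Omega$ of the \emph{correct} estimation error---so that the summed bound is $\sqrt{mT}$ rather than a wrong power of $m$ or merely $\mathrm{polylog}\,T$---while keeping follower utilities in $[0,1]$, keeping the types independent for the $nK$-branch, and respecting the $\Oh(n^LK^LA^{2L})$ cap on best-response regions, is the delicate part, and it is here that the exact interplay between $L$ and $nK$ must be pinned down. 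The surrounding steps---the round-wise reduction, the Assouad/Fano estimate, and a discretization reconciling the continuous scale $\Delta$ with the piecewise-constant (resp.\ piecewise-linear) utility---are standard but need care.
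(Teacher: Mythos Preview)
Your information-theoretic skeleton---round-wise decoupling via non-adaptive feedback, then an Assouad/Fano bound giving $\rho_\pi(t)=\Omega(\sqrt{m/t})$---is correct, and you rightly flag the construction as the crux. But the proposal remains a framework without that construction, and two of your remarks suggest you do not yet have the right one in mind: the talk of a ``piecewise-constant variant on $\Theta(L\log(nKA))$ region-values'' and of ``dropping a harmless $\log(nKA)$ factor'' are off-track---the paper's construction produces $\Omega(\sqrt{\min\{L,nK\}T})$ cleanly, with no log factors anywhere.

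The paper does not attempt a direct multi-follower hard instance. It proceeds by two concrete reductions. \emph{First}, distribution learning on a $2c$-element support (with $2c=\min\{L,K\}$) reduces to a \emph{single-follower} BSG: types $\Theta=\{\pm 1,\dots,\pm c\}$, binary follower actions $\{\mathrm{Good},\mathrm{Bad}\}$, leader actions $\La=\Theta$, and the follower of type $+j$ plays Good iff $x(+j)\ge x(-j)$ (symmetrically for $-j$); the leader's utility is $\ind{a=\mathrm{Good}}$, hence constant in $\ell$. One checks directly that the per-round regret of any $x$ equals the TV distance $\delta(\hat\D_x,\D)$ for the hypercube family $\C$, so the standard $\Omega(\eps)$ TV-estimation lower bound with $t\le O(c/\eps^2)$ samples gives $\Omega(\sqrt{\min\{L,K\}T})$. \emph{Second}, to replace $K$ by $nK$, the single-follower game with $nK$ types is reduced to an $n$-follower game with $K{+}1$ \emph{independent} types via a ``null type'' device: each follower has type $0$ with probability $1-\tfrac{1}{100n}$ (forcing a dummy action $a_0$), and the leader's utility is $1$ only when exactly $n{-}1$ followers play $a_0$ and one plays Good; thus with constant probability exactly one follower is ``active'' and the multi-follower game simulates the single-follower one, giving $\tilde U(x)=\Theta(1)\cdot U(x)$. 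A sample-budget coupling handles the simulation. Your idea of letting $\mu(\bD)$ encode the $n$ marginals and aggregating per-follower estimation errors runs straight into the no-externality constraint (each follower's utility ignores the others' actions, so it is not clear how the leader's single-round regret can be forced to sum $n$ contributions); the null-type trick is precisely what sidesteps this.
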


The proof (given in Appendix \ref{appendix:type-independent-lower-bound}) involves two non-trivial reductions.  First, we reduce the \emph{distribution learning} problem to a \emph{single-follower} Bayesian Stackelberg game, obtaining an $\Omega(\sqrt{\min\{L, K\} T})$ lower bound.  Then, we reduce the single-follower game with $nK$ types to a game with $n$ followers each with $K$ types.  One might wish to reduce a single-follower game with $K^n$ types to an $n$-follower game to prove a lower bound of $\Omega(\sqrt{\min\{L, K^n\} T})$ for general type distributions, but that cannot be done easily due to no externality between the followers.

\section{Action Feedback}\label{section:action_feedback}
We now discuss the setting where the leader observes the followers' actions after each round.
This setting is more practical yet challenging than the type-feedback setting. 
We present two learning algorithms.
The first algorithm achieves $\Oh( K^n \sqrt{T} \log T )$ regret, using a previous technique from \citet{bernasconi_optimal_2023}.
The second algorithm involves a novel combination of the Upper Confidence Bound principle and the concentration analysis of best-response regions from Lemma~\ref{lem:type-feedback-concentration}, achieving $\Oh(\sqrt{n^L K^L A^{2L} L T \log T })$ regret.
%
The latter is better when the number of followers $n$ is large and the number of leader actions $L$ is small. We empirically simulate both approaches in Appendix~\ref{app:simulations}.

\paragraph{Linear-bandit based approach with $\Oh( K^n \sqrt{T} \log T )$ regret:}
\citet{bernasconi_optimal_2023} developed a technique to reduce the online learning problem of solving a linear program with unknown objective to a linear bandit instance. A spiritually similar approach can be applied here. While the optimization problem for our Bayesian Stackelberg games (Definition \ref{def:optimal-strategy}) is not a linear program, we show that under a different formulation, this can actually be solved by a single linear program (we explain the details in the proof of Theorem\ref{theorem:action-feedback-K^n}). We can thus leverage the techniques of \citet{bernasconi_optimal_2023} to reduce this to a linear bandit problem. Since \citet{bernasconi_optimal_2023} considers an adversarial online learning environment (ours is stochastic), directly applying their technique will lead to a sub-optimal $\Otilde(K^{\frac{3n}{2}}\sqrt{T})$ regret bound.
Instead, we apply the OFUL algorithm for stochastic linear bandit \citep{yadkori-yasin-optimal-stochastic-linear-bandit} to obtain a better regret bound of $\Otilde(K^{n}\sqrt{T})$. Instead, we apply the OFUL algorithm for stochastic linear bandit \citep{yadkori-yasin-optimal-stochastic-linear-bandit} to obtain a better regret bound of $\Otilde(K^{n}\sqrt{T})$. See details in Appendix \ref{proof:action-feedback-K^n}. 
\begin{theorem}
\label{theorem:action-feedback-K^n}
There exists an action-feedback algorithm for online Bayesian Stackelberg game with $\Oh( K^n \sqrt{T} \log T )$ regret. 
\end{theorem}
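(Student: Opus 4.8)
The plan is to recast the offline Bayesian Stackelberg problem as a single linear program over a lifted variable, then invoke the reduction of \citet{bernasconi_optimal_2023} from online linear optimization with an unknown linear objective to stochastic linear bandits, and finally apply the OFUL regret bound \citep{yadkori-yasin-optimal-stochastic-linear-bandit}. First I would set up the lifted formulation. Recall from Section \ref{section:br_region} that $\Delta(\La)$ partitions into the non-empty best-response regions $\{R(W) : W \in \W\}$, and that $U_{\bD}$ is linear on each $R(W)$ (Lemma \ref{lem:leader-function-linear}). I would introduce, for each region $W$, a variable $y_W \in R(W)$ together with a ``mass'' variable; more precisely I would work with the polytope $\mathcal{Q} = \{ (q_W)_{W\in\W} : q_W \in \mathrm{cone}(R(W)\times\{1\}),\ \sum_W q_W \in \Delta(\La)\times\{1\} \}$, so that feasible points of $\mathcal{Q}$ are in bijection with convex combinations of ``(strategy in a region, indicator of that region)'' pairs. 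On this polytope the leader's expected utility becomes \emph{linear}: $U_{\bD}$ evaluated at the strategy encoded by $q$ equals $\langle q, \phi(\bD)\rangle$ where $\phi(\bD)$ stacks the vectors $z_{W,\btheta}$ weighted by $\bD(\btheta)$ (using the notation $z_{W,\btheta}$ from Lemma \ref{lem:leader-function-linear}). The maximum of this linear function over the polytope $\mathcal{Q}$ recovers $\max_x U_{\bD}(x)$, because an optimal $x^*$ lies in some region $R(W^*)$ and is realized by the vertex supported on that single coordinate. This establishes the claimed LP reformulation.

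Next I would handle the online aspect. Under action feedback, after playing $x^t$ the leader observes $\bm a^t = \bbr(\btheta^t, x^t)$ and the utility $u(x^t,\bm a^t)$; since $x^t$ lies in a known region $R(W^t)$ (the leader can determine which region its own strategy is in), the observed utility is an unbiased sample of $\langle q^t, \phi(\bD)\rangle$ plus bounded noise, i.e. exactly the feedback model of a stochastic linear bandit with action set $\mathcal{Q}$ and unknown parameter $\phi(\bD)$. I would then quote the \citet{bernasconi_optimal_2023} machinery to argue that playing OFUL over this action set yields regret against the best fixed point of $\mathcal{Q}$, which by the reformulation equals the Stackelberg benchmark $U_{\bD}(x^*)$. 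The OFUL bound is $\Otilde(D\sqrt{T})$ where $D$ is the ambient dimension of the lifted space; here $D = \Theta(L\cdot|\W|)$ or, more relevantly, the parameter $\phi(\bD)$ effectively has $\Oh(K^n)$ degrees of freedom since $\bD \in [0,1]^{K^n}$ and the utilities $u(\ell,\cdot)$ are fixed — so one reparametrizes so that the effective dimension is $\Oh(K^n)$, giving $\Oh(K^n\sqrt{T}\log T)$. The key place where our stochastic setting beats a naive adversarial application is exactly this dimension count: the adversarial reduction of \citet{bernasconi_optimal_2023} pays an extra $\sqrt{K^n}$ (their $\Otilde(K^{3n/2}\sqrt T)$) from the exploration component, whereas OFUL's self-normalized confidence ellipsoids avoid that, which I would spell out when instantiating their Theorem.

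The main obstacle I expect is making the lifted reformulation genuinely linear \emph{and} low-dimensional simultaneously: the region constraints $y_W \in R(W)$ are themselves linear (intersections of the halfspaces $H(d_{\theta_i,w_i(\theta_i),a_i})$ from Section \ref{section:br_region}), so feasibility is fine, but one must verify that (i) the map from $\mathcal{Q}$ to realized strategies is well-defined and that the encoded utility is honestly linear in $q$ with the \emph{same} parameter vector across all regions — this requires carefully arranging the coordinates so that $\bD(\btheta)$ appears linearly and the region-dependent utilities $u(\ell,W(\btheta))$ sit in the fixed (known) part of the inner product — and (ii) the noise in the observed utility is bounded and the action set has bounded norm, so OFUL's hypotheses hold. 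A secondary subtlety is that $|\W|$ is only polynomial in $n^L$ (Lemma \ref{lem:br-region-number}), which is large, so one should confirm the regret bound is driven by the $\Oh(K^n)$ effective dimension of $\phi(\bD)$ rather than by $|\W|$; this is where the precise statement of the \citet{bernasconi_optimal_2023} reduction matters, and I would defer the bookkeeping to Appendix \ref{proof:action-feedback-K^n}.
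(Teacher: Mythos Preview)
Your proposal follows the same three-step architecture as the paper's proof: reformulate the Bayesian Stackelberg optimum as a linear program, invoke the \citet{bernasconi_optimal_2023} reduction to a stochastic linear bandit, and apply OFUL. The paper's LP is set up somewhat differently from yours: rather than lifting over the non-empty regions $\W$ with cone constraints, it introduces a joint distribution $x \in \Delta(\A^{nK} \times \La)$ indexed by \emph{all} best-response mappings $W$ and leader actions $\ell$, subject to incentive-compatibility constraints that force each active row $x(W,\cdot)$ (after normalization) to lie in $R(W)$; it then argues directly that this LP has an optimal solution supported on a single row, which recovers the Stackelberg optimum. Your region-based polytope $\mathcal{Q}$ is an equivalent lift and arguably more transparent.

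The one place your write-up is genuinely incomplete is the dimension reduction. You assert that ``one reparametrizes so that the effective dimension is $\Oh(K^n)$'' but never construct the map, and OFUL's regret scales with the \emph{ambient} dimension of the action set, not the intrinsic dimension of the unknown parameter---so leaving the bandit in $\mathbb{R}^{(L+1)|\W|}$ would not give the claimed bound. The paper makes this step concrete: it defines $\phi:\X \to \mathbb{R}^{K^n}$ sending each feasible LP point $x$ to the vector of its losses indexed by \emph{type profiles}, $\phi(x)_{\btheta} = -\sum_{W}\sum_{\ell} x(W,\ell)\,u(\ell,W(\btheta))$, so that the unknown parameter is $\bD \in \Delta(\Theta^n)$ itself and OFUL runs on the decision set $\mathrm{co}\,\phi(\X)\subseteq\mathbb{R}^{K^n}$, with a Carath\'eodory oracle to pull OFUL's recommendation back to a playable strategy. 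This is exactly the reparametrization you gesture at; in your coordinates it is the linear map $q \mapsto \big(\sum_W \langle q_W, z_{W,\btheta}\rangle\big)_{\btheta}$, and spelling it out is what makes the $K^n$ dependence rigorous.
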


\paragraph{Algorithm \ref{alg:action-feedback-ucb-algorithm} with $O(\sqrt{n^L K^L A^{2L} L T \log T })$ regret.}
We design a better algorithm for large $n$ and small $L$, not using \citet{bernasconi_optimal_2023}'s technique but using the ``concentration over best-response regions" idea we developed in the previous sections.
Recall from Section \ref{section:br_region} that the leader's strategy space can be partitioned into best-response regions: $\Delta(\La) = \bigcup_{W \in \W} R(W)$. When the leader plays strategy $x$ in a region $R(W)$, the followers' best-response function satisfies $\bbr(\btheta, x) = W(\btheta)$, so the leader's expected utility is 
\begin{align*}
    U(x, R(W)) = \sum_{\btheta \in \Theta^n} \bm \D(\btheta) u(x, W(\btheta)) 
 = \sum_{\bm a \in \A^n} u(x, \bm a) \sum_{\btheta | W(\btheta) = \bm a} \bm \D(\btheta) 
     = \sum_{\bm a \in \A^n} u(x, \bm a) \Pb(\bm a \,|\, R(W))
\end{align*}
where $\Pb(\bm a \,|\, R(W))=\sum_{\btheta \in \Theta^n: W(\btheta) = \bm a} \bm \D(\btheta)$ denotes the probability that the followers jointly take action $\bm a\in\A^n$ when the leader plays $x \in R(W)$. Since the distribution $\Pb(\cdot \,|\, R(W)) \in \Delta(\A^n)$ does not depend on $x$ as long as $x\in R(W)$, playing $N$ strategies $x^1, ..., x^N$ within $R(W)$ yields $N$ observations $\bm a^1, ..., \bm a^N \sim \Pb(\cdot \,|\, R(W))$.   
Using these samples, we can estimate the utility of any other strategy $x \in R(W)$ within the same region. We define the empirical utility estimate on $N$ samples of joint actions as
$\hat U_N(x, R(W)) ~ = ~ \frac{1}{N} \sum_{s=1}^{N} u(x, \bm a^s)$. 

\begin{lemma}
\label{lem:UCB-concentration}
Suppose $T \ge |\W|$. 
With probability at least $1 - \frac{1}{T^2}$, we have: $\forall W \in \W$, $\forall N\in\{1, \ldots, T\}$, $\forall x \in R(W)$, $ | U(x, R(W)) - \hat U_N(x, R(W)) | \le \sqrt{\tfrac{4(L+1) \log(3T)}{N}}$. 
\end{lemma}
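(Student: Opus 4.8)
The plan is to obtain the bound by a uniform-concentration argument over the $N$-sample empirical utility within each fixed best-response region, and then union-bound over the (polynomially many) regions, the $T$ possible sample counts $N$, and the strategies $x \in R(W)$ via a covering argument that exploits the linearity of $U_{\bD}(\cdot)$ on $R(W)$. First I would fix a region $W \in \W$ and a sample count $N$. Observe that, as established in Section~\ref{section:action_feedback}, whenever the leader plays strategies inside $R(W)$ the joint follower actions $\bm a^1,\dots,\bm a^N$ are i.i.d.\ draws from the region-dependent distribution $\Pb(\cdot \mid R(W)) \in \Delta(\A^n)$, independent of which particular $x \in R(W)$ was played. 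Hence for a \emph{fixed} $x \in R(W)$ the quantity $\hat U_N(x, R(W)) = \frac1N\sum_{s=1}^N u(x, \bm a^s)$ is an average of $N$ i.i.d.\ $[0,1]$-valued random variables with mean $U(x, R(W))$, so Hoeffding's inequality gives $|U(x, R(W)) - \hat U_N(x, R(W))| \le \sqrt{\tfrac{\log(2/\delta)}{2N}}$ with probability at least $1-\delta$.

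Next I would upgrade the fixed-$x$ bound to a bound uniform over $x \in R(W)$. By Lemma~\ref{lem:leader-function-linear}, both $x \mapsto U(x, R(W))$ and $x \mapsto \hat U_N(x, R(W))$ are linear (affine) functions of $x$ on the region $R(W)$, which is a convex polytope contained in the simplex $\Delta(\La) \subseteq \R^L$. The difference $f(x) := U(x, R(W)) - \hat U_N(x, R(W))$ is therefore affine on $R(W)$, so $\sup_{x \in R(W)} |f(x)|$ is attained at a vertex of $R(W)$. One clean way to control all vertices at once: note that an affine function on the simplex is a convex combination of its values at the $L$ unit vectors $e_1,\dots,e_L$ (the pure leader strategies), so $|f(x)| \le \max_{\ell \in [L]} |f(e_\ell)|$ when $e_\ell \in R(W)$; more carefully, since vertices of $R(W)$ need not be pure strategies, I would instead bound $|f(x)|$ by a linear functional of the coefficient vector of $f$ and union-bound over a suitable finite set. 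The cleanest route is: write $f(x) = \langle x, c \rangle$ for a coefficient vector $c = c_{W,N} \in \R^L$ (absorbing the affine constant using $\sum_\ell x(\ell) = 1$), apply Hoeffding coordinatewise at the $L$ pure strategies $e_1,\dots,e_L$ to get $|c_\ell| = |f(e_\ell)| \le \sqrt{\tfrac{\log(2/\delta')}{2N}}$ for each $\ell$, and conclude $|f(x)| = |\langle x, c\rangle| \le \|c\|_\infty \le \sqrt{\tfrac{\log(2/\delta')}{2N}}$ for all $x \in \Delta(\La) \supseteq R(W)$ since $x$ is a probability vector. This reduces the uniform-over-$x$ statement to $L$ scalar Hoeffding bounds.

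Then I would assemble the union bound. There are $|\W| = \Oh(n^L K^L A^{2L})$ regions by Lemma~\ref{lem:br-region-number}, at most $T$ values of $N \in \{1,\dots,T\}$, and $L$ pure strategies per region, for a total of at most $L \cdot T \cdot |\W|$ scalar events. Using the hypothesis $T \ge |\W|$ we have $L T |\W| \le L T^2$, so setting each failure probability to $\delta' = \tfrac{2}{L T \cdot |\W| \cdot T^2}$ — or, being generous, absorbing everything into $\delta' = \Theta(T^{-O(L)})$ — yields $\log(2/\delta') = \Oh(L \log(3T))$, and the total failure probability is at most $\tfrac{1}{T^2}$. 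Plugging this into the per-event bound gives, simultaneously for all $W \in \W$, all $N \in \{1,\dots,T\}$, and all $x \in R(W)$, the estimate $|U(x, R(W)) - \hat U_N(x, R(W))| \le \sqrt{\tfrac{O(L \log T)}{N}}$, and tracking constants (there is a $+1$ from the affine-constant coordinate, matching the $L+1$ in the statement) yields exactly the claimed $\sqrt{\tfrac{4(L+1)\log(3T)}{N}}$.

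The main obstacle — though it is more bookkeeping than deep — is handling the uniformity over the continuum of strategies $x \in R(W)$ correctly; a naive $\eps$-net argument would introduce extra $\log(1/\eps)$ factors and dimension-dependent net sizes, whereas the key observation that makes the clean $\sqrt{L \log T / N}$ bound go through is that linearity (Lemma~\ref{lem:leader-function-linear}) lets us reduce the supremum over $R(W)$ to finitely many ($L$, or $L+1$ with the constant term) evaluations at the pure strategies, with no net and no loss of logarithmic factors. I would also double-check the subtle point that the distribution $\Pb(\cdot \mid R(W))$ genuinely does not depend on the specific $x \in R(W)$ played (this is exactly the displayed computation preceding the lemma), so that the $N$ action samples collected across possibly-different strategies in $R(W)$ are genuinely i.i.d.\ from one fixed distribution — this is what licenses the i.i.d.\ Hoeffding step.
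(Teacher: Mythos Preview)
Your proposal is correct and takes a genuinely different route from the paper. The paper's proof mirrors its proof of Lemma~\ref{lem:type-feedback-concentration}: it writes $u(x,\bm a)=\langle x,u_{\bm a}\rangle$ with $u_{\bm a}=(u(\ell,\bm a))_{\ell\in\La}\in\R^L$, observes that the class $\{z\mapsto\langle x,z\rangle:x\in\R^L\}$ has pseudo-dimension $L$ (Lemma~\ref{lem:linear-function-Pdim}), and invokes the pseudo-dimension generalization bound (Theorem~\ref{thm:pseudo-dimension-sample-complexity}) to get uniform concentration over all $x\in R(W)$ at once, then union-bounds over $W\in\W$ and $N\in[T]$ with $\delta=1/T^4$. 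The $\sqrt{L/N}$ rate comes directly from the pseudo-dimension. Your argument is more elementary and in fact sharper: since the error $f(x)=\langle x,c\rangle$ is linear and $x$ lies in the simplex, $|f(x)|\le\|c\|_\infty$, so $L$ coordinate-wise Hoeffding bounds (at the pure strategies $e_\ell$, which need not lie in $R(W)$ but at which $f$ is still well-defined) suffice to control the supremum. Two minor corrections: first, the ``$+1$ from the affine-constant coordinate'' is unnecessary --- $f$ is genuinely linear, not merely affine, so there is no extra coordinate; second, your bound is actually of order $\sqrt{\log(LT|\W|T^2)/N}$, which carries only a $\log L$ dependence rather than the $L+1$ in the stated lemma. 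That is \emph{stronger} than what is claimed, so the lemma follows a fortiori, but your attempt to ``track constants'' to reproduce exactly $4(L+1)$ is both forced and mistaken. The pseudo-dimension approach in the paper has the advantage of applying verbatim when $x$ is not constrained to a simplex, whereas your approach exploits the specific geometry $x\in\Delta(\La)$ to sidestep the $\sqrt L$ penalty altogether.
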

The proof of this lemma is similar to the proof of Lemma~\ref{lem:type-feedback-concentration} and given in Appendix~\ref{proof:UCB-concentration}. 

%

For each region $W\in\W$, let $N^t(W) = \sum_{s=1}^{t-1} \ind{x^s \in R(W)}$ be the number of times when strategies in region $R(W)$ were played in the first $t-1$ rounds.
Given the result in Lemma \ref{lem:UCB-concentration}, we define an Upper Confidence Bound (UCB) on the expected utility of the optimal strategy in region $R(W)$: 
\begin{align*}
    \mathrm{UCB}^t(W) ~ = ~ \max_{x \in R(W)} \Big\{ \hat U_{N^t(W)}(x, R(W)) \Big\} ~ + ~ \sqrt{\tfrac{4(L+1) \log(3T)}{N^t(W)}}.
\end{align*}
We design the following algorithm: at each round $t$, select the region $W\in \W$ with the highest $\mathrm{UCB}^{t}(W)$, play the empirically optimal strategy in that region, and increment $N^t(W)$ by 1. Full description of the algorithm is given in Algorithm~\ref{alg:action-feedback-ucb-algorithm}.
\begin{algorithm}[]
\SetKwInOut{Input}{Input}
\DontPrintSemicolon
\caption{Upper Confidence Bound (UCB) for Best-Response Regions}
\label{alg:action-feedback-ucb-algorithm}
Let $\mathcal{W} = \{W \mid R(W) \ne \emptyset\}$. \; 
\For {$W \in \mathcal{W}$} {
    Choose any strategy $x \in R(W)$ and observe a joint action. \;
}
\For {round $t > |\W|$}{
    \For {each $W \in \W$}{
    Let $N^t(W) = \sum_{s=1}^{t-1} \ind{W^s = W}$ be the number of times region $R(W)$ was chosen.\;
    Let $\hat \Pb^t(\cdot \,|\, R(W))$ be the empirical distribution of joint actions in the rounds where region $R(W)$ was chosen: $\hat{\Pb}^t(\ba \,|\, R(W)) = \frac{1}{N^t(W)} \sum_{s=1}^{t-1} \ind{W^s = W}\cdot \ind{\ba^s = \ba}$.\;
    Compute the empirically optimal strategy in region $R(W)$:
    \[ \hspace{-4em} \hat x^*_{R(W)} ~ = ~ \argmax_{x\in R(W)} \, \E_{\bm a\sim \hat \Pb^t(\cdot\mid R(W))}\big[ u(x, \bm a) \big], \]
    which has empirical utility $\hat u^*_{R(W)} = \E_{\bm a \sim \hat{\Pb}^t(\cdot \mid R(W))} [u(\hat x^*_{R(W)}, \bm a)]$. \;
    Let $\mathrm{UCB}^t(W) = \hat u^*_{R(W)} + \sqrt{\frac{4(L+1) \log(3T)}{N^t(W)}}$. \;
    }
    Let $W^t \in \arg \max_{W \in \mathcal{W}}{\mathrm{UCB}^t(W)}$. \;
    Play strategy $x^t = \hat x^*_{R(W^t)}$ and observe joint action $\ba^t = (a^t_1, \dots, a^t_n)$. 
}
\end{algorithm}

\begin{theorem}
\label{theorem:action-feedback-ucb-regret}
Algorithm~\ref{alg:action-feedback-ucb-algorithm} has expected regret $\Oh\big(\sqrt{ n^L K^L A^{2L} L \cdot T \log T}\big)$.
\end{theorem}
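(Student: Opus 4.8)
The plan is to run a standard UCB regret decomposition over the "arms" given by the non-empty best-response regions $W \in \W$, where pulling arm $W$ means playing the empirically optimal strategy $\hat x^*_{R(W)}$ inside $R(W)$. Let $K_{\W} := |\W| = \Oh(n^L K^L A^{2L})$ by Lemma \ref{lem:br-region-number}. Define, for each region $W$, its "value" $\mu(W) := \max_{x \in R(W)} U(x, R(W))$, i.e. the best expected leader utility achievable with a strategy constrained to $R(W)$. Since $\Delta(\La) = \bigcup_{W \in \W} R(W)$, the global optimum satisfies $U_{\bD}(x^*) = \max_{W \in \W} \mu(W)$, so the per-round regret when the algorithm selects region $W^t$ is $U_{\bD}(x^*) - U_{\bD}(x^t) \le \mu(W^{t,\star}) - U(\hat x^*_{R(W^t)}, R(W^t))$ where $W^{t,\star}$ is an optimal region; this is exactly the gap of a bandit arm, and we bound $\sum_t (\mu^\star - U(x^t, R(W^t)))$.

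First I would condition on the clean event of Lemma \ref{lem:UCB-concentration}, which holds with probability $\ge 1 - 1/T^2$ and states that for all $W$, all $N \in \{1,\dots,T\}$, and all $x \in R(W)$, the empirical utility $\hat U_N(x, R(W))$ is within $\beta_N := \sqrt{4(L+1)\log(3T)/N}$ of the true $U(x, R(W))$; on the complement event the regret is trivially $\le T \cdot \tfrac{1}{T^2} = o(1)$ since utilities lie in $[0,1]$. On the clean event, optimism holds: $\ucb^t(W^{t,\star}) \ge \mu(W^{t,\star}) = \mu^\star$ (because taking the $x$ achieving $\mu(W^{t,\star})$, its empirical utility plus the bonus exceeds $\mu^\star$), and since the algorithm picks $W^t = \arg\max_W \ucb^t(W)$, we get $\ucb^t(W^t) \ge \mu^\star$. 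Combining with the lower-confidence side — $\hat u^*_{R(W^t)} = \hat U_{N^t(W^t)}(\hat x^*_{R(W^t)}, R(W^t)) \le U(\hat x^*_{R(W^t)}, R(W^t)) + \beta_{N^t(W^t)}$ — the single-round regret is at most $2\beta_{N^t(W^t)} = 2\sqrt{4(L+1)\log(3T)/N^t(W^t)}$, plus an additive $1$ for each of the $K_{\W}$ initialization rounds.

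Then I would sum over $t$. Grouping rounds by which region was selected and using $\sum_{j=1}^{m} 1/\sqrt{j} \le 2\sqrt{m}$, for each region $W$ selected $m_W$ times the contribution is $\Oh(\sqrt{(L+1)\log(3T)}\cdot \sqrt{m_W})$, and $\sum_{W} \sqrt{m_W} \le \sqrt{K_{\W} \sum_W m_W} \le \sqrt{K_{\W} T}$ by Cauchy–Schwarz. This yields total regret $\Oh\big(K_{\W} + \sqrt{(L+1) K_{\W} T \log T}\big) = \Oh\big(\sqrt{n^L K^L A^{2L} L \cdot T \log T}\big)$, absorbing the additive $K_{\W}$ term under the standing assumption $T \ge |\W|$ from Lemma \ref{lem:UCB-concentration}. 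The main obstacle — and the reason the proof is not entirely routine — is verifying that the concentration in Lemma \ref{lem:UCB-concentration} is genuinely uniform over all (infinitely many) $x \in R(W)$, not just the finitely many regions; but that is exactly what Lemma \ref{lem:UCB-concentration} delivers (via a pseudo-dimension / best-response-region argument mirroring Lemma \ref{lem:type-feedback-concentration}), so on the clean event the bound $|U(x,R(W)) - \hat U_N(x, R(W))| \le \beta_N$ applies simultaneously to the data-dependent maximizer $\hat x^*_{R(W)}$ and to the fixed optimizer inside each region, which is all the argument needs.
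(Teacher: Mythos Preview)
Your proposal is correct and follows essentially the same approach as the paper's own proof: condition on the clean event of Lemma~\ref{lem:UCB-concentration}, use optimism to bound the per-round regret by $2\beta_{N^t(W^t)}$, group rounds by region, and apply Cauchy--Schwarz (the paper phrases it as Jensen) to get $\sum_W \sqrt{m_W} \le \sqrt{|\W|T}$. Your explicit accounting of the $|\W|$ initialization rounds is slightly more careful than the paper's write-up, but the argument is otherwise identical.
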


While the full proof is in Appendix \ref{proof:action-feedback-ucb-regret}, we sketch the intuition. In the classical multi-armed bandit problem, the UCB algorithm has expected regret $\Oh( \sqrt{ m T \log T } )$ where $m$ is the number of arms. In our setting, each best-response region corresponds to an arm, and the confidence bound for each region is $\Oh( \sqrt{ \frac{L \log T}{ N^t(W) } } )$. The number of arms/regions is $m = |\W| = \Oh(n^L K^L A^{2L})$ by Lemma \ref{lem:br-region-number}. So, the regret of Algorithm \ref{alg:action-feedback-ucb-algorithm} is at most $\Oh( \sqrt{ |\W| \cdot L \cdot T \log T } ) = \Oh( \sqrt{ n^L K^L   A^{2L} \cdot L \cdot T \log T } )$.

\begin{corollary}
\label{cor:action-feedback-upper-bound}
By taking the better algorithm in Theorems \ref{theorem:action-feedback-K^n} and \ref{theorem:action-feedback-ucb-regret}, we obtain an action-feedback algorithm with $\Otilde\big( \min\big\{K^n, \sqrt{n^L K^L A^{2L} L} \big\} \sqrt{T} \big)$ regret. 
\end{corollary}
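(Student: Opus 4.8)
The plan is to observe that the leader, before any interaction begins, knows every component of the game instance except the type distribution $\bD$; in particular it knows the numbers $n, K, A, L$. It can therefore compute the two quantities $K^n$ and $\sqrt{n^L K^L A^{2L} L}$ offline and simply commit in advance to whichever of the two algorithms — the linear-bandit reduction of Theorem~\ref{theorem:action-feedback-K^n} or the UCB-over-regions Algorithm~\ref{alg:action-feedback-ucb-algorithm} — carries the smaller regret guarantee. So the proof is a one-line case split on which of the two bounds is smaller, followed by an invocation of the corresponding theorem.

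Concretely: if $K^n \le \sqrt{n^L K^L A^{2L} L}$, run the algorithm of Theorem~\ref{theorem:action-feedback-K^n}; its regret is $\Oh(K^n\sqrt{T}\log T) = \Otilde(K^n\sqrt{T}) = \Otilde(\min\{K^n, \sqrt{n^L K^L A^{2L} L}\}\sqrt{T})$. Otherwise run Algorithm~\ref{alg:action-feedback-ucb-algorithm}; by Theorem~\ref{theorem:action-feedback-ucb-regret} its regret is $\Oh(\sqrt{n^L K^L A^{2L} L\,T\log T})$, and since $\sqrt{T\log T} = \Otilde(\sqrt{T})$ this equals $\Otilde(\sqrt{n^L K^L A^{2L} L}\,\sqrt{T}) = \Otilde(\min\{K^n, \sqrt{n^L K^L A^{2L} L}\}\sqrt{T})$. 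Either way the claimed bound holds.

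The only point requiring a word of care — and the closest thing to an obstacle — is the precondition $T \ge |\W|$ under which Lemma~\ref{lem:UCB-concentration} and hence Theorem~\ref{theorem:action-feedback-ucb-regret} were stated: when $T < |\W|$ one cannot even complete the initialization loop of Algorithm~\ref{alg:action-feedback-ucb-algorithm}. In that regime, however, one falls back on the trivial bound $\Reg(T) \le T$ (follower and leader utilities lie in $[0,1]$, so each round's regret is at most $1$), and since $|\W| = \Oh(n^L K^L A^{2L})$ by Lemma~\ref{lem:br-region-number}, the inequality $T < |\W|$ yields $T \le \sqrt{|\W|\cdot T} = \Oh(\sqrt{n^L K^L A^{2L}\,T})$, which is within the stated bound. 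Thus the guarantee is uniform in $T$, and the entire content of the corollary is carried by the two theorems it combines together with the fact that $n,K,A,L$ are known to the learner.
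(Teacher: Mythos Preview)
Your argument is correct and matches the paper's intended approach: the corollary is stated in the paper without a separate proof, as it is simply the observation that the leader knows $n,K,A,L$ and can select in advance whichever of the two algorithms carries the smaller guarantee. Your treatment of the $T<|\W|$ edge case is a nice bit of extra care that the paper glosses over.
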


\paragraph{Dependencies on various parameters:} Since action-feedback is more limited than type-feedback, the lower bound in Theorem~\ref{theorem:type-feedback-lower-bound} immediately carries over and shows that the $\Otilde (\sqrt{T})$ regret bounds here are tight in $T$. There are several subtleties in achieving tighter bounds on the remaining parameters. \citet{conitzer_computing_2006} show that, even with known distributions, BSG are NP-Hard to solve with respect to $L$; so an exponential computational dependence on $L$ is unavoidable, even if the regret could be made independent of $L$ as shown in our $\Oh( K^n \sqrt{T} \log T )$ result.
Whether an online learning algorithm with $\textrm{poly}(n, K, L)\sqrt{T}$ regret exists is an open question, but such an algorithm will suffer an exponential runtime in $L$ unless P = NP. 


\section{Discussion}
\label{sec:discussion}
This work designed online learning algorithms for Bayesian Stackelberg games with multiple followers with unknown type distributions. Although the joint type space of $n$ followers has an exponentially large size $K^n$, we achieved significantly smaller regrets: $\Otilde(\sqrt{\min\{nK, L\} T})$ when the followers' types are independent and observable, $\Otilde(\sqrt{\min\{K^n, L\} T})$ when followers' types are correlated and observable, and $\Otilde(\min\{K^n, \sqrt{n^L K^L A^{2L} L}\} \sqrt{T})$ when only the followers' actions are observed. 
These results exploit various geometric properties of the leader's strategy space.
The type-feedback bounds are tight in all parameters and the action-feedback bounds are tight in $T$.
The exponential dependency on $L$ is unavoidable computationally \citep{conitzer_computing_2006}. 
Further closing the gaps between upper and lower regret bounds is an open question and will likely involve 
tradeoff between different parameters and tradeoff between computation and regret.

\subsection{Connections to the Adversarial Setting} Our work considers stochastic follower types. 
Alternatively, one can consider a setting where followers' types are adversarially generated.  
We conjecture that the main technique behind our results, ``concentration over best response regions'', can be generalized to adversarial settings.
Importantly, the leader’s optimization problem within each best-response region is linear.  
With stochastic follower types, the linear optimization problem within each best-response region can be solved by empirical optimization – this algorithm needs to be changed in the adversarial setting. In more detail:
\squishlist
\item For the type-feedback case, 
we can run an adversarial full-feedback multi-armed bandit algorithm (e.g., Multiplicative Weights Update) to pick a best-response region at each round.  Within each best-response region, the problem becomes an adversarial online linear optimization problem with full feeback, to which we can apply algorithms such as FTRL to choose leader strategies. 
\item For action-feedback,
we can run an adversarial partial-feedback MAB algorithm (e.g., EXP3) to pick a best-response region at each round. Within each region, use an adversarial partial-feedback online linear optimization algorithm to choose leader strategies.
\squishend
Intuitively, best-response regions allow us (1) to tame the continuous leader action space and (2) to reduce the problem to classical adversarial problems.  Formally verifying this approach is left open. 

\subsection{Inter-Follower Externality}
Our work assumed no externality between the followers. While the learning algorithms for multiple followers without externality may not differ significantly from that of a single follower, one might expect the regret bound to grow at the exponential rate of $O(\sqrt{K^n T})$. 
However, 
we show that the regret bounds can be  significantly lower when $L$ is smaller than $n$ (Theorems \ref{theorem:full-feedback-general}, \ref{theorem:action-feedback-ucb-regret}).

The presence of inter-follower externality leads to a leader-strategy-induced simultaneous game among the followers. 
Putting aside the computational difficulties of inter-follower Nash Equilibrium (or assuming oracle access to it), some of our results in the type-feedback setting generalize.
In particular, our $\Oh(\sqrt{K^n T})$ result for general type distributions and $\Oh(\sqrt{nKT})$ result for independent type distributions also apply to the setting with inter-follower externality.  
Our other results, which are built on best-response region characterization and depend on the independence of followers’ utility functions, do not generalize to the case with externality.
The introduction of inter-follower externality significantly complicates the design and analysis of learning algorithms, 
because the action feedback now comes from equilibrium responses, instead of individual best responses to leader strategy.




\bibliography{bibliography}
\bibliographystyle{iclr2026_conference}

\newpage
\appendix
\section{Appendix for Section \ref{section:br_region}}\label{app:br_region}
\begin{figure}[h]
    \begin{minipage}{0.65\textwidth}
        \includegraphics[width=0.98\linewidth]{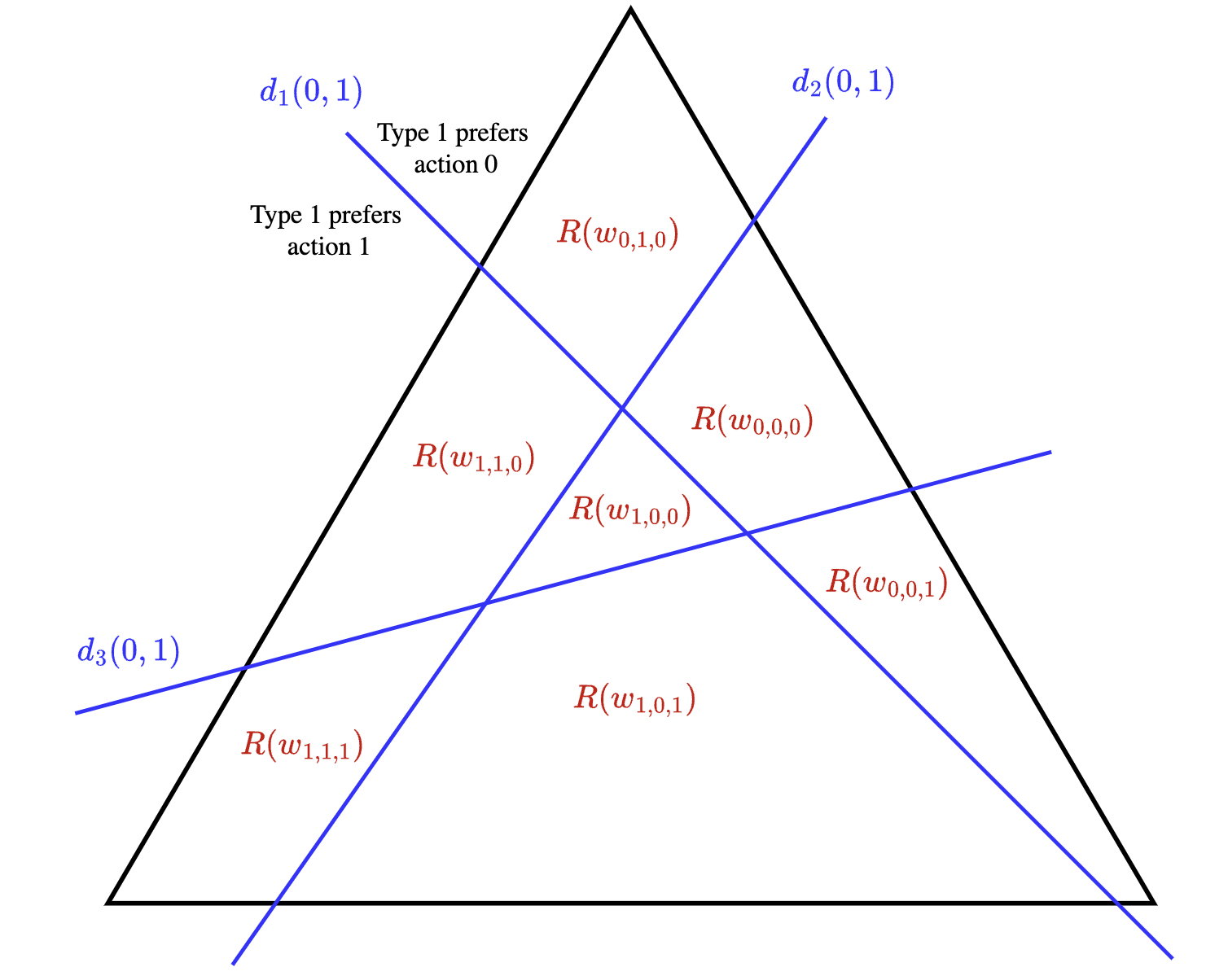}
    \end{minipage}%
    \hfill
    \begin{minipage}{0.35\textwidth}
        \caption{ A single-follower best-response region with $K = 3$ types and two follower actions and three leader actions -- $A = 2, L=3$. The triangle represents the probability simplex $\Delta(\La)$. The three hyperplanes defined by $d_{1}(0, 1)$, $d_{2}(1, 0)$ and $d_3(1, 0)$ partition the simplex into best-response regions. For example, in region $R(w_{0,1,1})$, the follower best-responds with action $0$ for type $1$, and action $1$ for types $2$ and $3$.}
        \label{fig:br_region}
    \end{minipage}
\end{figure}

\subsection{Proof of Lemma \ref{lem:br-region-number}}
\begin{proof}
We have $n$ followers each with $K$ type and $A$ actions.  Each follower has $K\binom{A}{2} \le K A^2$ advantage vectors, where each advantage vector $d_{\theta, a, a'}$ corresponds to a hyperplane in $\mathbb{R}^L$ that separates the leader's mixed strategy space $\Delta(\La) \subseteq \mathbb{R}^L$ into two halfspaces. In total, $n$ followers have $n K A^2$ hyperplanes.  Those hyperplanes divide $\mathbb R^L$ into at most $\Oh((nKA^2)^L) = \Oh(n^L K^L A^{2L})$ regions (see, e.g., \cite{halperin_arrangements_2017}).
Each non-empty best response region is one of such regions, so the total number is $\Oh(n^L K^L A^{2L})$. 
\end{proof}

\subsection{Proof of Lemma \ref{lem:br-region-enumeration}}
\begin{algorithm}[H]
\SetKwInOut{Input}{Input}
\DontPrintSemicolon
\caption{Best-Response Region Enumeration}
\label{alg:best_response_enumeration}
Let $\text{isFeasible}(x, i, \theta, a) = \ind{\forall a' \in \A, ~ \sum_{\ell\in\La}x(\ell)\big(v_i(\ell, a, \theta) - v_i(\ell, a', \theta) \big) \geq 0}$\;
Let $\text{findFeasible}(W) = \big\{x \in \Delta(\La) \, | \, \forall i \in [n], \theta \in \Theta, \text{isFeasible}(x, i, \theta, W[i,\theta]) = 1 \big\}$\;
Choose a random strategy $x_{init} \in \Delta(\La)$ \;
\For{$i \in [n]$, $\theta \in \Theta$} {
    \For{$a \in \A$} {
        \If{\text{isFeasible}$(x_{init}, i, \theta, a)$}{
            $W_{init}[i, \theta] = a$
        }
    }
}
Let $queue = [W_{init}]$, \quad mark $W_{init}$ as \emph{visited} \;
\While{$queue \ne \emptyset$} {
    $W = queue.pop()$ \;
    \For{$i \in [n]$, $\theta \in \Theta$} {
        \For{$a \in \A$ and $a \ne W[i, \theta]$} {
            Let $W' = W$ \; 
            Let $W'[i,\theta] = a$\;
            \If{$\text{findFeasible}(W) \ne \emptyset$ and $W'$ is not visited}{
                $queue.append(W')$,  \quad mark $W'$ as \emph{visited}
            }
        }
    }
}
\end{algorithm}

\begin{proof}
    We construct a graph $G = (V, E)$ where $V$ consists of the elements $W$, each representing a best-response region. An edge exists between two vertices $W$, $W' \in \W$ if and only if $W$ and $W'$ differ in exactly one entry. Traversing an edge corresponds to moving between adjacent best-response regions by crossing a hyperplane boundary. 

    We claim that this graph is connected. 
    Since each vertex $W$ corresponds to a best-response region defined by the inequalities in (\ref{eq:opt_strat_cons}), and the leader's strategy space is the $L$-dimensional probability simplex, the union of non-empty best response regions forms a partition of the strategy space. Because these regions are convex polytopes sharing boundaries, the adjacent structure defined by differing in one entry corresponds to crossing a shared facet. Starting from any non-empty region, we can traverse to any other by crossing shared facets through adjacent regions, so the graph is connected. 
    
    Thus, to enumerate all non-empty best-response regions, we can perform a graph search (e.g., breadth-first search or depth-first search) starting from any initial vertex $W$ to traverse all vertices in $\Oh(|W|)$ steps, which is $\Oh(n^L K^L A^{2L})$ by Lemma \ref{lem:br-region-number}. Specifically, at each vertex $W$, we examine all its adjacent $nKA$ vertices. For each adjacent vertex $W'$, we determine whether $R(W')$ is a non-empty region by solving a feasibility linear program defined by the constraints in (\ref{eq:opt_strat_cons}), which runs in $\polyn(n, K, A, L)$ time. Then, the total running time is $\polyn(n^L, K^L, A^L, L)$. We present the algorithm formally in Algorithm~\ref{alg:best_response_enumeration}. 
\end{proof}

\section{Appendix for Section \ref{section:type_feedback}}\label{app:type_feedback}
The following definitions will be used in the proofs for this section:

\begin{definition}[Total variation distance]
    \label{def:tv-distance}
    For two discrete distributions $\D$ and $\hat{ \D }$ over support $\Theta$, the total variation is half the $L_1$ distance between the two distributions: 
    $$\delta( \D, \hat{ \D } ) = \frac{1}{2} ||\D - \hat{\D} ||_1 = \frac{1}{2} \sum_{ \theta \in \Theta } |\D(\theta) - \hat{ \D }(\theta)|. $$
\end{definition}

\begin{definition}[Hellinger distance]
\label{def:hellinger-distance}
    For two discrete distributions $\D$ and $\hat{ \D }$ over support $\Theta$, the Hellinger distance is defined as
$$H(\D, \hat{ \D })=\frac{1}{\sqrt{2}}\|\sqrt{\D}-\sqrt{ \hat{ \D } }\|_2 = \frac{1}{ \sqrt{2} } \sqrt{ \sum_{\theta \in \Theta } \big( \D(\theta) - \hat{ \D }(\theta) \big)^2  }.  $$
\end{definition}

\subsection{Proof of Lemma \ref{lem:type-feedback-concentration}}
\label{proof:type-feedback-concentration}
    We rely on several key insights about the \emph{pseudo-dimension} of a family of functions, defined below: 
    \begin{definition}[Definition 10.2 in \cite{mohri_foundations_2012}]\label{def:pseudo_dim}
    Let $\mathcal G$ be a family of functions from input space $\mathcal Z$ to real numbers $\mathbb R$.
    \squishlist
        \item A set of inputs $\{z_1, \ldots, z_m\} \subseteq \mathcal Z$ is \emph{shattered} by $\mathcal G$ if there exists thresholds $t_1, \ldots, t_m \in \mathbb R$ such that for any sign vector $\bm \sigma = (\sigma_1, \ldots, \sigma_m ) \in \{-1, +1\}^m$, there exists a function $g \in \mathcal G$ satisfying $\text{sign}(g(x_i) - t_i) = \sigma_i  \text{ for all }  i = 1, ...m .$
        \item The size of the largest set of inputs that can be shattered by $\mathcal G$ is called the \emph{pseudo-dimension} of $\mathcal G$, denoted by $\mathrm{Pdim}(\mathcal G)$. 
    \squishend
    \end{definition}
    Given a family of functions with a finite pseudo-dimension, and samples $z^1, \ldots, z^N$ drawn from a distribution on the input space $\mathcal Z$, the empirical mean of any function in the family will, with high probability, be close to the true mean. Formally:
    \begin{theorem}[e.g., Theorem 10.6 in \cite{mohri_foundations_2012}]
    \label{thm:pseudo-dimension-sample-complexity}
    Let $\mathcal G$ be a family of functions from $\mathcal Z$ to $[0, 1]$ with pseudo-dimension $\mathrm{Pdim}(\mathcal G) = d$.  For any distribution $F$ over $\mathcal Z$, with probability at least $1-\delta$ over the random draw of $N$ samples $z^1, \ldots, z^N$ from $F$, the following holds for all $g \in \mathcal G$,
    \begin{align*}
        \Big|\, \E_{z\sim F}[g(z)] - \frac{1}{N}\sum_{i=1}^N g(z^i) \, \Big| ~ \le ~ \sqrt{\frac{2d \log 3N}{N}} + \sqrt{\frac{\log\frac{1}{\delta}}{2N}}. 
    \end{align*}
    \end{theorem}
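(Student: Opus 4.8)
The statement is the classical uniform-convergence bound for a real-valued function class whose complexity is controlled by its pseudo-dimension, so the plan is to follow the standard symmetrization route from statistical learning theory, carefully tracking where the pseudo-dimension $d$ enters. First I would consider the two-sided supremum $\Phi(z^1,\ldots,z^N) = \sup_{g \in \mathcal G}\big|\E_{z\sim F}[g(z)] - \frac1N\sum_{i=1}^N g(z^i)\big|$. Because every $g$ takes values in $[0,1]$, replacing any single sample $z^i$ changes the empirical average $\frac1N\sum_i g(z^i)$ by at most $1/N$, so $\Phi$ satisfies the bounded-differences property with increments $1/N$. Applying McDiarmid's inequality then yields, with probability at least $1-\delta$, the bound $\Phi \le \E[\Phi] + \sqrt{\tfrac{\log(1/\delta)}{2N}}$, which already accounts for the second term of the claimed estimate and handles both directions of the deviation simultaneously.

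The heart of the argument is bounding $\E[\Phi]$ in terms of $d$. I would introduce a ghost sample and Rademacher signs $\sigma_i \in \{-1,+1\}$ and apply the standard symmetrization inequality to get $\E[\Phi] \le 2\,\mathfrak R_N(\mathcal G)$, where $\mathfrak R_N(\mathcal G) = \E\big[\sup_{g}\big|\frac1N\sum_i \sigma_i g(z^i)\big|\big]$ is the Rademacher complexity. It then remains to control $\mathfrak R_N(\mathcal G)$ by the pseudo-dimension, and this is the step I expect to be the main obstacle, since $\mathcal G$ is an infinite, real-valued class rather than a finite or binary one. The key combinatorial input is the pseudo-dimension analogue of the Sauer–Shelah lemma: the subgraph sets $\{(z,t) : g(z) > t\}$ form a set system of VC dimension $\mathrm{Pdim}(\mathcal G) = d$, so on any $N$ points the number of distinct thresholded behaviors is at most $\sum_{i=0}^{d}\binom{N}{i} \le (eN/d)^d$. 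Converting this growth bound into a bound on $\mathfrak R_N(\mathcal G)$ requires either a covering-number argument (the Haussler/Pollard entropy bounds for real-valued classes) or Massart's finite-class lemma applied to a suitable discretization of the values, and yields $\mathfrak R_N(\mathcal G) = \Oh\big(\sqrt{\tfrac{d \log(eN/d)}{N}}\big)$.

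Finally I would assemble the pieces. Combining the symmetrization estimate on $\E[\Phi]$ with the Rademacher bound produces a first term of order $\sqrt{\tfrac{d\log(eN/d)}{N}}$, and since $e < 3$ we have $eN/d \le 3N$ for every $d \ge 1$, so $\log(eN/d) \le \log(3N)$ and the term is dominated by $\sqrt{\tfrac{2d\log 3N}{N}}$ exactly as stated; adding back the McDiarmid deviation term $\sqrt{\tfrac{\log(1/\delta)}{2N}}$ completes the bound. The delicate points will be (i) pinning down the constant in front of the entropy term so that it matches $\sqrt{2d\log 3N/N}$ precisely, which hinges on the exact covering/Massart estimate invoked, and (ii) carrying out the real-valued-to-combinatorial reduction through the subgraph sets cleanly, so that the pseudo-dimension alone appears as the complexity measure rather than a full metric-entropy (Dudley) integral.
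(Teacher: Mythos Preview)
The paper does not prove this theorem at all: it is stated as a known result, cited from \cite{mohri_foundations_2012} (Theorem 10.6), and used as a black box in the proofs of Lemmas~\ref{lem:type-feedback-concentration} and~\ref{lem:UCB-concentration}. So there is no ``paper's own proof'' to compare against.

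Your sketch is the standard textbook derivation (McDiarmid for the deviation term, symmetrization to reduce $\E[\Phi]$ to a Rademacher complexity, then a Sauer--Shelah/covering-number bound on the Rademacher complexity of a class with pseudo-dimension $d$), and it is the route that Mohri--Rostamizadeh--Talwalkar themselves take. The only caveat is the one you already flag: getting the constant in the first term to land exactly at $\sqrt{2d\log 3N/N}$ requires invoking the specific empirical Rademacher bound $\hat{\mathfrak R}_N(\mathcal G)\le \sqrt{\tfrac{2d\log(eN/d)}{N}}$ in the form Mohri et al.\ state it, rather than a generic $\Oh(\cdot)$ estimate. But since the paper only \emph{uses} this inequality and never proves it, a precise match of constants is not something the paper itself establishes.
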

    
    Consider the family of linear functions over $\mathbb{R}^{L}$: $\mathcal G = \{g_x : z \to \langle x, z \rangle \mid x \in \mathbb R^L \}.$
    It is known that the pseudo-dimension of this family is $L$: 
    \begin{lemma}[e.g., Theorem 10.4 in \cite{mohri_foundations_2012}]
    \label{lem:linear-function-Pdim}
    The family of linear functions $\{g_x: z \to \inner{x}{z} \mid x \in \reals^L\}$ in $\mathbb{R}^{L}$ has pseudo-dimension $L$. 
    \end{lemma}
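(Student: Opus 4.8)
The plan is to establish $\mathrm{Pdim}(\mathcal{G}) = L$ directly from Definition \ref{def:pseudo_dim} by proving the two matching inequalities, where $\mathcal{G} = \{g_x : z \mapsto \inner{x}{z} \mid x \in \reals^L\}$. For the lower bound $\mathrm{Pdim}(\mathcal{G}) \ge L$, I would exhibit an explicit shattered set of size $L$: take the standard basis vectors $z_i = e_i \in \reals^L$ for $i \in [L]$ together with the thresholds $t_i = 0$. Then $g_x(z_i) - t_i = \inner{x}{e_i} = x_i$, so given any sign vector $\sigma \in \{-1,+1\}^L$ the parameter choice $x = \sigma$ yields $\mathrm{sign}(g_x(z_i) - t_i) = \sigma_i$ for every $i$. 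Every sign pattern is therefore realized, the set is shattered, and $\mathrm{Pdim}(\mathcal{G}) \ge L$.

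For the upper bound $\mathrm{Pdim}(\mathcal{G}) \le L$, I would show that no set of $L+1$ points can be shattered. Fix arbitrary points $z_1, \ldots, z_{L+1} \in \reals^L$ and thresholds $t_1, \ldots, t_{L+1}$. Since $L+1$ vectors in the $L$-dimensional space $\reals^L$ are linearly dependent, there exist coefficients $\beta_1, \ldots, \beta_{L+1}$, not all zero, with $\sum_i \beta_i z_i = 0$. The crucial consequence is that for every parameter $x$ the centered values $y_i(x) := \inner{x}{z_i} - t_i$ satisfy
\[
\sum_{i=1}^{L+1} \beta_i\, y_i(x) = \inner{x}{\textstyle\sum_i \beta_i z_i} - \sum_i \beta_i t_i = -\sum_i \beta_i t_i =: C,
\]
a constant independent of $x$. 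Writing $P = \{i : \beta_i > 0\}$ and $N = \{i : \beta_i < 0\}$ (at least one nonempty), if the point set were shattered I could realize the sign pattern $\sigma_i = \mathrm{sign}(\beta_i)$ on $P \cup N$, which forces $\beta_i y_i(x) > 0$ throughout $P \cup N$ and hence $C > 0$; and I could separately realize the opposite pattern $\sigma_i = -\mathrm{sign}(\beta_i)$, which forces $\beta_i y_i(x) < 0$ throughout $P \cup N$ and hence $C < 0$. Since $C$ is a single fixed number, these are contradictory, so at least one of the two patterns is unrealizable and the set is not shattered. Combining with the lower bound gives $\mathrm{Pdim}(\mathcal{G}) = L$.

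The main obstacle is isolating the right invariant in the upper bound. The key observation is that linear dependence among the $z_i$ in $\reals^L$ survives the affine shift by the thresholds, pinning the combination $\sum_i \beta_i y_i(x)$ to the single constant $C$ for all $x$; once this is in hand the contradiction is immediate, because the fixed sign of $C$ is incompatible with a class rich enough to flip the signs of all $y_i$ simultaneously. I would take care to note that shattering demands $y_i(x) \neq 0$ with a prescribed sign in $\{-1,+1\}$, which justifies the strict inequalities used above, and that indices with $\beta_i = 0$ contribute nothing to the constraint and may be assigned arbitrary signs without affecting the argument.
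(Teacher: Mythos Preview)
Your argument is correct. The lower bound via the standard basis with zero thresholds is immediate, and your upper bound is the classical linear-dependence (Radon-style) argument: the relation $\sum_i \beta_i z_i = 0$ forces the single linear constraint $\sum_i \beta_i y_i(x) = C$ on the centered values, and the two opposed sign patterns on the support of $\beta$ then pin $C$ to be simultaneously positive and negative. Your handling of the edge cases (indices with $\beta_i = 0$, and the requirement that $y_i(x) \ne 0$ for a sign in $\{-1,+1\}$ to be realized) is accurate.

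As for comparison with the paper: the paper does not actually supply a proof of this lemma. It is stated as a citation to Theorem~10.4 of \cite{mohri_foundations_2012} and used as a black box inside the proof of Lemma~\ref{lem:type-feedback-concentration}. Your write-up is essentially the standard textbook argument that the citation points to, so you are filling in what the paper leaves as a reference rather than taking a genuinely different route.
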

    \begin{proof}[Proof of Lemma \ref{lem:type-feedback-concentration}]
    We now have the tools to prove Lemma \ref{lem:type-feedback-concentration}. For any non-empty best-response region defined by $W \in \W$, let $\btheta^{1}$, ..., $\btheta^{t} \sim \bD$ be $t$ i.i.d samples. For each sample $\btheta^i$, we can directly compute $$z_{W, \btheta^i } = \Big( u(1, W(\btheta^i)), ..., u(L, W(\btheta^i)) \Big).$$  
    Fix any leader strategy $x \in R(W)$.  By Lemma~\ref{lem:leader-function-linear},
    the leader's expected utility by using strategy $x$ is $U_{\bD}(x) = \E_{\btheta\sim \bD}[\langle x, z_{W, \btheta}\rangle]$, which is the expectation of the linear function $g_x(z_{W, \btheta }) = \inner{ x }{ z_{W, \btheta } }$. 
    Therefore, by Theorem~\ref{thm:pseudo-dimension-sample-complexity} and Lemma \ref{lem:linear-function-Pdim}, we have 
    \begin{align*}
        \Pr\left[ \forall x \in R(W), ~ \Big|\E_{ \btheta \sim \bD  }[ g_{x}(  z_{W, \btheta } ) ] - \frac{1}{t} \sum_{i = 1}^{t} g_x( z_{W, \btheta^{i} } ) \Big| \leq \sqrt{ \frac{2 L \log 3 t }{ t } } + \sqrt{ \frac{\log \frac{1}{\delta'} }{2t} }  ~ \right] \geq 1 - \delta'.
    \end{align*}
    By definition, $U_{ \bD }(x) = \E_{ \btheta \sim \bD  }[ g_{x}( z_{W, \btheta } ) ]$ and $\hat{U}^{t}(x) = \frac{1}{t} \sum_{i = 1}^{t} g(z_{W, \btheta^{t} } ) $. So,
    %
    \begin{align*}
        \Pr\left[ 
            \forall x \in R(W), ~ \Big| U_{\bD}(x) - \hat{U}^{t}(x) \Big|
            \leq \sqrt{\frac{2L \log 3t}{t}} + \sqrt{\frac{\log\frac{1}{\delta'}}{2t}}
        ~ \right] \geq 1 - \delta'.
    \end{align*}
    Let $\delta' = \frac{\delta}{|\W|}$. By the union bound, with probability at least $1 - \delta$, the following bound holds simultaneously for all $W \in \W$ and $x \in R(W)$:
    \begin{align*}
        \Big| U_{\bD}(x) - \hat U^t(x) \Big| \, \le \, \sqrt{\frac{2L \log(3t)}{t}} + \sqrt{\frac{\log \frac{|\W|}{\delta}}{2t}} \, = \, \Oh\bigg( \sqrt{\frac{L \log t}{t}} + \sqrt{\frac{L \log(nKA) + \log(\frac{1}{\delta})}{t}} \bigg),
    \end{align*}
    where we used the fact $|\W| = O(n^{L}K^{L}A^{2L})$ from Lemma \ref{lem:br-region-number}.
\end{proof}

\subsection{Proof of Theorem \ref{theorem:full-feedback-general}}
\label{proof:full-feedback-general}
\paragraph{Analysis of $\Oh(\sqrt{K^nT})$ Regret:}
Consider a Bayesian Stackelberg game with $n$ followers each with $K$ types, 
with joint type distribution $\bD$. 
Let $U(x, \bD) = \E_{\btheta\sim\bD}[u(x, \bbr(\btheta, x) )]$ be the expected utility of the leader playing mixed strategy $x$ when the type distribution is $\bD$. Let $x^* = \arg \max_{x  \in \Delta(\La) } U(x, \bD)$ be the optimal strategy for $\bD$. At each round $t$, Algorithm \ref{alg:full-feedback_general} chooses the optimal strategy $x^t = \argmax_{x \in \Delta(\La)} U(x, \hat{\bD}^{t-1})$ for the empirical distribution $\hat{\bD}^{t-1}$ over $t-1$ samples. The total expected regret is equal to 
\begin{equation*}
\E[\Reg(T)] = \sum_{t = 1}^{T} \Ex\Big[ \underbrace{U(x^*, \bD) - U(x^t, \bD)}_{\text{single-round regret $r(t)$}} \Big],
\end{equation*}
We upper bound the single-round regret $r(t)$ by the total variation distance (Definition \ref{def:tv-distance}) between $\bD$ and $\hat \bD^{t-1}$: 
\begin{claim} \label{claim:single-round-regret-TV-distance}
$r(t) = U(x^*, \bD) - U(x^t, \bD) \leq 4 \delta( \bD, \hat{\bD}^{t-1})$. 
\end{claim}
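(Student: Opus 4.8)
The plan is the standard ``stability of the empirical maximizer'' argument. The crucial structural point to exploit is that the best-response map $\bbr(\btheta, x)$ depends on the leader strategy $x$ but \emph{not} on the type distribution; consequently, for any fixed $x \in \Delta(\La)$ the functions $U(x, \cdot)$ evaluated at $\bD$ and at $\hat\bD^{t-1}$ are expectations of one and the same bounded function $\btheta \mapsto u(x, \bbr(\btheta, x)) \in [0,1]$ under the two distributions. This is what lets total variation enter.

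First I would write the three-term telescoping decomposition
\begin{align*}
r(t) = U(x^*, \bD) - U(x^t, \bD) = \big[U(x^*,\bD) - U(x^*,\hat\bD^{t-1})\big] + \big[U(x^*,\hat\bD^{t-1}) - U(x^t,\hat\bD^{t-1})\big] + \big[U(x^t,\hat\bD^{t-1}) - U(x^t,\bD)\big].
\end{align*}
The middle bracket is at most $0$: by the definition of Algorithm~\ref{alg:full-feedback_general}, $x^t \in \argmax_{x \in \Delta(\La)} U(x, \hat\bD^{t-1})$, whereas $x^*$ is merely one feasible point. Hence $r(t)$ is bounded by the sum of the first and third brackets, each of which is at most $\sup_{x \in \Delta(\La)} \big| U(x,\bD) - U(x,\hat\bD^{t-1})\big|$.

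Next I would control this supremum. Fixing $x$ and writing $f_x(\btheta) := u(x, \bbr(\btheta, x)) \in [0,1]$, we have $U(x,\bD) - U(x,\hat\bD^{t-1}) = \sum_{\btheta \in \Theta^n} \big(\bD(\btheta) - \hat\bD^{t-1}(\btheta)\big) f_x(\btheta)$, so since $|f_x| \le 1$ and using Definition~\ref{def:tv-distance},
\begin{align*}
\big| U(x,\bD) - U(x,\hat\bD^{t-1}) \big| \le \sum_{\btheta \in \Theta^n} \big|\bD(\btheta) - \hat\bD^{t-1}(\btheta)\big| = 2\,\delta(\bD, \hat\bD^{t-1}).
\end{align*}
Combining with the previous step gives $r(t) \le 2 \cdot 2\,\delta(\bD,\hat\bD^{t-1}) = 4\,\delta(\bD,\hat\bD^{t-1})$, as claimed. (One could sharpen the per-term estimate to $\delta(\bD,\hat\bD^{t-1})$ via the variational characterization of total variation for functions with range in an interval of length $1$, improving the constant to $2$, but $4$ suffices for the downstream $\Oh(\sqrt{K^nT})$ bound.)

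I do not expect a genuine obstacle here; this is a textbook ERM-type stability inequality. The only thing to be careful about is the point emphasized at the outset: because $\bbr(\btheta, x)$ does not depend on the distribution, for a fixed $x$ we are honestly comparing two expectations of the same function, which is exactly what makes the total-variation bound valid. If the followers' best responses were instead formed from beliefs derived from $\bD$ (as in a Bayesian-persuasion-with-updating model), this step would fail and a different argument would be needed.
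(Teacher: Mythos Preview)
Your proof is correct and follows essentially the same approach as the paper: the same three-term telescoping decomposition, the same use of the optimality of $x^t$ under $\hat\bD^{t-1}$ to drop the middle term, and the same $|f_x|\le 1$ bound to get $2\delta(\bD,\hat\bD^{t-1})$ for each remaining term. Your remark that the constant could be sharpened to $2$ via the variational characterization is a nice aside the paper does not make.
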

\begin{proof}
\begin{align}
    r(t) 
    & = U(x^*, \bD) - U(x^t, \bD)  \nonumber \\ 
    &= U(x^*, \bD) - U(x^*, \hat{\bD}^{t - 1}) 
     + U(x^*, \hat{\bD}^{t - 1}) - U(x^t, \hat{\bD}^{t - 1}) +  U(x^t, \hat{\bD}^{t - 1}) - U(x^t, \bD) \nonumber \\ 
    &\leq U(x^*, \bD) - U(x^*, \hat{\bD}^{t - 1})   
     + 0 + U(x^t, \hat{\bD}^{t - 1}) - U(x^t, \bD)  \label{eq:middle-term-less-zero}
\end{align}
where (\ref{eq:middle-term-less-zero}) follows from $U(x^*, \hat{\bD}^{t - 1}) \;-\; U(x^t, \hat{\bD}^{t - 1}) \leq 0$ because $x^t$ maximizes $U(x, \hat{ \bD }^{t - 1} )$. We bound the first term in Equation (\ref{eq:middle-term-less-zero}) as follows:
\begin{align*}
    U(x^*, \bD) \,-\, U(x^*, \hat{\bD}^{t - 1})  
    &= \sum_{\theta \in \Theta} \bD(\btheta)\,u\bigl(x,\bbr(\btheta,x)\bigr)  
     \,-\, \sum_{\btheta \in \Theta} \hat{\bD}^{t - 1}(\theta)\,u\bigl(x,\bbr(\btheta,x)\bigr)  
    \nonumber \\ 
    &= \sum_{\btheta \in \Theta}  
    \Bigl(\bD(\btheta) - \hat{\bD}^{t - 1}(\btheta)\Bigr)\,
    u\bigl(x,\bbr(\btheta,x)\bigr)  
    \\ 
    &\leq \sum_{\btheta \in \Theta}  
    \Bigl|\bD(\btheta) - \hat{\bD}^{t - 1}(\btheta)\Bigr|\,  
    u(x, \bbr(\btheta, x))  
    \\ 
    &\leq \sum_{\btheta \in \Theta}  
    \Bigl|\bD(\btheta) - \hat{\bD}^{t - 1}(\btheta)\Bigr| \cdot 1
    \nonumber ~  = ~ 2\delta( \bD, \hat{ \bD }^{t - 1} ). 
\end{align*}
By a symmetrical argument, the second term in Equation~(\ref{eq:middle-term-less-zero}) is also bounded by $2 \delta( \bD, \hat{ \bD }^{ t- 1} )$.
\end{proof}

Using Claim \ref{claim:single-round-regret-TV-distance} and taking expectation, we have 
\begin{align*}
     \mathbb{E}[r(t)] ~ \leq~ 4 \mathbb{E}[ \delta( \bD, \hat{\bD}^{t - 1} ) ]. 
\end{align*}
According to \cite{canonne2020shortnotelearningdiscrete}, for distributions with support size $K^n$, $\mathbb{E}[ \delta( \bD, \hat{\bD}^{t - 1} ) ] \le \Oh(\sqrt{\frac{K^n}{t-1}})$. Thus, we have $\mathbb{E}[r(t)] \le \Oh\big( \sqrt{ \frac{K^n}{t - 1} } \big)$. 
Using the inequality $\sum_{t = 1}^{T} \frac{1}{ \sqrt{t} } \leq 2 \sqrt{T}$, we obtain 
\begin{align*}
    \E[\Reg(T)] ~ = ~ \sum_{t = 1}^{T} \Ex[r(t)] ~ \leq ~ \Oh\left(\sum_{t = 1}^{T} \sqrt{ \frac{K^n}{t} } \right) ~ \leq ~ \Oh( 2 \sqrt{K^nT}  ) ~ = ~ \Oh( \sqrt{K^nT}). 
\end{align*}
%
%

\paragraph{Analysis of $\Oh(\sqrt{LT\log(nKAT})$ Regret:}
Consider round $t \geq 2$. By Lemma \ref{lem:type-feedback-concentration}, we have that with probability at least $1 - \delta$:
\begin{align*}
    \Big|\, U_{\bD}(x)- \hat U^t(x) \, \Big| ~ \le ~ \Oh\bigg( \sqrt{\frac{L \log t}{t}} + \sqrt{\frac{L \log(nKA) + \log(\frac{1}{\delta})}{t}} \bigg), \quad \forall x \in \Delta(\La). 
\end{align*}
Suppose this event happens. Then, the regret of the algorithm at round $t$ is bounded as follows: 
\begin{align*}
    r(t) & ~ = ~ \E_{\btheta \sim \bD} \big[ u(x^*, \bbr(\btheta, x^*)) - u(x^t, \bbr(\btheta, x^t)) \big] \\
    & ~ = ~ U_{\bD}(x^*) - U_{\bD}(x^t) \\
    & ~ = ~ U_{\bD}(x^*) - \hat U^{t-1}(x^*) + \hat U^{t-1}(x^*) - \hat U^{t-1}(x^t) + \hat U^{t-1}(x^t) -  U_{\bD}(x^t)  \\
    & ~ \le ~ \hat U^{t-1}(x^*) - \hat U^{t-1}(x^t) + 2 \cdot \Oh\bigg( \sqrt{\frac{L \log (t-1)}{t-1}} + \sqrt{\frac{L \log(nKA) + \log(\frac{1}{\delta})}{t-1}} \bigg) \\
    & ~ \le ~ 0 + 2 \cdot \Oh\bigg( \sqrt{\frac{L \log (t-1)}{t-1}} + \sqrt{\frac{L \log(nKA) + \log(\frac{1}{\delta})}{t-1}} \bigg), 
\end{align*}
where the last inequality follows from $\hat{U}^{t-1}(x^*) - \hat U^{t-1}(x^t) \leq 0$ because the algorithm selects the strategy $x^t$ that maximizes the empirical utility $\hat{U}^{t - 1}(x)$. Then: 
\begin{align}
    \mathbb{E}[\text{Reg}(T)] 
    &= \mathbb{E} \bigg[ \sum_{t=1}^T r(t) \bigg] \nonumber  \\
    &\le \sum_{t=1}^T (1 - \delta) \cdot \Oh\bigg( \sqrt{\frac{L \log t}{t}} 
    + \sqrt{\frac{L \log(nKA) + \log(\frac{1}{\delta})}{t}} \bigg) + \delta T \label{eq:upper-bound2-lotp} \\
    &\le \Oh\bigg( \sqrt{T L \log T} + \sqrt{T \big(L \log(nKA) + \log(\frac{1}{\delta}) \big)} \bigg) + \delta T \label{eq:upper-bound2-inequality} \\ 
    &\le \Oh\big( \sqrt{T L (\log T + \log(nKA))} \big) \tag{Using inequality $\sqrt{a} + \sqrt{b} \leq \sqrt{2(a + b)}$ }  \nonumber \\
    &= \Oh\big( \sqrt{T L \log(nKAT)} \big). \nonumber 
\end{align}

\noindent Equation~(\ref{eq:upper-bound2-lotp}) follows from the law of total expectation and the fact that the single-round regret is bounded by 1. Equation~(\ref{eq:upper-bound2-inequality}) follows from the known inequality $\sum_{t = 1}^{T} \frac{1}{\sqrt{t}} \leq 2 \sqrt{T}$. We set $\delta = \frac{1}{T}$. 

\paragraph{Runtime Analysis:}
As for the computational complexity of this algorithm, note that Lemma \ref{lem:br-region-number} states that the number of non-empty best-response regions is $\Oh(n^L K^L A^{2L})$. As shown by Equation (\ref{equation:empirical_opt_lp}), we can compute the optimal strategy within each best-response region using a linear program with $L$ variables and at most $\mathrm{poly}(n^L, K^L, A^L, L, T)$ number of constraints. Further, evaluating each constraint and the objective function can also be accomplished in this time. Since each linear program can be solved in $\mathrm{poly}(n^L, K^L, A^L, L, T)$ time, and we run $\Oh(n^L K^L A^{2L})$ linear programs at each round, with at most $T$ rounds, Algorithm~\ref{alg:full-feedback_general} runs $\mathrm{poly}((nKA)^L LT)$ time.

\subsection{Proof of Theorem \ref{theorem:type-feedback_independent}}
\label{proof:type-feedback_independent}
\begin{proof}
Let $\bD = \prod_{i = 1}^{n} \D_i$ denote the distribution over independent types.
According to Claim \ref{claim:single-round-regret-TV-distance}, the single-round regret $r(t) = U(x^*, \bD) - U(x^t, \bD)$ satisfies
\begin{align*}
    r(t) ~ \leq~  \Oh( \delta( \bD, \hat{ \bD }^{t-1} ) ),
\end{align*}
where $\hat \bD^{t-1} = \prod_{i=1}^n \hat \D^{t-1}_i$ is the product of the empirically computed marginal type distributions. We will use the following properties of Hellinger Distance (Definition \ref{def:hellinger-distance}): 
\squishlist
    \item \citep{guo2020generalizingcomplexhypothesesproduct} For any two distributions $\bD$ and $\hat{\bD}$, 
    \begin{equation}\label{eq:hellinger_b1}
        H^2(\bD, \hat{ \bD }) \leq \delta(\bD, \hat{ \bD }) \leq \sqrt{2} H( \bD , \hat{ \bD }).
    \end{equation}
    \item \citep{guo2020generalizingcomplexhypothesesproduct} If both $\bD$ and $\hat{\bD}$ are product distributions, i.e.~$\bD = \prod_{i = 1}^{n} \D_i $ and $\hat{ \bD } = \prod_{i = 1}^{n}  \hat{ \D }_i$, then: 
    \begin{equation}\label{eq:hellinger:b2}
        H^2(\mathbf{\bD}, \hat{ \bD }) \leq  \sum_{i=1}^n H^2\left(\D_i, \hat{ \D }_i \right). 
    \end{equation}
    \item \citep{canonne2020shortnotelearningdiscrete} For a distribution $\D$ with support size $K$, the empirical distribution $\hat{\D}^t$ over $t$ samples from $\D$ satisfies: 
    \begin{equation}\label{eq:hellinger_b3}
        \E[ H^2( \D, \hat{\D}^t ) ] \leq \frac{ K }{ 2t }. 
    \end{equation}
\squishend
We now upper bound the single-round regret $r(t+1)$ in expectation:
\begin{align*}
    \E[r(t+1)] & \le \Oh \bigl( \E\big[ \delta(\bD, \hat{\bD}^t)  \big] \bigr)
    \\
    & \le  \Oh \big( \E\big[ H(\bD, \hat{\bD}^t)  \big] \big)
    \tag*{\text{by (\ref{eq:hellinger_b1}})}\\
    & \le \Oh \big( \sqrt{ \E\big[ H^2(\bD, \hat{\bD}^t) \big] }  \big) \tag*{\text{because $\E[X^2] \ge (\E[X])^2$}} \\
    & \le \Oh \left( \sqrt{ \E\Big[ \sum_{i=1}^n H^2(\D_i, \hat \D_i^t) \Big] }  \right) \tag*{\text{by (\ref{eq:hellinger:b2}})} \\
    & \le \Oh \left( \sqrt{ \frac{nK}{2t} }  \right) \tag*{\text{by (\ref{eq:hellinger_b3}})}.
\end{align*}
Using the inequality $\sum_{t = 1}^{T} \frac{1}{ \sqrt{t} } \leq 2 \sqrt{T}$, we obtain 
\begin{align*}
    \E[\Reg(T)] ~ = ~ \sum_{t = 1}^{T} \Ex[r(t)] ~ \leq ~ \sum_{t = 1}^{T} \Oh\left( \sqrt{ \frac{nK}{t} } \right)  ~ \le ~ \Oh( \sqrt{nK T}). 
\end{align*}
\end{proof}

\subsection{$\Omega(\sqrt{\min\{L, K\}T})$ Lower Bound in the Single-Follower Case}
\label{appendix:type-feedback-lower-bound-single-follower}
In this section, we prove a lower bound of $\Omega(\sqrt{\min\{L, K\}T})$ on the expected regret of any algorithm in the case of a single-follower $(n=1)$, formalized in Theorem \ref{theorem:type-feedback-lower-bound-single-follower}. 
\begin{theorem}\label{theorem:type-feedback-lower-bound-single-follower}
For single-follower Bayesian Stackelberg games where the follower has $K$ types and the leader has $L$ actions, the expected regret of any type-feedback online learning algorithm is at least $\Omega(\sqrt{\min\{L, K\} T})$.  
\end{theorem}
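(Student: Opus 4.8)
The plan is to reduce a hard distribution-learning / hypothesis-testing problem to the single-follower Bayesian Stackelberg game. Set $m = \min\{L, K\}$. I would construct a family of games in which the leader's strategy space effectively encodes a point on the probability simplex $\Delta([m])$, the follower's type distribution $\D$ is an unknown distribution over $m$ ``informative'' types, and the leader's per-round utility, up to an affine transformation, equals $U_\D(x) = \langle x, p \rangle$ where $p = (p_1, \dots, p_m)$ are the probabilities the follower plays each of $m$ distinguished actions. The follower's utility function $v$ is chosen so that the best-response region structure is trivial: for each type $k \in [m]$, the follower of type $k$ best-responds by playing its ``own'' action $a_k$ irrespective of $x$ (this is easy to arrange by making $v(\ell, a_k, k)$ constant in $\ell$ and strictly larger than all other $v(\ell, a', k)$), so $\bbr(\theta, x) = a_\theta$ deterministically. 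Then $u(x, a_k)$ is set to $x(k)$ (the leader's probability on action $k$), giving $U_\D(x) = \sum_k \D(k)\, x(k) = \langle x, p\rangle$ with $p = (\D(1), \dots, \D(m))$. Crucially this makes $U_\D$ globally linear, so optimizing it is trivial \emph{if} $p$ is known: $x^* = e_{k^*}$ for $k^* \in \argmax_k p_k$. The learning difficulty comes entirely from not knowing $p$, i.e.\ from having to learn (the argmax of) an unknown distribution over $[m]$ from samples — and under type feedback the leader sees exactly one sample $\theta^t \sim \D$ per round, so it observes $t-1$ i.i.d.\ samples after round $t$.

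The core step is a standard information-theoretic lower bound for ``best-arm-identification-flavored'' problems. I would pick a prior over instances: let $p = (\tfrac1m + \epsilon\,\sigma_1\gamma, \dots, \tfrac1m + \epsilon\,\sigma_m\gamma)$-style perturbations, or more cleanly the classical construction where one coordinate $j$ (chosen uniformly) is boosted to $\tfrac1m + \Delta$ and the rest are $\tfrac1m - \tfrac{\Delta}{m-1}$, with $\Delta \asymp \sqrt{m/T}$. Since the only feedback is a sample $\theta^t \sim \D$ per round, the leader's information after $T$ rounds is a string of $T$ i.i.d.\ categorical samples; distinguishing ``coordinate $j$ is the special one'' from ``coordinate $j'$ is'' requires $\Omega(m/\Delta^2)$ samples by a Pinsker / KL argument (the KL divergence between two such instances is $O(\Delta^2)$, so after $T$ rounds the total information is $O(T\Delta^2)$, which must be $\Omega(1)$ to identify $j$). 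With $\Delta \asymp \sqrt{m/T}$ the learner cannot identify the best action with constant probability; on every round where it plays $x^t$ concentrated on a wrong coordinate it pays per-round regret $\Theta(\Delta)$, and a constant fraction of rounds are ``wrong,'' giving $\E[\Reg(T)] = \Omega(\Delta \cdot T) = \Omega(\sqrt{m T}) = \Omega(\sqrt{\min\{L,K\}\,T})$. Formally I would run this through a reduction to a $\chi^2$/KL lower bound for estimation (e.g.\ via Assouad or Le Cam's two-point method on pairs of instances), exactly mirroring how Theorem \ref{theorem:type-feedback-lower-bound} is later described as ``reducing distribution learning to a single-follower Bayesian Stackelberg game.''

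A few technical points need care. First, utilities must lie in $[0,1]$: the construction with $u(x,a_k) = x(k)$ already satisfies this, and the perturbation sizes $\Delta$ are small. Second, I must verify the tie-breaking-in-favor-of-the-leader rule doesn't collapse the instance — it doesn't here, since for each type the best response is strictly unique by construction, so no ties arise. Third, the bound must respect $m = \min\{L,K\}$: if $K < L$ there are only $K$ informative types but the simplex has dimension $L-1$, which is harmless — the leader just has ``extra'' useless coordinates; if $L < K$ I only use $L$ distinguished types/actions and let the remaining $K-L$ types be ``dummies'' (say all mapping to a single extra action with zero leader utility), so effectively $m = L$ coordinates matter. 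Fourth, the construction should make the optimal \emph{value} and the gap structure clean enough that single-round regret is genuinely $\Theta(\Delta)$ whenever $x^t$ puts $\Omega(1)$ mass off the true argmax coordinate — this follows since $U_\D(x^*) - U_\D(x) = \langle x^* - x, p\rangle \ge \Delta \cdot (1 - x(j^*))$ in the peaked instance.

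The main obstacle I anticipate is \emph{not} the information-theoretic core (which is textbook best-arm identification) but packaging it so that the reduction is genuinely to a valid single-follower BSG respecting all the model's constraints — in particular engineering a follower utility function $v(\ell,a,\theta)$ whose induced best-response map is the desired fixed $\theta \mapsto a_\theta$ with \emph{no dependence on $x$} (so that $U_\D$ is linear and the hardness is purely statistical, not game-theoretic), and simultaneously ensuring the leader utility realizes $U_\D(x) = \langle x, p\rangle$ over the full simplex. Once that gadget is pinned down, the rest is a direct application of Le Cam / Assouad with the i.i.d.-per-round sampling structure, summing $\sum_t \Theta(\Delta)$ over a constant fraction of rounds.
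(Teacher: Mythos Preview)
Your construction has a genuine gap that makes the argument give the \emph{wrong} rate. The error is in the KL step: for two ``peaked'' instances $\D_j$ and $\D_{j'}$ on $[m]$ with base level $\tfrac{1}{m}$ and bump $\Delta$, a $\chi^2$ computation gives
\[
\mathrm{KL}(\D_j\|\D_{j'}) \;\asymp\; \frac{(\D_j(j)-\D_{j'}(j))^2}{\D_{j'}(j)} \;\asymp\; \frac{\Delta^2}{1/m} \;=\; m\Delta^2,
\]
not $O(\Delta^2)$ as you assert. With the corrected KL, indistinguishability after $T$ samples forces $\Delta \lesssim 1/\sqrt{mT}$ (or $\sqrt{\log m/(mT)}$ via Fano), and your regret bound becomes $\Omega(T\Delta)=\Omega(\sqrt{T/m})$, off by a factor of $m$ from the target. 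In fact your instance admits an $\tilde O(\sqrt{T/m})$ \emph{upper} bound: by Bernstein, $\max_k|\hat\D(k)-\D(k)|=O(\sqrt{\log m/(mt)})$ when all $\D(k)\approx 1/m$, so ``play the empirical mode'' has total regret $\tilde O(\sqrt{T/m})$. The root cause is structural: once you force $\bbr(\theta,x)$ to be independent of $x$, $U_\D$ is globally linear with range $\max_k\D(k)-\min_k\D(k)=O(\Delta)$, and the per-round regret can never exceed that, while each categorical sample carries $\Theta(m\Delta^2)$ bits about which coordinate is peaked.

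The paper's proof avoids this by making the follower's best response \emph{depend} on $x$, which is exactly the mechanism you chose to disable. The type space is paired as $\{\pm 1,\ldots,\pm c\}$ with $2c=\min\{L,K\}$; a follower of type $+j$ plays ``Good'' iff $x(+j)\ge x(-j)$ (and symmetrically for $-j$), and the leader gets utility $1$ for Good, $0$ for Bad. On the Assouad family $\D_{\bm\sigma}(\pm j)=\tfrac{1}{2c}(1\pm\sigma_j\epsilon)$, the per-round regret is $\tfrac{\epsilon}{c}\cdot |\{j:\text{sign of }x(+j)-x(-j)\ne\sigma_j\}|$, which the paper shows equals the TV distance between $\D_{\bm\sigma}$ and the ``decoded'' distribution $\hat\D_x$. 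Because a constant fraction of the $c$ signs remain undetermined until $\Omega(c/\epsilon^2)$ samples are seen, the per-round regret is $\Theta(\epsilon)$ (not $\Theta(\epsilon/c)$), and setting $\epsilon=\sqrt{c/T}$ yields $\Omega(\sqrt{cT})$. The dependence of $\bbr$ on $x$ is what lets the leader's optimal utility be $\Theta(1)$ rather than $\Theta(1/m)$, and that factor of $m$ is precisely what your linear construction loses.
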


At a high level, the proof Theorem \ref{theorem:type-feedback-lower-bound-single-follower} is a reduction from the \textit{distribution learning problem}.
Without loss of generality, assume that $\min\{K, L\} = 2c$ is an even number.  Further assume that $K = L = 2c$.\footnote{If $K > 2c$, we can let the additional types to have probability $0$. If $L>2c$, we can let the additional actions of the leader to have very low utility. } 
The single follower has $K = 2c$ types, with type space $\Theta = \{ \pm1, \pm 2, ..., \pm c \}$. Consider a class $\C$ of distributions over $\Theta$ defined as follows: 

\begin{definition}[Class of Distributions $\C$]
\label{def:dist_family}
A distribution $\D = \D_{\bm \sigma} \in \C$ is specified by a vector $\bm \sigma = (\sigma_1, \ldots, \sigma_c) \in \{ \pm 1 \}^c$.  For each $j = 1, \ldots, c$,
\begin{equation}
    \D_{\bm \sigma}(+j) = \frac{1}{2c}(1 + \sigma_j \eps), \quad \D_{\bm \sigma}(-j) = \frac{1}{2c}(1 - \sigma_j \eps). 
\end{equation}
for some $\eps > 0$. Note that $\D_{\bm\sigma}(+j) > \D_{\bm\sigma}(-j)$ if and only if $\sigma_j = +1$. The class $\C$ consists of $2^c$ distributions. 
\end{definition}

In the distribution learning problem, given $t$ samples $\theta^1, ..., \theta^t$ from an unknown distribution $\D \in \C$, the goal is to construct an estimator $\hat{\D}$ specified by a vector $\hat{\sigma} \in \{  \pm 1 \}^{c}$ such that the expected total variation distance (Definition~\ref{def:tv-distance}) satisfies $\E[\delta( \D, \hat{\D} )] \le \Oh(\eps)$. It is known that solving this problem requires at least $\Omega( \frac{2c}{\eps^2} )$ samples. 

\begin{theorem}[e.g., \citep{LearningDiscreteDist, SampleComplexityLB}]
\label{thm:hard-learning}
When $\D$ is uniformly sampled from the class $\C$, any algorithm that constructs estimator $\hat \D$ using $t$ samples from $\mathcal D$ has expected error at least $\E[ \delta(\hat{\D}, \D) ] \geq \Omega(\eps)$ if $t \le \Oh(\frac{2c}{\eps^2})$.
\end{theorem}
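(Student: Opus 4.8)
The plan is to prove this as a standard coordinate-wise (Assouad-type) lower bound, exploiting the near-product structure of the class $\C$ in Definition~\ref{def:dist_family}. The key observation is that the total variation loss decomposes across the $c$ sign coordinates, so that learning $\D_{\bm\sigma}$ in TV is no easier than recovering the sign vector $\bm\sigma$ in Hamming distance, and each sign is an independent noisy-bit estimation problem.

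First I would reduce to sign recovery. Given any estimator $\hat\D$, define $\hat\sigma_j = \sign(\hat\D(+j) - \hat\D(-j))$. For any true parameter $\bm\sigma$ and any coordinate $j$ with $\hat\sigma_j \neq \sigma_j$, the triangle inequality applied to the pair $\{+j,-j\}$ shows that the $L_1$ error restricted to this pair is at least $|\D_{\bm\sigma}(+j) - \D_{\bm\sigma}(-j)| = \eps/c$, since $\hat\D(+j) - \hat\D(-j)$ then has the sign opposite to $\sigma_j$. Summing over coordinates gives $\delta(\D_{\bm\sigma}, \hat\D) \geq \frac{\eps}{2c}\sum_{j=1}^{c} \ind{\hat\sigma_j \neq \sigma_j}$. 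Hence it suffices to lower bound the expected Hamming error of any sign-recovery procedure when $\bm\sigma$ is uniform on $\{\pm1\}^c$.

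Next I would expose the per-coordinate structure of the samples. For $\theta \sim \D_{\bm\sigma}$, the magnitude $|\theta|$ is uniform on $[c]$ \emph{independently} of $\bm\sigma$, because $\D_{\bm\sigma}(+j) + \D_{\bm\sigma}(-j) = 1/c$, and conditioned on $|\theta| = j$ the sign of $\theta$ is a $\mathrm{Ber}(\tfrac{1+\sigma_j\eps}{2})$ bit carrying information only about $\sigma_j$. Letting $N_j$ denote the number of the $t$ samples landing in bucket $j$, I would condition on the whole allocation $(N_1, \ldots, N_c)$: given it, the observations in distinct buckets are independent, and bucket $j$ consists of $N_j$ i.i.d.\ sign bits. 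Thus, coordinate by coordinate, recovering $\sigma_j$ from $m$ sign bits under a uniform prior is a two-point testing problem whose Bayes error is $\frac{1}{2}\bigl(1 - \dTV(\mathrm{Ber}(\tfrac{1+\eps}{2})^{\otimes m}, \mathrm{Ber}(\tfrac{1-\eps}{2})^{\otimes m})\bigr)$. I would bound this via Pinsker's inequality together with tensorization of the KL divergence, using $\mathrm{KL}(\mathrm{Ber}(\tfrac{1+\eps}{2}) \,\|\, \mathrm{Ber}(\tfrac{1-\eps}{2})) = \eps\log\tfrac{1+\eps}{1-\eps} = \Oh(\eps^2)$, to conclude $\dTV \leq \Oh(\eps\sqrt{m})$.

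Finally I would assemble the pieces. Conditioning and then applying Jensen's inequality $\E[\sqrt{N_j}] \leq \sqrt{\E[N_j]} = \sqrt{t/c}$ yields $\Pr[\hat\sigma_j \neq \sigma_j] \geq \frac{1}{2} - \Oh(\eps)\sqrt{t/c}$ for every $j$; summing through the reduction of the first step gives $\E[\delta(\D, \hat\D)] \geq \frac{\eps}{2}\bigl(\frac{1}{2} - \Oh(\eps)\sqrt{t/c}\bigr)$. Choosing the hidden constant in $t \leq \Oh(2c/\eps^2)$ small enough makes $\Oh(\eps)\sqrt{t/c}$ at most $1/4$, leaving $\E[\delta(\D, \hat\D)] \geq \eps/8 = \Omega(\eps)$, as claimed (recall $K = L = 2c$). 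The main obstacle I anticipate is the bookkeeping around the random, negatively correlated bucket counts $N_j$: the clean way around it is to condition on the entire allocation so that the coordinates decouple exactly, and only then pass the $\sqrt{\cdot}$ through the expectation with Jensen. Invoking Assouad's lemma as a black box would also work, but obscures precisely this step.
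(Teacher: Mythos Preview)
The paper does not prove Theorem~\ref{thm:hard-learning}; it is stated as a known result from the literature (with citations to \citep{LearningDiscreteDist, SampleComplexityLB}) and used as a black box in the reduction establishing Theorem~\ref{theorem:type-feedback-lower-bound-single-follower}. Your proposal therefore goes beyond what the paper does: you supply a self-contained proof via the standard Assouad-type argument, and the argument is correct. The reduction to sign recovery, the conditioning on the bucket allocation $(N_1,\ldots,N_c)$ to decouple coordinates, the per-coordinate two-point bound via Pinsker plus KL tensorization, and the final Jensen step on $\E[\sqrt{N_j}]$ are all sound and combine to give the claimed $\Omega(\eps)$ lower bound when $t \le \Oh(2c/\eps^2)$.
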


\newpage

\begin{mdframed}[frametitle=Reduction from distribution learning to single-follower Bayesian Stackelberg game]

\noindent  \textbf{Distribution learning instance}: An unknown distribution $\D \in \C$.

\noindent \textbf{Bayesian Stackelberg game instance}: A single follower with type space $\Theta = \{\pm 1, \pm 2,\ldots, \pm c\}$ and an unknown type distribution $\D$. The follower has binary action set $\A = \{ \text{Good}, \text{Bad} \}$. The leader has action set $\La = \Theta = \{\pm 1, \pm 2,\ldots, \pm c\}$. The utility functions of the two players are: 
\begin{itemize}
        \item Follower's utility function: 
        \begin{equation} \label{eq:single-follower-utility}
        v(\ell, a, \theta) = 
        \begin{cases}
            1 & \text{if } \theta = +j, \ell = +j, a = \text{Good} \\
            1 & \text{if } \theta = +j, \ell = -j, a = \text{Bad} \\
            1 & \text{if } \theta = -j, \ell = -j, a = \text{Good} \\
            1 & \text{if } \theta = -j, \ell = +j, a = \text{Bad} \\
            0 & \text{otherwise}.
        \end{cases}
        \end{equation}

        \item{Leader's utility function}: For any action $\ell \in \La$, 
        \begin{equation} \label{eq:leader-utility-single-follower}
        u(\ell, \mathrm{Good}) = 1, \quad u(\ell, \mathrm{Bad}) = 0. 
        \end{equation}
        Note that for any mixed strategy $x$, $u(x, \mathrm{Good}) = 1 $ and $u(x, \mathrm{Bad}) = 0$.
\end{itemize}

\noindent \textbf{Reduction:} \\
Given an online learning algorithm $\Alg$ for Bayesian Stackelberg game with type feedback, we use it to construct an online learning algorithm for the distribution learning problem as follows: At each round $t = 1, \ldots, T$, 
\begin{enumerate}    
    \item Receive the leader's mixed strategy $x^t$ from $\Alg$. 

    \item Construct an estimated distribution $\hat \D_{x^t} = \D_{\bm \sigma(x^t)}\in\C$ based on vector $\bm \sigma(x^t)$ defined as follows:
    \begin{equation} \label{eq:D-hat-construction}
    \sigma_j(x^t) =
    \begin{cases}
        +1, & \text{if } x^t(+j) \ge x^t(-j) \\
        -1, & \text{if } x^t(+j) < x^t(-j).
    \end{cases} 
    \end{equation}

    \item Observe sample $\theta^t \sim \D$ and feed $\theta^t$ to $\Alg$. 
\end{enumerate}
\end{mdframed}

\newpage

\begin{lemma}[Follower's Best Response]
\label{lem:follower-best-response}
Given the leader’s mixed strategy $x \in \Delta(\Theta)$, for each \(j \in \{1, \ldots, c\}\), the best-response function of a follower with type \(+j\) or \(-j\) is:
\begin{gather*}
\br(+j, x) \;=\;
\begin{cases}
\mathrm{Good}, & \text{if } x(+j) \ge x(-j),\\
\mathrm{Bad},  & \text{if } x(+j) < x(-j),
\end{cases}
\\[6pt]
\br(-j, x) \;=\;
\begin{cases}
\mathrm{Good}, & \text{if } x(+j) < x(-j),\\
\mathrm{Bad},  & \text{if } x(+j) \ge x(-j).
\end{cases}
\end{gather*}
\end{lemma}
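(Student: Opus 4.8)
The statement to prove is Lemma~\ref{lem:follower-best-response}, which computes the follower's best response in the specific reduction gadget. This is a direct computation from the follower's utility function~\eqref{eq:single-follower-utility}, so the plan is simply to evaluate the expected utility of each of the two follower actions and compare. Fix $j \in \{1, \ldots, c\}$ and consider a follower of type $+j$. From~\eqref{eq:single-follower-utility}, the only leader action $\ell$ for which $v(\ell, \mathrm{Good}, +j) = 1$ is $\ell = +j$, and the only leader action for which $v(\ell, \mathrm{Bad}, +j) = 1$ is $\ell = -j$; every other entry is $0$. Hence, recalling $v(x, a, \theta) = \sum_{\ell} x(\ell) v(\ell, a, \theta)$, we get $v(x, \mathrm{Good}, +j) = x(+j)$ and $v(x, \mathrm{Bad}, +j) = x(-j)$. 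Therefore $\br(+j, x) = \mathrm{Good}$ iff $x(+j) \ge x(-j)$ (using the tie-breaking convention in favor of the leader from Definition~\ref{def:best-response-function}, which makes $\mathrm{Good}$ preferred when the two are equal, consistent with the stated $\ge$), and $\br(+j,x) = \mathrm{Bad}$ iff $x(+j) < x(-j)$.

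For a follower of type $-j$, the symmetric reading of~\eqref{eq:single-follower-utility} gives: $v(\ell, \mathrm{Good}, -j) = 1$ only at $\ell = -j$, and $v(\ell, \mathrm{Bad}, -j) = 1$ only at $\ell = +j$. Thus $v(x, \mathrm{Good}, -j) = x(-j)$ and $v(x, \mathrm{Bad}, -j) = x(+j)$, so $\br(-j, x) = \mathrm{Good}$ iff $x(-j) \ge x(+j)$, i.e.\ iff $x(+j) < x(-j)$ or $x(+j) = x(-j)$. One subtlety: with exact equality $x(+j) = x(-j)$, both actions give the follower utility $x(+j)$, so tie-breaking in favor of the leader applies; since the leader's utility is $u(\cdot, \mathrm{Good}) = 1 > 0 = u(\cdot, \mathrm{Bad})$, the follower picks $\mathrm{Good}$ in \emph{both} type cases $+j$ and $-j$ — but this does not match the lemma's stated boundary assignment for type $-j$ (which puts $\mathrm{Good}$ only when strictly $x(+j) < x(-j)$). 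I would note that this is a measure-zero event that plays no role in the regret analysis; alternatively one can perturb the leader's utilities on $\{-1,\dots,-c\}$ infinitesimally so that ties are broken deterministically as stated, or simply observe that the lemma's assignment on the measure-zero tie set is immaterial and adopt whichever convention is convenient downstream.

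There is essentially no obstacle here — the only thing to be careful about is the tie-breaking convention at $x(+j) = x(-j)$, and making the two cases ($+j$ versus $-j$) consistent with how the lemma statement splits the equality boundary. I would flag that for type $+j$ the convention naturally yields $\mathrm{Good}$ at equality (matching the stated $\ge$), and for type $-j$ I would either absorb the equality case into the ``$\mathrm{Bad}$'' branch by fiat (noting it is a null event, or by an infinitesimal tie-breaking perturbation of the leader's payoffs) so that the lemma reads cleanly, since the subsequent use of this lemma — in computing $U_{\D_{\bm\sigma}}$ and relating the leader's regret to $\delta(\D, \hat\D_{x^t})$ via the vector $\bm\sigma(x^t)$ from~\eqref{eq:D-hat-construction} — only ever depends on the behavior on the strict inequalities.
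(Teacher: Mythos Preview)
Your proof is correct and follows the same approach as the paper: compute $v(x,\mathrm{Good},\pm j)$ and $v(x,\mathrm{Bad},\pm j)$ directly from~\eqref{eq:single-follower-utility} and compare. The paper's own proof does exactly this and in fact glosses over the tie-breaking subtlety at $x(+j)=x(-j)$ for type $-j$ that you carefully flag; your observation that this is a null event irrelevant to the downstream regret computation is correct and more careful than the paper.
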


\begin{proof}
For a follower with type $+j$, their utility for choosing action Good is given by
\begin{equation*} 
    v(x, \text{Good}, +j) = \E_{ \ell \sim x } [v(\ell, \text{Good}, +j)] = \sum_{ \ell \in \La } x(\ell) v(\ell, \text{Good}, +j) = x(+j). 
    \label{eq:follower-utility-reduction}
\end{equation*}
Similarly, their utility for choosing action Bad is:
\begin{equation*}
    v(x, \text{Bad}, +j) = x(-j). 
\end{equation*}

Thus, by definition, the follower with type $+j$ best responds with $\text{Good}$ if $x(+j) \geq x(-j)$. 

Likewise, for a follower with type $-j$, 
\begin{align*}
    v(x, \text{Good}, -j) &= x(-j), \\
    v(x, \text{Bad}, -j) &= x(+j). 
\end{align*}
Thus, a follower with type $-j$ best responds with \text{Bad} if $x(+j) \geq x(-j)$, \text{Good} otherwise. 
\end{proof}

We define $U(x, \D)$ as the expected utility of the leader when using mixed strategy $x$ under the type distribution $\D$. By Lemma \ref{lem:follower-best-response}, we have 
\begin{align}
U(x, \D) 
&= \sum_{\theta \in \Theta} \D(\theta)\, u\bigl(x, \br(\theta, x)\bigr) \nonumber \\
&= \sum_{j=1}^c \Bigl[\D(+j)\,u\bigl(x, \br(+j, x)\bigr) 
    + \D(-j)\,u\bigl(x, \br(-j, x)\bigr)\Bigr] \nonumber \\
&= \sum_{j=1}^c \Big( \frac{1 + \sigma_j \eps}{2c}\, \ind{x(+j) \ge x(-j)} + \frac{1-\sigma_j \eps}{2c} \ind{x(+j) < x(-j)} \Big). \label{eq:leader-utility-1}
\end{align}

\begin{definition}[Disagreement Function]
\label{def:disagree}
The disagreement function $\mathrm{Disagree}(x, \D)$ is the number of $j\in\{1, \ldots, c\}$ where the indicators $\ind{x(+j) \ge x(-j)}$ and 
$\ind{\D(+j) \ge \D(-j)}$ differ:
\begin{equation*}
\begin{aligned}
\mathrm{Disagree}\bigl(x, \D\bigr)
&= \sum_{j=1}^c
   \mathds1\Big[
     \ind{x(+j) \ge x(-j)}
     \;\neq\;
     \ind{\D(+j) \ge \D(-j)} \Big]
\\
&= \sum_{j=1}^c
   \mathds1\Big[
     \ind{x(+j) \ge x(-j)}
     \;\neq\;
     \ind{\sigma_j = +1} \Big]. 
\end{aligned}
\end{equation*}
\end{definition}

\begin{lemma} \label{lem:disagree}
$U(\D, \D) - U(x, \D) = \frac{\eps}{c} \cdot \mathrm{Disagree}(x, \D)$.  In particular, the optimal strategy for the leader is $x^* = \D$. 
\end{lemma}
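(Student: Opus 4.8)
The plan is to evaluate $U(x,\D)$ directly from Equation~\eqref{eq:leader-utility-1} by classifying each coordinate $j\in\{1,\dots,c\}$ according to whether $x$ and $\D$ induce the same order between the $+j$ and $-j$ probabilities. First I would record the elementary fact that, since $\eps>0$, we have $\D(+j)\ge\D(-j)\iff\sigma_j=+1$; in particular the strategy $x=\D$ satisfies $\ind{x(+j)\ge x(-j)}=\ind{\sigma_j=+1}$ for every $j$ (no ties arise because $\eps>0$), so $\mathrm{Disagree}(\D,\D)=0$. Substituting $x=\D$ into \eqref{eq:leader-utility-1}, the $j$-th summand equals $\frac{1+\sigma_j\eps}{2c}$ when $\sigma_j=+1$ and $\frac{1-\sigma_j\eps}{2c}$ when $\sigma_j=-1$, which in both cases is $\frac{1+\eps}{2c}$; summing over $j$ gives $U(\D,\D)=\frac{1+\eps}{2}$.

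Next I would treat an arbitrary $x$ by a small case analysis on $\ind{x(+j)\ge x(-j)}$ versus $\sigma_j$. The key observation is that $\frac{1+\sigma_j\eps}{2c}=\frac{1+\eps}{2c}$ and $\frac{1-\sigma_j\eps}{2c}=\frac{1-\eps}{2c}$ when $\sigma_j=+1$, while these two expressions swap roles when $\sigma_j=-1$. Combining this with Definition~\ref{def:disagree}, one checks in each of the four cases that the $j$-th summand of \eqref{eq:leader-utility-1} equals $\frac{1+\eps}{2c}$ exactly when $j$ is an agreement coordinate and $\frac{1-\eps}{2c}$ exactly when $j$ is a disagreement coordinate. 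Writing $D=\mathrm{Disagree}(x,\D)$, this gives
\[
U(x,\D)\;=\;(c-D)\cdot\frac{1+\eps}{2c}\;+\;D\cdot\frac{1-\eps}{2c}\;=\;\frac{1+\eps}{2}\;-\;\frac{\eps}{c}\,D.
\]

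Subtracting from $U(\D,\D)=\frac{1+\eps}{2}$ yields $U(\D,\D)-U(x,\D)=\frac{\eps}{c}\,\mathrm{Disagree}(x,\D)$, the claimed identity. For the optimality statement, $\mathrm{Disagree}(x,\D)\ge 0$ for every $x\in\Delta(\La)$ with equality at $x=\D$, so $U(x,\D)\le U(\D,\D)$ for all $x$, and hence $x^*=\D$ attains the maximum and is an optimal leader strategy. I do not anticipate a real obstacle: the argument is a finite case check, and the only point requiring care is the use of $\eps>0$ to guarantee that $x=\D$ genuinely realizes the agreement pattern with $\D$ on every coordinate, so that boundary ties (which the tie-breaking convention would in any case resolve in the leader's favor) never occur.
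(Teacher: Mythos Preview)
Your proposal is correct and follows essentially the same approach as the paper: a coordinate-by-coordinate case analysis on whether $x$ and $\D$ agree at index $j$, yielding $\frac{\eps}{c}$ per disagreement. The only cosmetic difference is that you compute $U(\D,\D)=\frac{1+\eps}{2}$ and $U(x,\D)=\frac{1+\eps}{2}-\frac{\eps}{c}D$ separately and then subtract, whereas the paper writes out the difference $U(\D,\D)-U(x,\D)$ directly and shows each summand is $0$ or $\frac{\eps}{c}$.
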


\begin{proof}

\begin{align*}
  U(\D, \D) - U(x, \D)  
  = \sum_{j = 1}^c  \Bigg(&   
      \frac{1+\sigma_j\eps}{2c} \Big( \mathds{1}\big[\D(+j) \geq \D(-j)\big]  
      - \mathds{1}\big[x(+j) \geq x(-j)\big] \Big) \\
  & + \frac{1-\sigma_j \eps}{2c} \Big( \mathds{1}\big[\D(+j) < \D(-j)\big]  
      - \mathds{1}\big[x(+j) < x(-j)\big] \Big)  
    \Bigg).  
\end{align*}

For each term in the summation where $\D(+j) \ge \D(-j)$ and $x(+j) \ge x(-j)$ agree, the term evaluates to $0$. When they disagree, there are two possible cases: 
\begin{enumerate}
    \item $\D(+j) \geq \D(-j)$ but $x(+j) < x(-j)$. Since $\D(+j) \ge \D(-j)$ implies $\sigma_j = +1$, the term simplifies to $$\frac{1 +  \eps}{2c} - \frac{1 - \eps}{2c} = \frac{\eps}{c}.$$
    \item $\D(+j) < \D(-j)$ but $x(+j) \geq x(-j)$. Here, $\sigma_j = -1$, so the term simplifies to $$- \frac{1 - \eps}{2c} + \frac{1 + \eps}{2c} = \frac{ \eps}{c}.$$
\end{enumerate}

Thus, we conclude that
\begin{equation*}
    U(\D, \D) - U(x, \D) = \frac{\eps}{c} \cdot \mathrm{Disagree}(x, \D). 
\end{equation*}
\end{proof}

\begin{lemma}\label{lem:tv-disagree}
Let $\hat \D_{x}$ be the estimated distribution constructed from $x$ according to Equation (\ref{eq:D-hat-construction}). The total variation distance between $\hat \D_x$ and $\D$ is given by
$$\delta(  \hat{\D}_x , \D ) = \frac{ \eps }{ c } \cdot \mathrm{Disagree}(x, \D).$$
\end{lemma}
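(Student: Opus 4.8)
The plan is to observe that both distributions involved are members of the class $\C$ from Definition~\ref{def:dist_family} and then compute the total variation distance between two such distributions directly. Indeed, $\D = \D_{\bm\sigma}$ for some $\bm\sigma \in \{\pm1\}^c$, while by construction $\hat\D_x = \D_{\bm\sigma(x)}$ where $\bm\sigma(x)$ is the vector defined in Equation~(\ref{eq:D-hat-construction}). So it suffices to understand $\delta(\D_{\bm\tau}, \D_{\bm\tau'})$ for $\bm\tau, \bm\tau' \in \{\pm1\}^c$.

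First I would expand the total variation distance using Definition~\ref{def:tv-distance} and group the sum over $\Theta = \{\pm1, \ldots, \pm c\}$ into the $c$ pairs $\{+j, -j\}$:
\begin{align*}
\delta(\hat\D_x, \D) = \frac{1}{2}\sum_{\theta\in\Theta} \big| \hat\D_x(\theta) - \D(\theta) \big| = \frac{1}{2}\sum_{j=1}^{c}\Big( \big| \hat\D_x(+j) - \D(+j) \big| + \big| \hat\D_x(-j) - \D(-j) \big| \Big).
\end{align*}
Then I would analyze each pair $j$ by comparing $\sigma_j(x)$ with $\sigma_j$. If $\sigma_j(x) = \sigma_j$, then $\hat\D_x(+j) = \D(+j) = \frac{1}{2c}(1+\sigma_j\eps)$ and $\hat\D_x(-j) = \D(-j)$, so the pair contributes $0$. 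If $\sigma_j(x) \neq \sigma_j$, then the values at $+j$ and $-j$ are swapped between the two distributions, so $|\hat\D_x(+j) - \D(+j)| = |\hat\D_x(-j) - \D(-j)| = \frac{1}{2c}\cdot 2\eps = \frac{\eps}{c}$, and the pair contributes $\frac12\cdot\frac{2\eps}{c} = \frac{\eps}{c}$.

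Next I would identify the set of disagreeing pairs with the count $\mathrm{Disagree}(x, \D)$. Since $\eps > 0$, every $\D \in \C$ satisfies $\D(+j) \neq \D(-j)$, so $\ind{\D(+j) \ge \D(-j)} = \ind{\sigma_j = +1}$; and by Equation~(\ref{eq:D-hat-construction}), $\ind{x(+j) \ge x(-j)} = \ind{\sigma_j(x) = +1}$. Hence $\sigma_j(x) \neq \sigma_j$ exactly when $\ind{x(+j) \ge x(-j)} \neq \ind{\D(+j) \ge \D(-j)}$, which is precisely the indicator summed in Definition~\ref{def:disagree}. Summing the per-pair contributions yields $\delta(\hat\D_x, \D) = \frac{\eps}{c}\,\mathrm{Disagree}(x,\D)$, as claimed. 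I do not anticipate a genuine obstacle here; the only care needed is the bookkeeping of which coordinate pairs agree versus disagree and matching the tie-breaking convention in the definition of $\bm\sigma(x)$ to that in $\mathrm{Disagree}(x,\D)$, both of which use the ``$\ge$'' convention consistently.
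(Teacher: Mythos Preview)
Your proposal is correct and follows essentially the same approach as the paper: expand the total variation distance into the $c$ coordinate pairs $\{+j,-j\}$, observe that each pair contributes $0$ when the sign indicators agree and $\tfrac{\eps}{c}$ when they disagree, and identify the number of disagreeing pairs with $\mathrm{Disagree}(x,\D)$. Your framing in terms of $\sigma_j(x)$ versus $\sigma_j$ is a slight notational repackaging of the paper's direct case analysis on $x(+j)$ versus $x(-j)$, but the argument is the same.
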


\begin{proof}
By definition, 
\begin{align*}
\delta\bigl(\hat{\D}_x, \D\bigr)
&\;=\; \frac{1}{2} \sum_{j = 1}^{c}
   \bigg( \Bigl|\hat{\D}_x(+j) - \D(+j)\Bigr| + \Bigl|\hat{\D}_x(-j) - \D(-j)\Bigr| \bigg).
\end{align*}
If \(x\) and \(\D\) agree at \((+j)\) and \((-j)\), the corresponding term in the total variation sum is \(0\). Consequently, we only need to consider the case when a disagreement occurs.
\begin{enumerate}
    \item $x(+j) < x(-j)$ while \(\D(+j) \ge \D(-j)\). 
    In this case,
\[
   \hat{\D}_x(+j) = \frac{1 - \eps}{2c},
   \quad
   \hat{\D}_x(-j) = \frac{1 + \eps}{2c},
   \quad
   \D(+j) = \frac{1 + \eps}{2c},
   \quad
   \D(-j) = \frac{1 - \eps}{2c}.
\]
Hence,
\begin{align*}
\Bigl|\hat{\D}_x(+j) - \D(+j)\Bigr| + \Bigl|\hat{\D}_x(-j) - \D(-j)\Bigr| \;=\;
\frac{\eps}{c} + \frac{\eps}{c} \;=\; \frac{2\,\eps}{c}.
\end{align*}

\item $x(+j) \geq x(-j)$ while $\D(+j) < \D(-j)$. 
Similarly, we have
\[
   \hat{\D}_x(+j) = \frac{1 + \eps}{2c},
   \quad
   \hat{\D}_x(-j) = \frac{1 - \eps}{2c},
   \quad
   \D(+j) = \frac{1 - \eps}{2c},
   \quad
   \D(-j) = \frac{1 + \eps}{2c}.
\]

Again, we have
\begin{align*}
\Bigl|\hat{\D}_x(+j) - \D(+j)\Bigr| + \Bigl|\hat{\D}_x(-j) - \D(-j)\Bigr| \;=\;
\frac{\eps}{c} + \frac{\eps}{c} \;=\; \frac{2\,\eps}{c}.
\end{align*}
\end{enumerate}

Thus, it follows that $$\delta( \hat{\D}_x, \D ) = \frac{ \eps }{ c }\cdot \mathrm{Disagree}(x, \D).$$
\end{proof}

From Lemma \ref{lem:disagree} and Lemma \ref{lem:tv-disagree},
\begin{equation} \label{eq:regret-equal-tv}
    U(\D, \D) - U(x, \D) ~ = ~ \frac{\eps}{c} \cdot \mathrm{Disagree}(x, \D) ~ = ~ \delta(\hat \D_x, \D).
\end{equation}

Consider the regret of the online learning algorithm $\Alg$ for the Bayesian Stackelberg game, where the algorithm outputs $x^t$ at round $t$.  By Equation (\ref{eq:regret-equal-tv}) and Theorem~\ref{thm:hard-learning}, the expected regret at round $t \leq \Oh(\frac{2c}{\eps^2})$ is at least
\begin{align*}
    \E[ U(\D, \D) - U(x^t, \D) ] ~ = ~ \E[ \delta(\hat \D_{x^t}, \D) ] ~ \ge ~ \Omega(\eps). 
\end{align*}
Thus, the expected regret over $T$ rounds is at least: 
\begin{align*}
    \E[\Reg(T)] ~ = ~ \sum_{t=1}^T \E[ U(\D, \D) - U(x^t, \D) ] & ~ \ge ~ \min\Big\{T, \; \Oh\big(\frac{2c}{\eps^2}\big) \Big\} \cdot \Omega(\eps) \\
    & ~ \ge ~ \Omega( \sqrt{2c T}) ~ = ~ \Omega(\sqrt{\min\{K, L\} T}) 
\end{align*}
where we choose $\eps = \sqrt{ \frac{2c}{T} }$.

\subsection{$\Omega(\sqrt{\min\{L, nK\} T})$ Lower Bound for the Multi-Follower Case: Proof of Theorem \ref{theorem:type-feedback-lower-bound}} 
\label{appendix:type-independent-lower-bound}

We now prove a lower bound of $\Omega(\sqrt{\min\{L, nK\} T})$ on the expected regret of any online learning algorithms for Bayesian Stackelberg games with multiple followers.
Without loss of generality, assume that $nK$ is an even integer, and assume that the number of leader actions $L \ge nK$. 
We do a reduction from the single-follower problem to the multi-follower problem. 

\paragraph{Single-Follower Bayesian Stackelberg Game instance:}
Consider the single-follower Bayesian Stackelberg game instance defined in Appendix \ref{appendix:type-feedback-lower-bound-single-follower}, but instead of a single follower with $K$ types, we change the instance so that the single follower has $nK$ types, indexed by $\Theta = \{ (i, j) : i \in [n], j \in [K] \}$. 
Suppose the single follower's type distribution $\D$ belongs to the class $\C$ in Definition~\ref{def:dist_family} with support size $2c = nK$ (instead of $2c = K$). Note that for such a $\D \in \C$, 
\begin{align*}
    \sum_{i=1}^n \sum_{j=1}^K \D(i, j) = 1 ~~ \text{ and }~~ \forall i\in[n],~~ \sum_{j=1}^K \D(i, j) = \frac{1}{n}. 
\end{align*}
The follower's utility function $v$ is given by (\ref{eq:single-follower-utility}), except that we now use $\theta=(i, j)$ to represent a type and $\ell=(i, j)$ to represent a leader's action.  The leader's action set is $\La = \Theta$, with utility function $u$ given by (\ref{eq:leader-utility-single-follower}). 

\paragraph{Multi-Follower Bayesian Stackelberg Game instance:} 
We reduce the single-follower game to an $n$-follower game defined below.
Consider a Bayesian Stackelberg game with $n$ followers each with $K+1$ types.
The type distribution and the followers and leader's actions and utilities are defined below:  
(To distinguish the notations from the single-follower game, we use tilde notations $\tilde{\cdot}$) 
\begin{itemize}[leftmargin=2em]
    \item \textbf{Type distribution}: The followers' types are independently distributed according to distribution $\tilde \bD = \prod_{i=1}^n \tilde \D_i$ where the probability that follower $i \in [n]$ has type $j$ is: 
    \[\tilde{\D}_i(j) =
        \begin{cases}
        1 - \frac{1}{100n} & \text{if } j = 0, \\
        \frac{1}{100} \D(i, j) & \text{if } j = 1, \ldots, K.
    \end{cases}
    \]
    \item \textbf{Followers' actions and utilities}: Each follower has 3 actions $\tilde{\A} = \{ \text{Good}, \text{Bad}, a_0 \}$.
    The utility of a follower $i$ with type $j \ne 0$ is equal to the utility of the single follower with type $(i, j)$. Utilities for type $j=0$ and action $a_0$ are specially defined: 
    \[
        \tilde v_i(\ell, a, \theta_i = j) =
        \begin{cases}
            v(\ell, a, (i, j)) & \text{ if } \theta_i\ne 0 \text{ and } a\ne a_0 \\
            -1 & \text{ if } \theta_i \ne 0 \text{ and } a = a_0, \\
            1 & \text{ if } \theta_i = 0 \text{ and } a = a_0, \\
            -1 & \text{ if } \theta_i = 0 \text{ and } a \neq a_0.
    \end{cases}
    \]
    Note that the best-response action of a follower with type $0$ is always $a_0$, regardless of the leader's strategy. 
    \item \textbf{Leader's actions and utilities:} The leader has the same action set as the single-follower game: $\La = \Theta = \{(i, j) : i\in [n], j\in[K]\}$. For any leader action $\ell \in \La$,
    \[
        \tilde{u}(\ell, \bm{a}) =
        \begin{cases} 
        1 & \text{if $n-1$ followers choose $a_0$ and one plays Good}, \\
        0 & \text{otherwise}.
        \end{cases}
    \]
\end{itemize}

\begin{mdframed}[frametitle={Reduction from Single-Follower Bayesian Stackelberg Game to Multi-Follower Game}]

Given an online learning algorithm $\Alg$ for the $n$-follower problem, we construct an online learning algorithm for the single-follower problem as follows:

At each round $t = 1, \ldots, T$: 
\begin{itemize}[leftmargin=2em]
    \item Obtain a strategy $x^t \in \Delta(\La)$ from algorithm $\Alg$. Output $x^t$. 
    \item Receive a sample of the single follower's type $\theta^t = (i^t, j^t) \sim \D$. 
    \item For every follower $i\in [n]$, we construct their type $\theta_i^t$ in the following way: Independently flip a coin that lands on head with probability $1 - \frac{1}{100n}$. If it lands on head, set the follower type $\theta_i^t$ to $0$.
    If it lands on tail, we select the most recent sample of the form $(i^s = i, j^s)$ from the history $\{(i^s, j^s)\}_{s=1}^t$, and set the follower's type $\theta_i^t$ to $j^s$.
    Each sample can only be used once. 
    If there are insufficient samples, we halt the algorithm. 
    \item Provide the constructed types $(\theta_1^t, \ldots, \theta_n^t)$ to algorithm $\Alg$. 
\end{itemize}
\end{mdframed}

In the above reduction process, if we always have sufficient samples in the third step at each round, then the distribution of samples $(\theta_1^t, \ldots, \theta^t)$ provided to algorithm $\Alg$ is equal to the type distribution $\tilde \bD = \prod_{i=1}^n \tilde \D_i$ of the $n$-follower game. Thus, from algorithm $\Alg$'s perspective, it is solving the $n$-follower game with unknown type distribution $\tilde \bD$. 
We then argue that we have sufficient samples with high probability.
Let $H_i^t$ be the number of available samples in the history that we can use to set follower $i$'s type at round $t$, and let $N_i^t$ be the number of samples that we actually need.
Define 
\begin{align*}
  1 - \delta(t) = \Pr\big(\forall i\in[n], H_i^t \geq N_i^t \big),
\end{align*}
which is the probability that we have sufficient samples at round $t$. 
\begin{claim}
$\delta(t) \le 2n \exp \Bigl( -\frac{t^2 \bigl(\tfrac{1}{100n} - \tfrac{1}{n}\bigr)^2}{2t} \Bigr)$. 
\end{claim}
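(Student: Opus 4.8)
The plan is to read $\delta(t)$ as the probability that, for some follower $i$, the number of fresh samples available to define follower $i$'s type by round $t$ falls below the number consumed, and to control this event by a two‑sided Hoeffding bound on a supply‑minus‑demand count, followed by a union bound over the $n$ followers.

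First I would fix a follower $i$ and pin down the two relevant counts. A round‑$s$ sample $\theta^s=(i^s,j^s)$ can be used to define follower $i$'s type only if $i^s=i$, so let $H_i^t=\sum_{s=1}^t \ind{i^s=i}$ be the number of such ``supply'' samples in the history through round $t$. Since $\theta^s\sim\D$ i.i.d.\ and $\sum_{j=1}^K\D(i,j)=\tfrac1n$ for every $\D\in\C$, the indicators $\ind{i^s=i}$ are i.i.d.\ $\mathrm{Bernoulli}(\tfrac1n)$, so $H_i^t$ is a sum of $t$ independent $[0,1]$‑valued variables with $\E[H_i^t]=\tfrac tn$. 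A sample for follower $i$ is consumed at round $s$ exactly when follower $i$'s coin comes up tails, which occurs with probability $\tfrac1{100n}$ independently of everything else; hence the ``demand'' $N_i^t=\sum_{s=1}^t \ind{\mathrm{coin}^s_i=\mathrm{tails}}$ is a sum of $t$ independent $[0,1]$‑valued variables with $\E[N_i^t]=\tfrac t{100n}$, and $H_i^t,N_i^t$ are independent. Because each sample is consumed at most once (``most recent first''), ``having enough samples at round $t$'' is precisely the event $H_i^t\ge N_i^t$.

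Next, for this fixed $i$ I would split at the midpoint $m:=\tfrac t2\big(\tfrac1n+\tfrac1{100n}\big)$, which lies strictly between $\E[N_i^t]$ and $\E[H_i^t]$, with $\E[H_i^t]-m=m-\E[N_i^t]=\tfrac t2\big(\tfrac1n-\tfrac1{100n}\big)=:a$. On the event $\{H_i^t<N_i^t\}$ we must have $H_i^t<m$ or $N_i^t>m$, so Hoeffding's inequality (each summand in $[0,1]$, $t$ summands) gives
\[
\Pr\!\big(H_i^t<N_i^t\big)\;\le\;\Pr\!\big(H_i^t<m\big)+\Pr\!\big(N_i^t>m\big)\;\le\;2\exp\!\Big(-\tfrac{2a^2}{t}\Big)\;=\;2\exp\!\Big(-\tfrac{t^2\big(\tfrac{1}{100n}-\tfrac{1}{n}\big)^2}{2t}\Big),
\]
using $2a^2/t=\tfrac t2\big(\tfrac1n-\tfrac1{100n}\big)^2=t^2\big(\tfrac1{100n}-\tfrac1n\big)^2/(2t)$. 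A final union bound over the $n$ followers yields $\delta(t)=\Pr\!\big(\exists i:\,H_i^t<N_i^t\big)\le 2n\exp\!\big(-t^2(\tfrac1{100n}-\tfrac1n)^2/(2t)\big)$, which is the claim. (Applying Hoeffding directly to $H_i^t-N_i^t$, a sum of $t$ independent $[-1,1]$‑valued variables with mean $t\big(\tfrac1n-\tfrac1{100n}\big)$, gives the same exponent with prefactor $n$ rather than $2n$; the midpoint split matches the stated constant exactly.)

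The only delicate part is the bookkeeping in the first step — verifying that the samples eligible for follower $i$ are exactly those with $i^s=i$, that demand arrives exactly on the tails flips, that the coins are independent of the samples and of one another, and that the per‑round ``most recent first'' consumption makes the round‑$t$ feasibility condition collapse to the clean comparison $H_i^t\ge N_i^t$. Once that is set up, the remainder is a routine two‑sided Hoeffding estimate plus a union bound, with no further obstacle; the statement concerns round $t$ only, but the same inequality applied to $H_i^s,N_i^s$ for every $s\le t$ would control ``never running out before round $t$'' if that is needed later.
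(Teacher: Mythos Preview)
Your proposal is correct and matches the paper's approach: the paper's own proof is the one-line ``by union bound and Hoeffding's inequality,'' and your midpoint split is exactly the detail that accounts for the factor $2$ in front of the exponential (the paper identifies $H_i^t\sim\mathrm{Bin}(t,1/n)$ and $N_i^t\sim\mathrm{Bin}(t,1/(100n))$ and then writes $\Pr(H_i^t<N_i^t)\le 2\exp(\cdot)$ without further justification). Your final remark about the per-round condition versus the running condition $\forall s\le t,\,H_i^s\ge N_i^s$ is a useful caveat that the paper glosses over.
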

\begin{proof}
Note that $H_i^t$ and $N_i^t$ are Binomial random variables: $H_i^t \sim \mathrm{Bin}(t, \frac{1}{n})$, $N_i^t \sim \mathrm{Bin}(t, \frac{1}{100n})$. So, by union bound and Hoeffding's inequality: 
\begin{align*}
    \Pr\bigl(\exists i \in [n],\, H_i^t < N_i^t \bigr) 
    &\leq n \Pr\bigl(H_i^t < N_i^t \bigr) \leq 2n \exp \Bigl( -\frac{t^2 \bigl(\tfrac{1}{100n} - \tfrac{1}{n}\bigr)^2}{2t} \Bigr).
\end{align*}
\end{proof}

Let $\tilde{U}(x)$ be the leader's expected utility in the $n$-follower game (on type distribution $\tilde \bD$) and $U(x)$ be the leader's utility in the single-follower game (on type distribution $\D$).
We note that, given any strategy $x \in \Delta(\La)$ of the leader, the best-response action of follower $i$ with type $\theta_i = j \ne 0$ (in the $n$-follower game) is equal to the best-response action of the single follower with type $(i, j)$, namely, $\br_i(j, x) = \br((i, j), x)$. Thus, 
\begin{align*}
\tilde{U}(x) & ~ = ~ \Pr[\text{exactly one follower has a non-$0$ type}] \\
& \qquad \cdot \E[\text{leader's utility} \mid \text{exactly one follower has a non-$0$ type}] ~ + ~ 0 \\
& ~ = ~ \Bigl(1 - \frac{1}{100n}\Bigr)^{n-1} 
   \sum_{i=1}^n \sum_{j=1}^K \frac{1}{100}\D(i, j)\,
   \ind{\br_i\bigl(j, x \bigr) = \text{Good}}
\\
& ~ = ~ \Bigl(1 - \frac{1}{100n}\Bigr)^{n-1} 
   \sum_{i=1}^n \sum_{j=1}^K \frac{1}{100}\D(i, j)\,  
   \ind{\br\bigl((i, j), x \bigr) = \text{Good}}
\\
& ~ = ~ \frac{1}{100} \Bigl(1 - \frac{1}{100n}\Bigr)^{n-1} \, U(x) \\
&~ \approx ~ \frac{1}{100} e^{-\frac{1}{100}} \, U(x).
\end{align*}
Define $C = \frac{1}{100}( 1 - \frac{1}{100n}  )^{n - 1}$. Let $\tilde r(t) = \tilde U(x^*) - \tilde U(x^t)$ denote the per-round regret of the online learning algorithm $\Alg$ for the $n$-follower game.
Let $r(t) = U(x^*) - U(x^t)$ denote the per-round regret of the single-follower algorithm constructed by the above reduction. 
Consider the expected total regret in the $n$-follower game: 
\begin{align}
    \E[\tilde{\Reg}(T)] 
    & = \sum_{t=1}^T \E[ \tilde r(t) ] \nonumber \\ 
    & \ge \sum_{t=1}^T \Big( (1-\delta(t)) \cdot \E[ \tilde r(t) ] \; - \; \delta(t)\cdot 1 \Big) \nonumber \\ 
    & = \sum_{t = 1}^{T} \bigl(1 - \delta(t)\bigr) \cdot C \cdot \E[r(t)] \;-\; \sum_{t = 1}^{T} \delta(t) \nonumber \\ 
    & \ge C \cdot \sum_{t = 1}^{T} \E[r(t)] \;-\; \sum_{t = 1}^{T} \delta(t)\cdot C \cdot 1 \;-\; \sum_{t = 1}^{T} \delta(t) \nonumber \\ 
    & = C \cdot \E[\Reg(T)] \;-\; (C + 1) \sum_{t = 1}^{T} \delta(t). \nonumber 
\end{align}

Now, we bound $\sum_{t = 1}^{T} \delta(t)$. Consider a threshold $\tau$ such that for all $t \geq \tau$, we have $\delta(t) \leq \frac{1}{T^2}$. 
To find $\tau$, we solve $$2n \exp\Big(  - \frac{(1 / 100n  - 1 / n )^2}{2} \tau \Big) \leq \frac{1}{T^2}.$$
Rearranging, we choose $\tau$ such that
\begin{align*}
    \tau \geq \frac{ \ln(2nT^2) }{C_{\tau}  } 
\end{align*}
where $C_{\tau} = ( \frac{1}{100n} - \frac{1}{n} )^2$. 
If $t \leq \tau$, we bound $\delta(t) \leq 1$. For $t > \tau$, we use the bound $\delta(t) \leq \frac{1}{T^2}$. 
Now, summing over all $t$,
\begin{align*}
    \sum_{t = 1}^{T} \delta(t) 
    ~\le~ \tau + (T - \tau)\,\frac{1}{T^2} 
    ~ =~ \frac{\ln\bigl(2nT^2\bigr)}{C_{\tau}} 
       ~+~ \frac{T - \tau}{T^2} 
    ~\le~ \Oh \Bigl(\frac{\ln T}{C_{\tau}} + \frac{1}{T}\Bigr) 
    ~=~ \Oh(\log T).
\end{align*}
Then,
\begin{align*}
    \E[\tilde{\Reg}(T)] & ~ \ge ~ C\cdot \E[\Reg(T)] - \Oh(\log T).
\end{align*}
The regret $\E[\Reg(T)]$ for a single-follower game where the follower has $nK$ types and the leader has $L=nK$ actions is at least $\Omega(\sqrt{nKT})$ by Theorem \ref{theorem:type-feedback-lower-bound-single-follower}. 
Thus, we obtain 
\begin{align*}
    \E[\tilde{\Reg}(T)]& ~ \ge ~ C \cdot \Omega(\sqrt{nKT}) - \Oh(\log T) ~ = ~ \Omega(\sqrt{nKT}), 
\end{align*}
which is also $\Omega(\sqrt{\min\{L, nK\} T})$ because $L = nK$. 

\section{Appendix for Section \ref{section:action_feedback}}

\subsection{$\Oh(K^n \sqrt{T} \log T)$-Regret Algorithm and the Proof of Theorem \ref{theorem:action-feedback-K^n}}
\label{proof:action-feedback-K^n}

We show that the online learning problem for a Bayesian Stackelberg game with action feedback can be solved with $\Oh(K^n \sqrt{T} \log T)$ regret, by using a technique developed by \citet{bernasconi_optimal_2023}. 

\citet{bernasconi_optimal_2023} showed that the online learning problem for a linear program with unknown objective parameter can be reduced to a linear bandit problem.  
We first show that the Bayesian Stackelberg game (which is not a linear program as defined in Definition \ref{def:optimal-strategy}) can be reformulated as a linear program. Then, we use \cite{bernasconi_optimal_2023}'s reduction to reduce the linear program formulation of online Bayesian Stackelberg game to a linear bandit problem.  A difference between our work and \citet{bernasconi_optimal_2023} is that, while they consider an adversarial online learning setting, we consider a stochastic online learning setting. Directly applying \citet{bernasconi_optimal_2023}'s result will lead to an $\Otilde(K^{\frac{3n}{2}}\sqrt{T})$ regret bound.  Instead, we apply the OFUL algorithm for stochastic linear bandit \citep{yadkori-yasin-optimal-stochastic-linear-bandit} to obtain a better regret bound of $\Otilde(K^{n}\sqrt{T})$.  







\paragraph{Step 1: Reformulate Bayesian Stackelberg game as a linear program.}
First, we reformulate the Bayesian Stackelberg game optimization problem $\max_{x \in \Delta(\La)} U_{\bD}(x)$ (Definition~\ref{def:optimal-strategy}), which is a nonlinear program by definition, into a linear program.
Let variable $x$ represent a joint distribution over best-response function $W \in \A^{nK}$ and the leader's actions $\La$.  Specifically, 
\begin{align*}
    & x = ( x(W, \ell) )_{W \in \A^{nK}, \ell \in \La} \in \reals^{A^{nK}\times L}, 
    & \text{where} \sum_{W\in\A^{nK}, \ell \in L} x(W, \ell) = 1, ~ \text{ and } ~ x(W, \ell) \ge 0. 
\end{align*}
Alternatively, $x$ can be viewed as an $A^{nK} \times L$-dimensional matrix, where $W$ indexes the row and $\ell$ indexes the columns.  
We maximize the following objective (which is linear in $x$): 
\begin{align} \label{eq:linear-program-objective}
    \max_{x} ~ U(x) = \sum_{ W \in \A^{nK} } \sum_{ \ell \in \La } \sum_{ \btheta \in \Theta^n }  \bD( \btheta ) x(W, \ell) u(\ell, W(\btheta)),
\end{align}
subject to the Incentive Compatibility (IC) constraint, meaning that the followers' best-response actions are consistent with $W$: $\forall W \in \A^{nK}, \forall i \in [n], \forall \theta_i \in \Theta, \forall a_i \in \A$, 
\begin{align} \label{eq:linear-program-IC}
    \sum_{ \ell  \in \La}  x(W, \ell) \Big( v_i(\ell, w_i(\theta_i), \theta_i ) - v_i(\ell, a_i ,\theta_i ) \Big)\ge 0. 
\end{align}
\begin{lemma} \label{def:stackelberg-linear-program}
With known distribution $\bD$, the Bayesian Stackelberg game can be solved by the linear program (\ref{eq:linear-program-objective})(\ref{eq:linear-program-IC}) in the following sense: there exists a solution $x$ to (\ref{eq:linear-program-objective})(\ref{eq:linear-program-IC}) with only one non-zero row $x(W^*, \cdot)$, and this row $x(W^*, \cdot)\in \reals^L$ is a solution to $\max_{x\in\Delta(\La)}U_{\bD}(x)$. 
\end{lemma}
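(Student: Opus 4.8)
The plan is to prove the equivalence by an averaging-over-rows argument combined with the geometric picture of Section~\ref{section:br_region}. First I would record two easy facts about the program (\ref{eq:linear-program-objective})--(\ref{eq:linear-program-IC}): its feasible region is nonempty (take any $W\in\W$, any $x\in R(W)$, and put all probability mass on row $W$ distributed as $x$; the IC constraints for row $W$ are then exactly the defining inequalities of $R(W)$), and it is a compact polytope, so the linear objective attains a maximum at some feasible point $x^\circ$ with value $\mathrm{OPT}$.

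Next I would round $x^\circ$ to a single non-zero row without losing objective value. For a feasible $x$ and a row $W$, let $p_W=\sum_{\ell}x(W,\ell)\ge 0$, so $\sum_W p_W=1$; for $p_W>0$ set $\bar x_W(\ell)=x(W,\ell)/p_W\in\Delta(\La)$. Since (\ref{eq:linear-program-IC}) is homogeneous in $x(W,\cdot)$, each $\bar x_W$ satisfies IC for $W$. Writing $Q_W(y)=\sum_{\btheta\in\Theta^n}\bD(\btheta)\langle y, z_{W,\btheta}\rangle$, the objective decomposes as the convex combination $U(x)=\sum_{W:\,p_W>0}p_W\,Q_W(\bar x_W)$, so some row $W'$ has $Q_{W'}(\bar x_{W'})\ge U(x)$, and the single-row solution placing mass $\bar x_{W'}$ on row $W'$ is feasible with objective $\ge U(x)$. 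Applying this to $x^\circ$, I may assume $x^\circ$ has a single non-zero row $W^*$; write $\bar x^*:=x^\circ(W^*,\cdot)\in\Delta(\La)$, which satisfies IC for $W^*$ and has $Q_{W^*}(\bar x^*)=\mathrm{OPT}$.

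It then remains to identify $\mathrm{OPT}$ with $\max_{x\in\Delta(\La)}U_{\bD}(x)$ and $\bar x^*$ with a maximizer. For the direction $\mathrm{OPT}\ge\max_x U_{\bD}(x)$: let $x^*$ be an optimal Stackelberg strategy (Definition~\ref{def:optimal-strategy}); since the best-response regions cover $\Delta(\La)$, $x^*\in R(\tilde W)$ for some $\tilde W$, so by Lemma~\ref{lem:leader-function-linear} $U_{\bD}(x^*)=Q_{\tilde W}(x^*)$, and the single-row solution with row $\tilde W$ equal to $x^*$ is LP-feasible (IC for $\tilde W$ holds on $R(\tilde W)$) with objective $Q_{\tilde W}(x^*)=U_{\bD}(x^*)$. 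For the reverse, since $\bar x^*$ satisfies IC for $W^*$, every action $w^*_i(\theta_i)$ is a best response for follower $i$ of type $\theta_i$ against $\bar x^*$; under the leader-favorable tie-breaking convention the realized leader utility $U_{\bD}(\bar x^*)$ is therefore at least its value when all followers play $W^*$, which is $Q_{W^*}(\bar x^*)=\mathrm{OPT}$. Chaining, $\max_x U_{\bD}(x)\ge U_{\bD}(\bar x^*)\ge\mathrm{OPT}\ge\max_x U_{\bD}(x)$, so equality holds throughout and $\bar x^*$ is an optimal Stackelberg strategy.

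The step I expect to be the main obstacle is the inequality $U_{\bD}(\bar x^*)\ge Q_{W^*}(\bar x^*)$: LP feasibility enforces only the \emph{weak} incentive inequalities (\ref{eq:linear-program-IC}), whereas $R(W)$ and $U_{\bD}$ are defined through the tie-broken best response, so I must rule out a feasible LP point that extracts more objective than its true leader utility. This is precisely where the leader-favorable tie-breaking assumption does the work; once that is invoked the argument is immediate, and everything else (feasibility, compactness, the convex-combination rounding, and linearity within a region via Lemma~\ref{lem:leader-function-linear}) is routine.
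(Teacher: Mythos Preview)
Your proposal is correct and follows essentially the same approach as the paper: reduce an optimal LP solution to a single non-zero row by averaging over rows (the paper does this by an iterative two-row merge, you do it by a one-shot ``pick the best row in the convex combination,'' which is the same idea), then match the LP optimum with the Stackelberg optimum by the two natural inequalities. Your treatment of the reverse direction is in fact more careful than the paper's, since you correctly identify that IC only guarantees $W^*$ is \emph{a} best response and invoke leader-favorable tie-breaking to conclude $U_{\bD}(\bar x^*)\ge Q_{W^*}(\bar x^*)$, whereas the paper simply asserts the followers' best-response function equals $W^*$.
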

\begin{proof}
First, we prove that the linear program (\ref{eq:linear-program-objective})(\ref{eq:linear-program-IC}) contains an optimal solution with only one non-zero row. 
Suppose an optimal solution $x$ has two non-zero rows $W_1$, $W_2$:
\[\sum_{\ell \in L} x(W_1, \ell) = p_1 > 0, \quad \sum_{\ell \in L} x(W_2, \ell) = p_2 > 0.\]
    Consider the conditional expected utility of these two rows. Because, when conditioned on row $i$, the conditional probability of playing action $\ell$ is $\frac{x(W, \ell)}{p_1}$, we have: 
    \begin{align*}
    u_1 = \frac{1}{p_1} \sum_{\ell \in \La} \sum_{\btheta \in \Theta^n} x(W_1, \ell) \bD(\btheta) u(\ell, W_1(\btheta)), \\ 
    u_2 = \frac{1}{p_2} \sum_{\ell \in \La} \sum_{\btheta \in \Theta^n} x(W_2, \ell) \bD(\btheta) u(\ell, W_1(\btheta)).
    \end{align*}

    Without loss of generality, assume $u_1 \geq u_2$. We construct a new solution $x'$ by transferring probability mass from row $W_2$ to row $W_1$. Specifically, $x'$ is defined as follows: 
\begin{align*}
    & x'(W_1, \ell) = \frac{p_1 + p_2}{p_1} x(W_1, \ell), && \forall \ell \in \La. \\
    & x'(W_2, \ell) = 0, && \forall \ell \in \La, \\
    & x'(W_j, \ell) = x(W_j, \ell), && \forall \text{ other $W_j$}, ~ \forall \ell \in \La.
\end{align*}
It is straightforward to verify that $x'$ satisfies the IC constraint.  
Now, we show that the utility of $x'$ is weakly greater than the utility of $x$. 
\begin{align*}
    U(x') & =  \sum_{\ell \in \La} \sum_{\btheta \in \Theta^n} 
    \frac{p_1+p_2}{p_1} x(W_1, \ell) \bD(\btheta) u(\ell, W_1(\btheta)) \\
    & \quad + \text{utility from rows other than } \{W_1, W_2\} \\
    & =  (p_1+p_2) u_1 
    + \text{utility from rows other than } \{W_1, W_2\} \\
    & \ge p_1 u_1 + p_2 u_2 
    + \text{utility from rows other than } \{W_1, W_2\} \\
    & = U(x).
\end{align*}
Note that the $W_2$ row of $x'$ has become $0$. We can apply this construction iteratively until only one row remains non-zero, without decreasing utility, thus obtaining an optimal solution with only one non-zero row.

Let $x^*$ be an optimal solution to the linear program (\ref{eq:linear-program-objective})(\ref{eq:linear-program-IC}) with only one non-zero row $W^*$. Let $x^*_{BS} = \max_{x\in\Delta(\La)}U_{\bD}(x)$ be an optimal solution for the Bayesian Stackelberg game.
We prove that $U(x^*) = U_{\bD}(x^*_{BS})$. 
%

First, we prove $U_{\bD}(x^*_{BS}) \le U(x^*)$. 
Let $W^*_{BS}$ be the best-response function corresponding to $x^*_{BS}$, i.e., $x_{BS}^* \in R(W^*_{BS})$. We construct a feasible solution $x$ to the linear program (\ref{eq:linear-program-objective})(\ref{eq:linear-program-IC}) by setting the row indexed by $W^*_{BS}$ to $x^*_{BS}$ and assigning zero values to all other rows. By definition, $x$ satisfies the IC constraint, so it is a feasible solution. Moreover, $x^*_{BS}$ and $x$ achieve the same objective value $U_{\bD}(x^*_{BS}) = U(x)$.  By definition, $U(x)$ is weakly less than the optimal objective value $U(x^*)$ of the linear program, so $U_{\bD}(x^*_{BS}) \le U(x^*)$. 

Then, we prove $U(x^*) \leq U_{\bD}(x^*_{BS})$.
Suppose the leader uses the strategy defined by the non-zero row of $x^*$, which is $x^*(W^*, \cdot) \in \Delta(\La)$. By the IC constraint of the linear program, the best-response function of the followers is equal to $W^*$, so the expected utility of the leader is exactly equal to $U(x^*)$, which is $\le U_{\bD}(x^*_{BS})$ because $x^*_{BS}$ is an optimal solution for the Bayesian Stackelberg game. 
\end{proof}

\paragraph{Step 2: Reduce online Bayesian Stackelberg game to a linear bandit problem.}
Based on the linear program formulation (\ref{eq:linear-program-objective})(\ref{eq:linear-program-IC}), we then reduce the online Bayesian Stackelberg game problem to a linear bandit problem, using the technique in \cite{bernasconi_optimal_2023}. 
Let $\X \subseteq \Delta( \A^{nK} \times \La )$ be the set of feasible solutions to the linear program (\ref{eq:linear-program-objective})(\ref{eq:linear-program-IC}).
We define the loss of a strategy $x \in \X$ when the follower types are $\btheta \in [K]^n$ as: 
\begin{align*}
    L_{ \btheta }(x) = - \sum_{ W \in \A^{nK} } \sum_{ \ell \in \La } x(W, \ell) u( \ell,  W(\btheta) ).
\end{align*}
    
We define a linear map $\phi : \X \to \mathbb{R}^{ K^n }$ that maps a strategy $x \in \X$ to a vector in $\mathbb{R}^{K^n}$, representing the loss of the strategy for each type profile: 
\begin{align*}
    \phi(x) &= \begin{pmatrix}
    L_{ \btheta_1 }( x ) \\ 
    \vdots \\
    L_{ \btheta_{K^n} }(x)
    \end{pmatrix} \in \reals^{K^n}. 
\end{align*} 
Its inverse, $\phi^{\dagger} : \mathbb{R}^{K^n} \to \X$ maps a loss vector back to a strategy.  Let $\co\, \phi(\X)$ denote the convex hull of the image set of $\phi$. 

Let $\fr$ be a stochastic linear bandit algorithm with decision space $\co\, \phi(\X) \subseteq \reals^{K^n}$.
In particular, we let $\fr$ be the OFUL algorithm \citep{yadkori-yasin-optimal-stochastic-linear-bandit}. 
At each round, $\fr$ outputs a strategy $z^t \in \co\, \phi(\X)$, and we invoke a Carathédory oracle to decompose $z^t$ into $K^n + 1$ elements from $\phi(\X)$, forming a convex combination.\footnote{The Carathédory oracle is based on the well-known \textit{Carathédory Theorem}.}  We then sample one of the elements $z_j^t$, and apply the inverse map $\phi^\dagger$ to obtain a strategy $x^t \in \X$ for the leader.  
After playing strategy $x^t$, we observe the utility $u^t = u(\ell^t, \bm a^t)$ and feed the utility feedback to $\fr$. 


\begin{algorithm}[h]
\SetKwInOut{Input}{Input}
\DontPrintSemicolon
\LinesNumbered
\caption{Linear Bandit Algorithm for Bayesian Stackelberg Games} 
\label{alg:stackelberg-linear-bandit}
\Input{A linear bandit algorithm $\fr$ over decision space $\co \, \phi(\X)$, where $\fr.\text{RECOMMEND()}$ returns an element in $\co \, \phi(\X)$, and $\fr.\text{OBSERVELOSS}$ takes the loss feedback. }
\For{each round $t$}{
    Use $\fr.\text{RECOMMEND()}$ to obtain $z^t \in \co ~ \phi(\X) \subseteq \reals^{K^n}$.\;
    Call a Carathéodory oracle with input $(z^t, \phi(\X))$, which returns $K^n+1$ elements $\{z^t_i, \lambda^t_i\}_{i \in [K^n+1]}$ such that: 
    \[
    z^t = \sum_{i=1}^{K^n+1} \lambda^t_i z^t_i, \quad \text{where } \sum_{i=1}^{K^n+1} \lambda^t_i = 1.
    \] \\
    Draw an index $j \in \{1, \ldots, K^n+1\}$ with probabilities $\lambda^t_j$.\;
    Compute $x^t \leftarrow \phi^\dagger(z^t_j)$.  Note that $x^t \in \X = \Delta(\A^{nK}\times \La)$ is a matrix.\;
    Play $x^t$ in the following sense: sample a row $W\in \A^{nK}$ with probability $p(W) = \sum_{\ell \in \La}x^t(W, \ell)$, then play the mixed strategy $x^t(W, \cdot) / p(W) \in \Delta(\La)$.\;
    Observe the realized utility $u^t = u(\ell^t, \bm a^t)$.\;
    Feed the loss to $\fr$ by calling $\fr.\text{OBSERVELOSS}(- u^t)$.\;
}

We let $\fr$ be the OFUL algorithm \citep{yadkori-yasin-optimal-stochastic-linear-bandit}. 
\end{algorithm}

\begin{theorem}
The expected regret of Algorithm \ref{alg:stackelberg-linear-bandit} is $O(K^n\sqrt{T} \log T)$. 
\end{theorem}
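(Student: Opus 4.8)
The plan is to recognize that Algorithm~\ref{alg:stackelberg-linear-bandit} is nothing but the OFUL algorithm run on a bona fide $K^n$-dimensional stochastic linear bandit whose unknown parameter is the type distribution, and then to invoke OFUL's regret guarantee with ambient dimension $d=K^n$. Concretely, write $p=(\bD(\btheta))_{\btheta\in\Theta^n}\in\Delta(\Theta^n)\subseteq\reals^{K^n}$; this is the hidden parameter. For every $x\in\X$ one has, by the definitions of $\phi$ and $L_{\btheta}$, that $\E_{\btheta\sim\bD}[L_{\btheta}(x)]=\inner{p}{\phi(x)}=-U(x)$, where $U$ is the linear objective~(\ref{eq:linear-program-objective}). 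Since $\inner{p}{\cdot}$ is linear, its minimum over $\co\,\phi(\X)$ is attained on $\phi(\X)$, so $\min_{z\in\co\,\phi(\X)}\inner{p}{z}=-\max_{x\in\X}U(x)$, which by Lemma~\ref{def:stackelberg-linear-program} equals $-\max_{x\in\Delta(\La)}U_{\bD}(x)$. Thus the optimum of the linear bandit coincides with the offline optimum of the Stackelberg game.

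Next I would verify that the loss handed to $\fr$ each round is an unbiased, bounded observation of the linear loss $\inner{p}{\cdot}$ evaluated at the \emph{recommended} point $z^t$ (not at the sampled vertex). On the one hand, the Carathéodory decomposition together with drawing index $j$ with probability $\lambda^t_j$ gives $\E[\phi(x^t)\mid z^t]=\sum_i\lambda^t_i z^t_i=z^t$, using $\phi(x^t)=\phi(\phi^{\dagger}(z^t_j))=z^t_j$. On the other hand, conditioned on the matrix $x^t$, the ``play $x^t$'' step samples a row $W$ with probability $p(W)=\sum_{\ell}x^t(W,\ell)$ and plays $x^t(W,\cdot)/p(W)$; by the IC constraints~(\ref{eq:linear-program-IC}) this normalized mixed strategy lies in the best-response region $R(W)$, so every follower with type $\theta^t_i$ best-responds with $w_i(\theta^t_i)$ and the realized joint action is $W(\btheta^t)$. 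Averaging the realized utility $u^t=u(\ell^t,W(\btheta^t))$ over $\ell^t$, over $W$, and over $\btheta^t\sim\bD$ telescopes to $\E[u^t\mid x^t]=\sum_{W,\ell}x^t(W,\ell)\sum_{\btheta}\bD(\btheta)u(\ell,W(\btheta))=U(x^t)=-\inner{p}{\phi(x^t)}$. Combining the two identities yields $\E[-u^t\mid z^t]=\inner{p}{z^t}$, and the noise $-u^t-\inner{p}{z^t}$ is bounded in $[-2,2]$ (as $u^t\in[0,1]$ and $\inner{p}{z^t}\in[-1,0]$), hence conditionally $\Oh(1)$-sub-Gaussian given the history. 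So Algorithm~\ref{alg:stackelberg-linear-bandit} feeds $\fr$ exactly the data of a stochastic linear bandit on decision set $\co\,\phi(\X)$ with parameter $p$.

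Putting the pieces together, by linearity of expectation the expected Stackelberg regret equals the expected pseudo-regret of $\fr$: $\E[\Reg(T)]=\sum_{t=1}^{T}\E\!\big[\inner{p}{z^t}-\min_{z\in\co\,\phi(\X)}\inner{p}{z}\big]$, since the leader's expected utility in round $t$ conditioned on $z^t$ is $\E[u^t\mid z^t]=-\inner{p}{z^t}$ and $\max_{x\in\Delta(\La)}U_{\bD}(x)=-\min_z\inner{p}{z}$. The decision set is bounded, $\|z\|_2\le\|\phi(x)\|_2\le\sqrt{K^n}$ (each of the $K^n$ coordinates of $\phi(x)$ lies in $[-1,0]$); the parameter satisfies $\|p\|_2\le\|p\|_1=1$; the instantaneous pseudo-regret is at most $1$; and the noise is $\Oh(1)$-sub-Gaussian. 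Instantiating the OFUL regret bound \citep{yadkori-yasin-optimal-stochastic-linear-bandit} with $d=K^n$, regularizer $\lambda=1$ and confidence $\delta=1/T$ gives, with probability at least $1-1/T$, $\sum_t\big(\inner{p}{z^t}-\min_z\inner{p}{z}\big)=\Oh(d\sqrt{T}\log T)=\Oh(K^n\sqrt{T}\log T)$, where the $\sqrt{K^n}$ norm bound only enters inside logarithms. On the complementary event the per-round pseudo-regret is at most $1$, contributing at most $T\cdot\frac1T=1$ in expectation, and $\E[\Reg(T)]=\Oh(K^n\sqrt{T}\log T)$ follows.

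The main obstacle is the middle step: checking that the elaborate double sampling in Algorithm~\ref{alg:stackelberg-linear-bandit} --- Carathéodory decomposition and index sampling, then row sampling, then the mixed-strategy play together with the followers' best responses --- really hands OFUL an unbiased, bounded estimate of the loss evaluated at the recommended point $z^t$ rather than at the random vertex $z^t_j$. This rests on the IC constraints forcing the prescribed best-response behavior inside each region $R(W)$ and on a careful telescoping of nested conditional expectations. A secondary point to be careful about is that OFUL's regret scales linearly (not as a square root) in the ambient dimension $d=K^n$; this linear dependence, combined with the $\sqrt{K^n}$-bounded decision set and the unit-norm parameter keeping every other factor logarithmic, is exactly what produces the final $\Oh(K^n\sqrt{T}\log T)$ bound.
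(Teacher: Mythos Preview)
Your proposal is correct and follows essentially the same route as the paper: identify the unknown parameter as $p=\bD\in\reals^{K^n}$, show the Stackelberg regret equals the linear-bandit pseudo-regret over $\co\,\phi(\X)$, and invoke OFUL with $d=K^n$, $\lambda=1$, $\delta=1/T$. The only difference is that the paper dispatches the reduction step (Carath\'eodory sampling plus IC-driven best responses yielding unbiased feedback at $z^t$) by citing Theorem~3.1 of \citet{bernasconi_optimal_2023}, whereas you re-derive it directly via the nested conditional-expectation argument; your self-contained version is a valid substitute for that citation.
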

\begin{proof}
    Because $x^t \in \X = \Delta(\A^{nK}\times \La)$ is a feasible solutions to the linear program (\ref{eq:linear-program-objective})(\ref{eq:linear-program-IC}), it satisfies the IC constraint. So, when the leader plays $x^t(W, \cdot) / p(W) \in \Delta(\La)$, the followers (with types $\btheta$) will best respond according to the function $W(\btheta)$. Thus, the leader's expected utility at round $t$ is
    \begin{align*}
        \sum_{W\in A^{nK}} p(W) \sum_{\ell \in \La} \frac{x^t(W, \ell)}{p(W)} \sum_{\btheta \in [K]^n} \bD(\btheta) u(\ell, W(\btheta)) = U(x^t) = - \E_{\btheta \sim \bD}[L_{\btheta}(x^t)]. 
    \end{align*}    
    Then, the regret of Algorithm \ref{alg:stackelberg-linear-bandit} in $T$ rounds can be expressed as
    \begin{align*}
        \Reg(T) ~ = ~ \sum_{t=1}^T \Big( \E_{\btheta \sim \bD}[ L_{\btheta}(x^t)] - \E_{\btheta\sim\bD} [L_{\btheta}(x^*)]\Big),
    \end{align*}
    where $x^*$ is the optimal strategy in $\X$, which minimizes the expected loss (maximizes expected utility).
    Let $\Reg_{\fr, \co\,\phi(\X)}(T)$ be the expected regret of the linear bandit algorithm $\fr$ on decision space $\co\, \phi(\X)$ in $T$ rounds.
    According to the Theorem 3.1 of \cite{bernasconi_optimal_2023}, 
    \[ \Reg(T) \leq \Reg_{\fr, \co\, \phi(\X)}(T).\]
    We let $\fr$ be the OFUL algorithm \citep{yadkori-yasin-optimal-stochastic-linear-bandit}. 
    For any $z \in \co\, \phi(\X) \subseteq \reals^{K^n}$, the stochastic loss of $z$ can be expressed as $L^t = \langle z, \bD \rangle + \eta^t$, with $|L^t| \le 1$, $\|\bD\|_2 \le \|\bD\|_1 = 1$, and $\eta^t$ being a bounded zero-mean noise. 
    Then, from \citet{yadkori-yasin-optimal-stochastic-linear-bandit}'s Theorem 3, we have with probability at least $1 - \delta$,
    \begin{align*}
    \Reg_{\fr, \co\, \phi(\X)}(T) &\leq 4 \sqrt{ T K^n \log\left( \lambda + \frac{T L}{K^n} \right) } \cdot \left( \lambda^{1/2} + \sqrt{ 2 \log\left( \frac{1}{\delta} \right) + K^n \log\left( 1 + \frac{T L}{ \lambda K^n } \right) } \right) 
\end{align*}
where $\lambda$ is a tunable parameter in the OFUL algorithm. By setting $\lambda = 1$ and $\delta = \frac{1}{T}$, we obtain
\[\E[\Reg(T)] ~ \leq ~ (1 - \delta)\cdot O( K^n \sqrt{T} \log T )  + \delta\cdot T ~ = ~ \Oh( K^n \sqrt{T} \log T ).\]
\end{proof}

\subsection{Proof of Lemma~ \ref{lem:UCB-concentration} } 
\label{proof:UCB-concentration}

We can express the leader's utility function as $$u(x, \bm a) = \sum_{\ell \in \La} x(\ell) u(\ell, \bm a) = \inner{x}{ u_{ \bm a } }$$ where vector $u_{ \bm a } = (u(\ell, \bm a))_{\ell \in \La} \in \mathbb{R}^{L}$. Note that $u(x, \bm a)$ is a linear function of $u_{\bm a}$. 
Consequently, the expected utility of a strategy $x \in R(W)$ on the true distribution $\bD$ is given by
\[U(x, R(W)) = \E_{ \bm a \sim \Pb( \cdot | R(W) )  } [ \inner{x}{ u_{\bm a}} ].\]
Given samples $\bm a^{1}, ..., \bm a^N$, we can compute $u_{ \bm a^{1} }, ..., u_{ \bm a^N }$ because we know the utility function. By Lemma \ref{lem:linear-function-Pdim}, the pseudo-dimension of the family of linear functions $\{ \inner{x}{\cdot} ~ | ~ x \in R(W) \in \mathbb{R}^{L} \}$ is $L$. Applying Theorem~\ref{thm:pseudo-dimension-sample-complexity}, with $N$ samples, we have
\begin{align*}
    \Pr\left[ 
        \exists x \in R(W), ~ \big| U(x, R(W)) - \hat{U}_N(x, R(W)) \big| 
        > \sqrt{\frac{2L \log 3N}{N}} + \sqrt{\frac{\log\frac{1}{\delta}}{2N}} ~ \right] \le \delta.
\end{align*}
Let $\delta = \frac{1}{T^4}$. Taking a union bound over all $N \in \{ 1, ..., T \}$ and all $W \in \W$, we obtain 
\begin{align*}
    & \Pr\bigg[ \exists W \in \mathcal{W}, \exists N \in [T], \exists x \in R(W), ~ 
    \big| U(x, R(W)) - \hat{U}_N(x, R(W)) \big| > \sqrt{\frac{2L \log(3N)}{N}} + \sqrt{\frac{\log T^4}{2N}} ~ \bigg] \\
    & \quad \le |\mathcal{W}| T \delta = \frac{|\mathcal{W}| T}{T^4} \le \frac{1}{T^2}
\end{align*}
(assuming $T \ge |\W|$). 
Thus, with probability at least $1 - \frac{1}{T^2}$, for every $W \in \W$, $N \in [T]$, and $x \in R(W)$, we have $$\big|\, U(x, R(W)) - \hat U_N(x, R(W)) \, \big| ~ \le ~  \sqrt{ \frac{ 4(L + 1)\log(3T) }{t}  }$$ using the inequality $\sqrt{a} + \sqrt{b} \leq \sqrt{2(a + b)}$.

\subsection{Proof of Theorem~\ref{theorem:action-feedback-ucb-regret} }
\label{proof:action-feedback-ucb-regret}
By Lemma~\ref{lem:UCB-concentration}, the event 
\[
C = \left[ \forall W \in \mathcal{W}, \forall N \in [T], \forall x \in R(W), \ \left| U(x, R(W)) - \hat{U}_N(x, R(W)) \right| \le \sqrt{\frac{4(L+1) \log(3T)}{N}} ~ \right]
\]
happens with probability at least $1 - \frac{1}{T^2}$.
Suppose $C$ happens. The regret at round $t$ is given by 
\[
r(t) = U(x^*, R(W^*) ) - U(x^t, R(W^t)).
\]
For any strategy $x \in R(W)$, we define the upper confidence bound of its utility as
\[\ucb^t(x) = \hat U_{N^t(W)}(x, R(W)) + \sqrt{\frac{4(L+1) \log(3T)}{N^t(W)}}.\]
Since $C$ holds, it follows that
\[U(x^*, R(W^*)) \leq  \ucb^t(x^*).\] 
Because the UCB algorithm chooses the strategy with the highest upper confidence bound at round $t$, we have $\ucb^t(x^*) \leq \ucb^t(x^t)$. 
Thus, 
\begin{align*}
    r(t) &\leq \ucb^{t}(x^*) - U(x^t, R(W^t))  \\
    &\leq \ucb^t(x^t) - U(x^t, R(W^t))  \\
    &= \hat{U}_{N^t(W)}(x^t, R(W^t)) - U(x^t, R(W^t)) + \sqrt{\frac{4(L+1) \log(3T)}{N^t(W^t)}}  \\
    &\leq 2 \sqrt{\frac{4(L+1) \log(3T)}{N^t(W^t)}}. 
\end{align*}
The total regret is at most
\begin{align*}
    \Reg(T) & ~ = ~ \sum_{t=1}^T r(t) \\
    & ~ \le ~ 2 \sum_{t=1}^T \sqrt{\frac{4(L+1) \log(3T)}{N^t(W^t)}} \\
    & ~ = ~ 2 \sum_{W\in \W} \sum_{m=1}^{N^T(W)} \sqrt{\frac{4(L+1) \log(3T)}{m}} \\
    & ~ \le ~ 8 \sum_{W\in \W} \sqrt{N^T(W) \cdot (L+1) \cdot \log(3T)} 
\end{align*}
where we applied the inequality $\sum_{m=1}^N \sqrt{\frac{1}{m}} \le 2 \sqrt{N}$. 
By Jensen's inequality, 
\begin{align*}
\frac{1}{|\W|} \sum_{W\in\W} \sqrt{ N^T(W) }
\;\leq\; \sqrt{  \frac{1}{|\W|} \sum_{W\in \W} N^T(W) }
\; = \; \sqrt{ \frac{1}{|\W|}T}. 
\end{align*}
Thus, \begin{align*}
    \Reg(T) & ~ \le ~ 8 \sum_{W\in \W} \sqrt{|\W| T} \cdot \sqrt{(L+1) \log(3T)} \\
    & ~ = ~ \Oh\Big( \sqrt{ |\W| L \cdot T \log T}\Big) \\
    & ~ = ~ \Oh\Big( \sqrt{ n^L K^L A^{2L} L \cdot T \log T}\Big)
\end{align*}
where we used $|\W| = O(n^L K^L A^{2L})$ from Lemma \ref{lem:br-region-number}. 

Finally, considering the case where $C$ does not happen (which has probability at most $\frac{1}{T^2}$), 
\begin{align*}
    \E[\Reg(T)] ~ = ~ \big(1 - \frac{1}{T^2}\big)\Oh\Big( \sqrt{ n^L K^L A^{2L} L \cdot T \log T}\Big) + \frac{1}{T^2} \cdot T ~ \leq ~ \Oh\Big( \sqrt{ n^L K^L A^{2L} L \cdot T \log T}\Big). 
\end{align*}

\section{Simulations}\label{app:simulations}
We empirically simulate and validate the results of the studied algorithms in both the type-feedback setting and action feedback setting. For the former, we consider the independent type setting to understand how much better, in practice, is Algorithm~\ref{alg:full-feedback} (customized for independent types) as opposed to the general purpose Algorithm \ref{alg:full-feedback_general} (works for general type distributions). We consider an $(L=2, K=6, A=2, n=2)$ instance and simulate the results in Figure~\ref{fig:type_feedback}. As expected, the specialized algorithm does indeed outperform the general one.

For the action feedback case, we empirically compare our UCB-based Algorithm \ref{alg:action-feedback-ucb-algorithm} with the linear bandit approach inspired by \citet{bernasconi_optimal_2023}, Algorithm \ref{alg:stackelberg-linear-bandit}. We especially consider the small $n, L$ regime where our theory does not provide any concrete guidance. Shown in Figure~\ref{fig:action_feedback}, we consider an $(L=2, K=6, A=2, n=2)$ instance and observe the advantage of the UCB-based algorithm over the linear bandit one. 

\begin{figure}[ht]
  \centering
  \begin{minipage}[t]{0.48\textwidth}
    \centering
    \includegraphics[width=\linewidth]{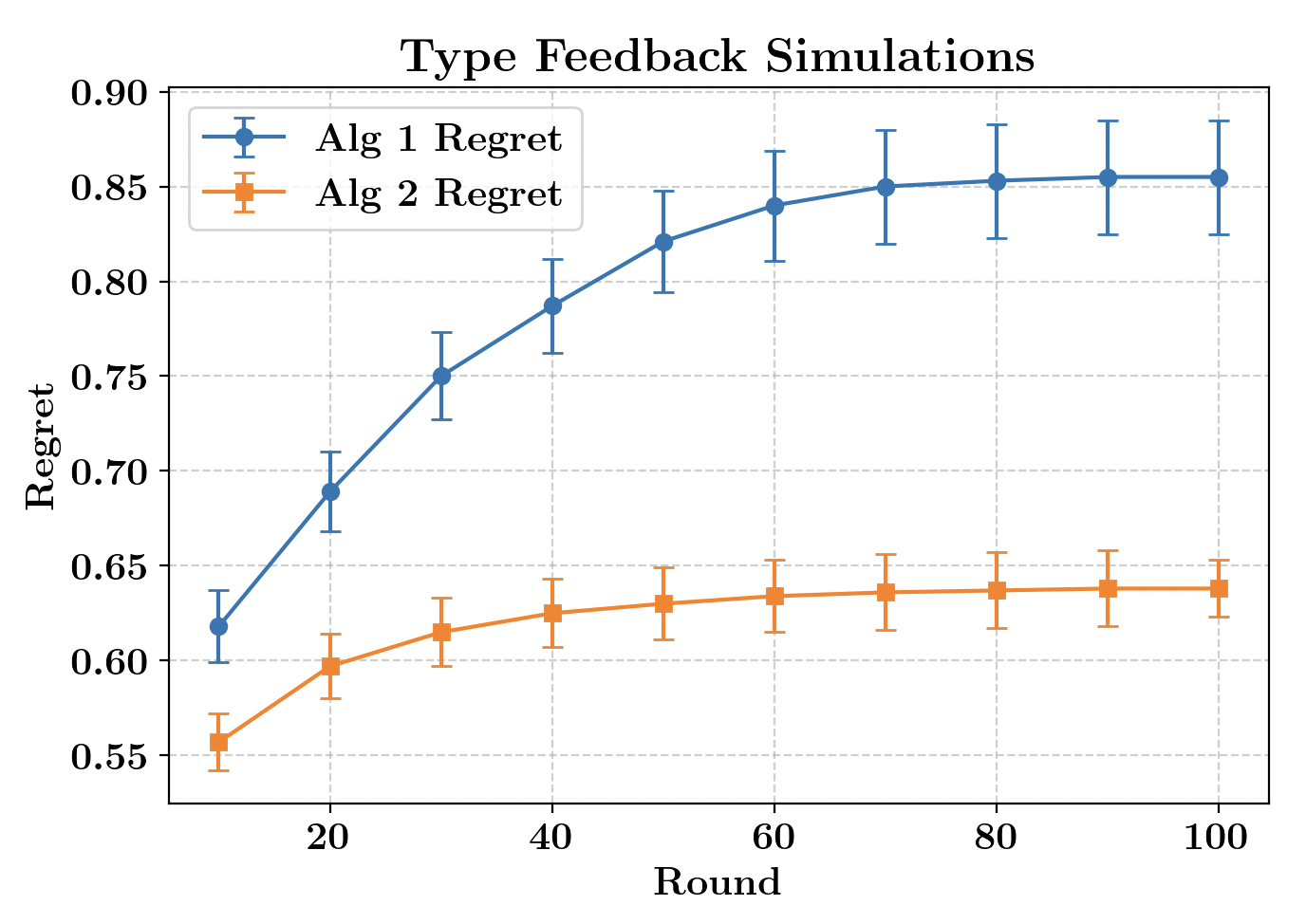}
    \vspace{-1.5em}
    \caption{Cumulative regret from the type-feedback based Algorithms \ref{alg:full-feedback_general} and \ref{alg:full-feedback} for an $(L=2, K=6, A=2, n=2)$ instance with independent types. We plot the average over 2000 simulations with 90\% confidence intervals.}
    \label{fig:type_feedback}
  \end{minipage}\hfill
  \begin{minipage}[t]{0.48\textwidth}
    \centering
    \includegraphics[width=\linewidth]{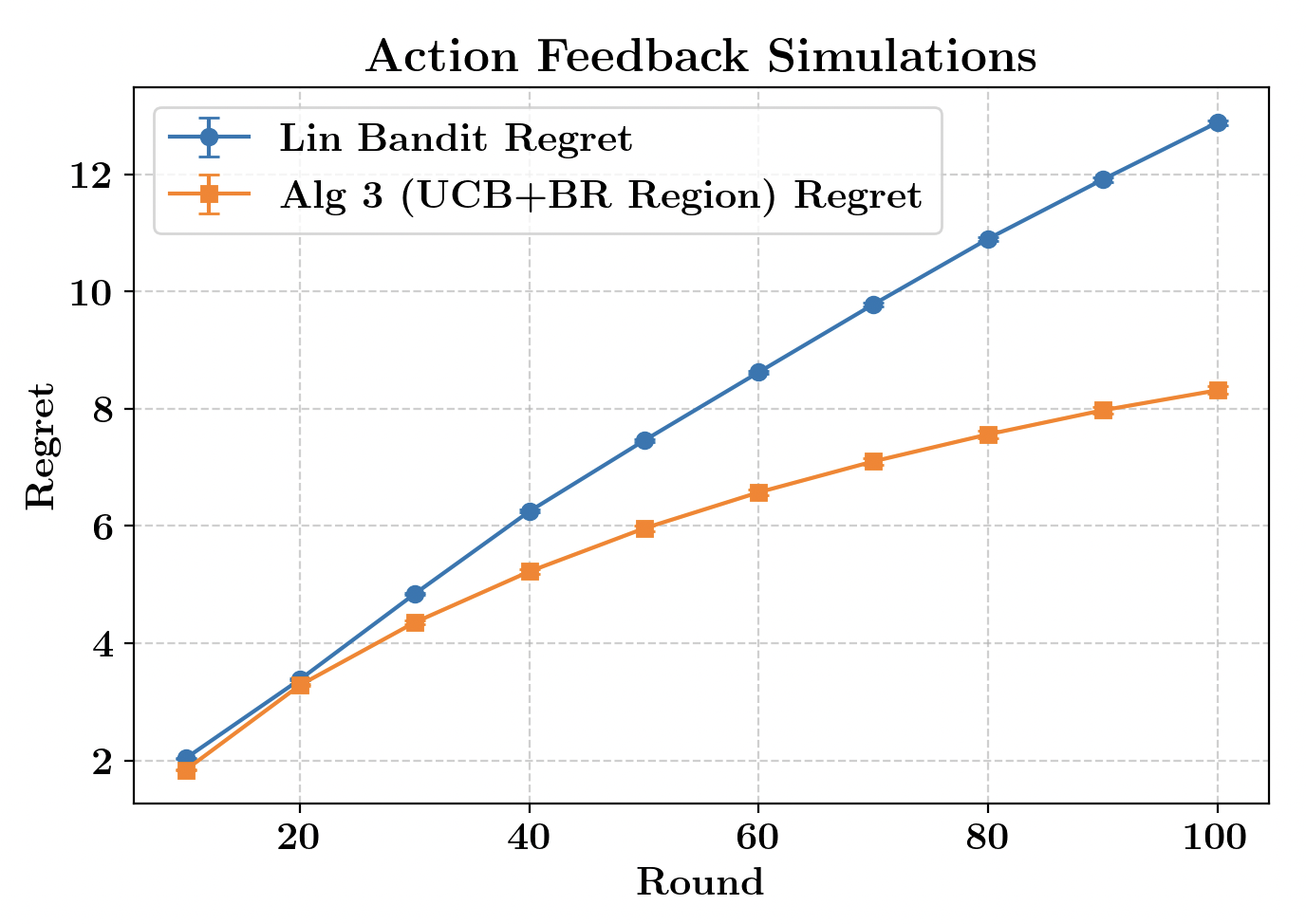}
    \vspace{-1.5em}
    \caption{Cumulative regret from Algorithm \ref{alg:stackelberg-linear-bandit} (the Linear-Bandit approach  inspired by \citet{bernasconi_optimal_2023}) and Algorithm \ref{alg:action-feedback-ucb-algorithm} for an $(L=2, K=6, A=2, n=2)$ instance. We plot the average over 2000 simulations with 90\% confidence intervals.}
    \label{fig:action_feedback}
  \end{minipage}
\end{figure}

\end{document}